\newcommand{\Opt}{\mathrm{Opt}}
\newcommand{\pE}{\mathop{\bf \widetilde{E}\/}}
\newcommand{\alphabet}{\Omega}
\newcommand{\alphasize}{q}
\newcommand{\x}{x}
\newcommand{\indet}[2]{1_{#1}(\x_{#2})}
\newcommand{\olindet}[2]{\ol{1_{#1}}(\x_{#2})}
\newcommand{\olbindet}[2]{\ol{1_{#1}}(\bx_{#2})}
\newcommand{\degmlin}{\deg_{\mathrm{mlin}}}
\newcommand{\multilineardegree}{multilinear-degree\xspace}
\newcommand{\cl}{\mathrm{cl}}
\newcommand{\vbls}{\mathrm{vbls}}
\newcommand{\cons}{\mathrm{cons}}
\newcommand{\edges}{\mathrm{edges}}
\newcommand{\Size}[1]{|\cons(#1)|}
\newcommand{\vblsp}{\mathrm{vbls}}
\newcommand{\CONSMALL}{\scalebox{.75}[1.0]{\textnormal{SMALL}}}
\newcommand{\maxarity}{K}
\newcommand{\Tcxty}{\tau}
\newcommand{\Tone}{\lambda}
\newcommand{\price}{\zeta}
\newcommand{\density}{\Delta}
\newcommand{\subfactor}{$\Tcxty$\mbox{-}\nolinebreak\hspace{0pt}subgraph\xspace}
\newcommand{\subfactors}{$\Tcxty$\mbox{-}\nolinebreak\hspace{0pt}subgraphs\xspace}
\newcommand{\subfactorplus}{$\Tcxty$\mbox{-}\nolinebreak\hspace{0pt}subgraph${}^+$\xspace}
\newcommand{\subfactorpluses}{$\Tcxty$\mbox{-}\nolinebreak\hspace{0pt}subgraphs${}^+$\xspace}
\newcommand{\smallish}{small\xspace}
\newcommand{\smallness}{smallness\xspace}
\newcommand{\prev}{\mathrm{pr}}
\newcommand{\PvZ}{\scalebox{.75}[1.0]{\textnormal{PvZ}}}
\newcommand{\PvZsubscript}{\scalebox{.5625}[.75]{\textnormal{PvZ}}}
\newcommand{\cmplx}{\calC}
\newcommand{\CSP}{\mathrm{CSP}}
\newcommand{\Lovasz}{Lov\'{a}sz\xspace}
\newcommand{\SDPOpt}{\mathrm{SDPOpt}}
\newcommand{\manuallabel}[2]{\def\@currentlabel{#2}\label{#1}}
\begin{document}

\title{Sum of squares lower bounds for refuting any CSP}

\author{Pravesh K. Kothari\thanks{Princeton University and IAS. \url{kothari@cs.princeton.edu}} \and Ryuhei Mori\thanks{Department of Mathematical and Computing Sciences, Tokyo Institute of Technology. \url{mori@is.titech.ac.jp}} \and  Ryan O'Donnell\thanks{Computer Science Department, Carnegie Mellon University. Supported by NSF grant CCF-1618679. \{odonnell,dwitmer\}@cs.cmu.edu} \and David Witmer${}^\ddag$}

\maketitle

\begin{abstract}
Let $P:\{0,1\}^k \to \{0,1\}$ be a nontrivial $k$-ary predicate. Consider a random instance of the constraint satisfaction problem $\CSP(P)$ on $n$ variables with $\Delta n$ constraints, each being $P$ applied to $k$ randomly chosen literals.  Provided the constraint density satisfies $\Delta \gg 1$, such an instance is unsatisfiable with high probability. The \emph{refutation} problem is to efficiently find a proof of unsatisfiability.


We show that whenever the predicate $P$ supports a $t$-\emph{wise uniform} probability distribution on its satisfying assignments, the sum of squares (SOS) algorithm of degree \mbox{$d = \Theta(\frac{n}{\Delta^{2/(t-1)} \log \Delta})$} (which runs in time $n^{O(d)}$) \emph{cannot} refute a random instance of $\CSP(P)$. In particular, the polynomial-time SOS algorithm requires $\wt{\Omega}(n^{(t+1)/2})$ constraints to refute random instances of CSP$(P)$ when $P$ supports a $t$-wise uniform distribution on its satisfying assignments. 
Together with recent work of Lee~et~al.~\cite{LRS15}, our result also implies that \emph{any} polynomial-size semidefinite programming relaxation for refutation requires at least $\wt{\Omega}(n^{(t+1)/2})$ constraints.

More generally, we consider the $\delta$-refutation problem, in which the goal is to certify that at most a $(1-\delta)$-fraction of constraints can be simultaneously satisfied.  We show that if $P$ is $\delta$-close to supporting a $t$-wise uniform distribution on satisfying assignments, then the degree-$\Theta(\frac{n}{\Delta^{2/(t-1)} \log \Delta})$ SOS algorithm cannot $(\delta+o(1))$-refute a random instance of CSP$(P)$.   
This is the first result to show a distinction between the degree SOS needs to solve the refutation problem and the degree it needs to solve the harder $\delta$-refutation problem.

Our results (which also extend with no change to CSPs over larger alphabets) subsume all previously known lower bounds for semialgebraic refutation of random CSPs.  For every constraint predicate~$P$, they give a three-way hardness tradeoff between the density of constraints, the SOS degree (hence running time), and the strength of the refutation. By recent algorithmic results of Allen~et~al.~\cite{AOW15} and Raghavendra~et~al.~\cite{RRS16}, this full three-way tradeoff is \emph{tight}, up to lower-order factors.

\end{abstract}

\setcounter{page}{0}
\thispagestyle{empty}
\newpage

\section{Introduction}  \label{sec:intro}

Where are the hard problems?

In computational complexity, we have a comprehensive theory of worst-case hardness, assuming $\PTIME \neq \NP$. The theory is particular rich in the context of constraint satisfaction problems (CSPs) --- optimization tasks that are both simple to state and powerfully expressive. (See, e.g., \cite{BJK05, Rag08}.) But despite our many successes in the theory of $\NP$-completeness and $\NP$-hardness-of-approximation, we know relatively little about the nature of hard instances.    For example, $3$-SAT is conjecturally hard to solve --- or even approximate to factor $\frac78 + \eps$ --- in $2^{o(n)}$ time.  But what do hard(-seeming) instances look like?  How can we generate one?  These sorts of questions are a key part of understanding what makes  various algorithmic problems truly hard.  They are particularly important for CSPs, as these are nearly always the starting point for hardness reductions; the ability to find hard instances for CSPs yields the ability to find hard instances for many other algorithmic problems.

In some sense, a single instance can never be ``hard'' because its solution can always be hard-coded into an algorithm.  Thus it is natural to turn to \emph{random} instances, and the theory of average-case hardness.
Uniformly random instances of CSPs are a particularly simple and natural  source of hard(-seeming) instances.  Furthermore, they arise as the fundamental object of study in many disparate areas of research, including cryptography \cite{ABW10}, proof complexity \cite{BB02}, hardness of approximation \cite{Fei02}, learning theory \cite{DLS14}, SAT-solving \cite{SAT14}, statistical physics \cite{CLP02}, and combinatorics.
\subsection{Random CSPs} \label{sec:csps}
Let $\Omega$ be a finite alphabet and let $\calP$ be a collection of nontrivial predicates $\Omega^k \to \{0,1\}$.  An input~$\calI$ to the problem $\CSP(\calP)$ consists of $n$ variables $x_1, \dots, x_n$, along with a list $\calE$ of~$m$ constraints $(P, S)$, where $P$ is a predicate from $\calP$, and $S \in [n]^k$ is a scope of $k$ distinct variables.  We often think of the associated ``factor graph'': that is, the bipartite graph with $n$~``variable-vertices'', $m$~``constraint-vertices'' of degree~$k$, and edges defined by the scopes.

Given~$\calI$, the algorithmic task is to find an assignment to the variables so as to maximize the fraction of satisfied constraints, $\avg_{(P,S) \in \calE} P(x_{S_1}, \dots, x_{S_k})$.    We write $\Opt(\calI)$ for the maximum possible fraction, and say that $\calI$ is \emph{satisfiable} if $\Opt(\calI) = 1$.  For a fixed \emph{constraint density} $\Delta = \Delta(n) > 0$, a \emph{random} instance of $\CSP(\calP)$ is defined simply by choosing $m = \Delta n$ constraints uniformly at random: random scopes and random $P \in \calP$.

The most typical examples involve a binary alphabet $\Omega = \{0,1\}$, a fixed predicate $P : \{0,1\}^k \to \{0,1\}$, and $\calP = P^{\pm}$, where by $P^{\pm}$ we mean the collection of all~$2^k$ predicates obtained by letting~$P$ act on possibly-negated input bits (``literals'').  For example, if $P$ is the $k$-bit logical OR function, then $\CSP(P^\pm)$ is simply the $k$-SAT problem.  In this introductory section, we'll focus mainly on these kinds of CSPs.

For random CSPs, the constraint density $\Delta$ plays a critical role; naturally, the larger it is, the more likely $\calI$ is to be unsatisfiable.  For a fixed $\calP$, it is easy to show the existence of constants $\alpha_0 < \alpha_1$ such that when $\Delta < \alpha_0$, a random instance $\calI$ of $\CSP(P)$ is satisfiable with high probability (whp), and when $\Delta > \alpha_1$, $\calI$ is unsatisfiable whp. For most interesting $\calP$, it is conjectured that there is even a \emph{sharp threshold} $\alpha_0 = \alpha_1 = \alpha_c$.  (This has been proven for $k$-SAT with $k$ large enough~\cite{DSS15}.  See~\cite{CD09} for a characterization of those Boolean CSPs for which a sharp threshold is expected.)

For random instances with subcritical constraint density, $\Delta < \alpha_c$, the natural algorithmic task is to try to efficiently find satisfying assignments.  There have been quite a few theoretical and practical successes for this problem, for $\Delta$ quite large and even approaching~$\alpha_c$ \cite{Gab16, MPR16}.  On the other hand, for random instances with supercritical constraint density, $\Delta > \alpha_c$, the natural algorithmic task is to try to efficiently \emph{refute} them; i.e., produce a certificate of unsatisfiability.  For many CSPs, this task seems much harder, even heuristically.  For example, random $3$-SAT instances are unsatisfiable (whp) once $\Delta > 4.49$ \cite{DKM08}; however, even for $\Delta$ as large as~$n^{.49}$ there is no known algorithm that efficiently refutes random instances --- even heuristically/experimentally.  Thus the refutation task for random instances of CSPs with many constraints may be a source of simple-to-generate, yet hard-to-solve problems.

\subsection{The importance and utility of hardness assumptions for random CSPs}
In this section, we discuss the task of refuting random CSP instances and the importance of understanding the ``constraint density vs.\ running time vs.\ refutation strength tradeoff'' for \emph{all} predicate families~$\calP$.  To define our terms, a \emph{(weak) refutation algorithm} for $\CSP(\calP)$ is an algorithm that takes as input an instance~$\calI$ and either correctly outputs ``unsatisfiable'', or else outputs ``don't know''.  For a given density~$\Delta$ (larger than the critical density), we say the algorithm ``succeeds'' if it outputs ``unsatisfiable'' with high probability (over the choice of~$\calI$, and over its internal coins, if any).  More generally, we can consider refutation algorithms that always output a correct upper bound on~$\Opt(\calI)$; we call them \emph{$\delta$-refutation algorithms} if they output an upper bound of $1-\delta$ (or smaller) with high probability.  The case of $\delta = 1/m$, where $m = \Delta n$ is the number of constraints, corresponds to the simple weak refutation task described earlier (with an output of ``$1$'' corresponding to ``don't know'').  In general, we refer to~$\delta$ as the ``strength'' of the refutation.

For a wide variety of areas --- cryptography, learning theory, and approximation algorithms --- it is of significant utility to have concrete hardness assumptions concerning random CSPs.  Because uniformly random CSPs are very simply and concretely defined, they form an excellent basis for constructing other potentially hard problems by reduction.  An early concrete hypothesis comes from an influential paper of Feige~\cite{Fei02}:

\paragraph{Feige's R3SAT Hypothesis.} \emph{For every small $\delta > 0$ and for large enough constant~$\Delta$, there is no polynomial-time algorithm that succeeds in $\delta$-refuting random instances of $3$\textnormal{-SAT}.}\\

Feige's main motivation was hardness of approximation; e.g., he showed that the R3SAT Hypothesis implies stronger hardness of approximation results than were previously known for several problems (Balanced Bipartite Clique, Min-Bisection, Dense $k$-Subgraph, $2$-Catalog).  By reducing from these problems, several more new hardness of approximation results based on Feige's Hypothesis have been shown in a variety of domains \cite{BKP04, DFHS06, Bri08, AGT12}.  Feige \cite{Fei02} also related hardness of refuting $3$-SAT to hardness of refuting $3$-XOR.  The assumption that refuting $3$-XOR is hard has been used to prove new hardness results in subsequent work \cite{OWWZ14}.  Alekhnovich \cite{Ale03} further showed that certain average-case hardness assumptions for XOR imply additional hardness results, as well as the existence of secure public key cryptosystems.

In even earlier cryptography work, Goldreich~\cite{Gol00} proposed using the average-case hardness of random CSPs as the basis for candidate one-way functions.  Subsequent work (e.g., \cite{MST03}) suggested using similar functions as candidate pseudorandom generators (PRGs).  The advantage of this kind of construction is the extreme simplicity of computing the PRG: indeed, its output bits can be computed in $\mathsf{NC}^0$, constant parallel time.  Further work investigated variations and extensions of Goldreich's suggestion~\cite{ABW10, ABR12, AL16}; see Applebaum's survey~\cite{App13b} for many more details.  Of course, the security of these candidate cryptographic constructions depends heavily on the hardness of refuting random CSPs.  Applebaum, Ishai, and Kushilevitz \cite{AIK06} took a slightly different approach to showing that PRGs exist in $\mathsf{NC}^0$, instead basing their result on one of Alekhnovich's average case XOR hardness assumptions \cite{Ale03}.

Finally, a recent exciting sequence of works due to Daniely and coauthors~\cite{DLS13, DLS14, DSS14, Dan15} has linked hardness of random CSPs to hardness of learning.  By making concrete conjectures about the hardness of refuting random $\CSP(\calP)$ for various $\calP$ and for superpolynomial~$\Delta$, they obtained negative results for several longstanding problems in learning theory, such as learning DNFs and learning halfspaces with noise.

\subsection{Desiderata for hardness results} \label{sec:desire}
While Feige's R3SAT Hypothesis has proven useful in hardness of approximation, there are several important strengthenings of it that would lead to even further utility.  We discuss here four key desiderata for hardness results about random CSPs:

\begin{enumerate}
\bfseries \item Predicates other than SAT. \mdseries  The hardness of random $3$-SAT and $3$-XOR has been most extensively studied, but for applications it is quite important to consider other predicates.  For hardness of approximation, already Feige~\cite{Fei02} noted that he could prove stronger inapproximability for the $2$-Catalog problem assuming hardness of refuting random $k$-AND for large~$k$.  Subsequent work has used assumptions about the hardness of refuting CSPs with other predicates to prove additional worst-case hardness results~\cite{GL04, AAM+11, CMVZ12, BCMV12, RSW16}. Relatedly, Barak, Kindler, and Steurer~\cite{BKS13} have recently considered a generalization of Feige's Hypothesis to all Boolean predicates, in which the assumption is that the ``basic SDP'' provides the best $\delta$-refutation algorithm when $\Delta = O(1)$.   They also describe the relevance of predicates over larger alphabet sizes and with superconstant arity for problems such as the Sliding Scale Conjecture and Densest $k$-Subgraph.  Bhaskara et al.~\cite{BCGVZ12} prove an SOS lower bound for Densest $k$-Subgraph via a reduction from Tulsiani's SOS lower bound for random instances of  CSP$(P)$ with $P$ a $q$-ary linear code \cite{Tul09}.  A computational hardness assumption for refutation of this CSP would therefore give a hardness result for Densest $k$-Subgraph.

Regarding cryptographic applications, the potential security of Goldreich's candidate PRGs depends heavily on what predicates they are instantiated with.  Goldreich originally suggested a random predicate, with a slightly superconstant arity~$k$.  However  algorithmic attacks on random $\CSP(\calP)$ by Bogdanov and Qiao~\cite{BQ09} showed that predicates that are not at least ``$3$-wise uniform'' do not lead to secure PRGs with significant stretch.  Quite a few subsequent works have tried to analyze what properties of a predicate family~$\calP$ may --- or may not --- lead to secure PRGs~\cite{BQ09,ABR12, OW14, AL16}.

Regarding the approach of Daniely~et~al.\ to hardness of learning, there are close connections between the predicates for which random $\CSP(\calP)$ is assumed hard and the concept class for which one achieves hardness of learning.  For example, the earlier work~\cite{DLS14} assumed hardness of refuting random $\CSP(P^{\pm})$ for $P$ being (i)~the ``Huang predicate''~\cite{Hua13, Hua14}, (ii)~Majority, (iii)~a certain AND of~$8$ thresholds; it thereby deduced hardness of learning (i)~DNFs, (ii)~halfspaces with noise, (iii)~intersections of halfspaces.  Unfortunately, Allen et~al.~\cite{AOW15} gave efficient algorithms refuting all three hardness assumptions; fortunately, the results were mostly recovered in later works~\cite{DSS14,Dan15} assuming hardness of refuting random $k$-SAT and $k$-XOR.  Although these are more ``standard'' predicates, a careful inspection of~\cite{DSS14}'s hardness of learning DNF result shows that it essentially works by reduction from $\CSP(P^{\pm})$ where~$P$ is a ``tribes'' predicate. (It first shows hardness for this predicate by reduction from $k$-SAT.)  From these discussions, one can see the utility of understanding the hardness of random $\CSP(\calP)$ for as wide a variety of predicates~$\calP$ as possible.

\bfseries \item Superlinear number of constraints. \mdseries
Much of the prior work on hardness of refuting random CSPs (assumptions and evidence for it) has focused on the regime of $\Delta = O(1)$; i.e., random CSPs with $O(n)$ constraints.  However, it is quite important in a number of settings to have evidence of hardness even when the number of constraints is superlinear.  An obvious case of this arises in the application to security of Goldreich-style PRGs; here the number of constraints directly corresponds to the stretch of the PRG.  It's natural, then, to look for arbitrarily large polynomial stretch.  In particular, having $\mathsf{NC}^0$ PRGs with $m = n^{1 + \Omega(1)}$ stretch yields secure two-party communication with constant overhead~\cite{IKOS08}.  This motivates getting hardness of refuting random CSPs with $\Delta = n^{\Omega(1)}$.  As another example, the hardness of learning results in the work of Daniely~et~al.~\cite{DLS14, DSS14, Dan15} all require hardness of refuting random CSPs with $m = n^C$, for arbitrarily large~$C$.  In general, given a predicate family~$\calP$, it is interesting to try to determine the least~$\Delta$ for which refuting random $\CSP(\calP)$ instances at density~$\Delta$ becomes easy.

\bfseries
\item Stronger refutation. \mdseries
Most previous work on the hardness of refuting random CSPs has focused just on weak refutation (especially in the proof complexity community), or on {$\delta$-refutation} for arbitrarily small $\delta > 0$.  The latter framework is arguably more natural: as discussed in~\cite{Fei02}, seeking just weak refutation makes the problem less robust to the precise model of random instances, and requiring $\delta$-refutation for some $\delta > 0$ allows some more natural CSPs like $k$-XOR (where unsatisfiable instances are easy to refute) to be discussed.  In fact, it is natural and important to study $\delta$-refutation for \emph{all} values of~$\delta$.  As an example, given~$\calP$ it is easy to show that there is a large enough constant~$\Delta_0$ such that for any $\Delta \geq \Delta_0$ a random instance~$\calI$ of $\CSP(\calP)$ has $\Opt(\calI) \leq \mu_{\calP} + o(1)$, where $\mu_{\calP}$ is the probability a random assignment satisfies a random predicate $P \in \calP$.  Thus it is quite natural to ask for $\delta$-refutation for $\delta = 1- \mu_{\calP} - o(1)$; i.e., for an algorithm that certifies the \emph{true} value of $\Opt(\calI)$ up to~$o(1)$ (whp).  This is sometimes termed \emph{strong refutation}.  As an example, Barak and Moitra~\cite{BM16} show hardness of tensor completion based on hardness of strongly refuting random $3$-SAT with~$\Delta \ll n^{1/2}$.  In general, there is a very close connection between refutation algorithms for $\CSP(\calP)$ and approximation algorithms for $\CSP(\calP)$; e.g., hardness of $\delta$-refutation results for LP- and SDP-based proof systems can be viewed as saying that random instances are $1-\delta$ vs.\ $\mu_{\calP} + o(1)$ \emph{integrality gap} instances for $\CSP(\calP)$.

\bfseries
\item Hardness against superpolynomial time. \mdseries
Naturally, we would prefer to have evidence against superpolynomial-time refutation, or even subexponential-time refutation, of random $\CSP(\calP)$; for example, this would be desirable for cryptography applications.  This desire also fits in with the recent surge of work on hardness assuming the Exponential Time Hypothesis (ETH).  We already know of two works that use a strengthening of the ETH for random CSPs.    The first, due to Khot and Moshkovitz~\cite{KM16}, is a candidate hard Unique Game, based on the assumption that random instances of $\CSP(P^{\pm})$ require time $2^{\Omega(n)}$ to strongly refute, where $P$ is the $k$-ary ``Hadamard predicate''.  The second, due to Razenshteyn~et~al.~\cite{RSW16} proves hardness for the Weighted Low Rank Approximation problem assuming that refuting random $4$-SAT requires time $2^{\Omega(n)}$.  An even further interesting direction, in light of the work of Feige, Kim, and Ofek~\cite{FKO06}, is to find evidence against efficient \emph{nondeterministic} refutations of random CSPs.
\end{enumerate}

\noindent These discussions lead us to the following goal:
\begin{quotation}
\textbf{Goal:} \emph{For every predicate family~$\calP$, provide strong evidence for the hardness of refuting random instances of $\CSP(\calP)$, with the best possible tradeoff between number of constraints, refutation strength, and running time.}
\end{quotation}

The main theorem in this work, stated in Section~\ref{sec:our-result}, completely accomplishes this goal in the context of the Sum of Squares (SOS) method.  Before stating our results, we review this method, as well as prior results in the direction of the above goal.

\subsection{Prior results in proof complexity, and the SOS method}
Absent the ability to even prove $\PTIME \neq \NP$, the most natural way to get evidence of hardness for refuting random $\CSP(\calP)$ is to prove unconditional negative results for specific proof systems.  It's particularly natural to consider automatizable proof systems, as these correspond to efficient deterministic refutation algorithms.

Much of the work in this area has focused on random instances of $k$-SAT. A seminal early work of Chv\'{a}tal and Szemer\'{e}di~\cite{CS88} showed that Resolution refutations of random instances of $k$-SAT require exponential size when $\Delta$ is a sufficiently large constant.
Ben-Sasson and Wigderson~\cite{BSW01, Ben01} later strengthened this result to show that Resolution refutations require width $\Omega(\frac{n}{\Delta^{1/(k-2)+\eps}})$ for any $\eps > 0$.
Ben-Sasson and Impagliazzo and Alekhnovich and Razborov further extended these results to the Polynomial Calculus proof system~\cite{BI99, AR01a}; for example, the latter work showed that Polynomial Calculus refutations of random $k$-SAT instances with density~$\Delta$ require degree $\Omega(\frac{n}{\Delta^{2/(k-2)} \log \Delta})$.

On the other hand, much of the positive work on refuting random $k$-SAT has used spectral techniques and semialgebraic proof systems.  These latter proof systems are often automatizable using linear programming and semidefinite programming, and thereby have the advantage that they can naturally give stronger $\delta$-refutation algorithms.  As examples, Goerdt and Krivelevich~\cite{GK01} showed that spectral techniques (which can be captured by SDP hierarchies) enable refutation of random $k$-SAT with $m = n^{\lceil k/2 \rceil}$ constraints; Friedman and Goerdt~\cite{FG01} improved this to $m = n^{3/2+o(1)}$ in the case of random $3$-SAT.  One of the first lower bounds for random CSPs using SDP hierarchies was given by Buresh-Oppenheim~et~al.~\cite{BGH+03}; it showed that the \Lovasz--Schrijver$_+$ (LS$_+$) proof system cannot refute random instances of $k$-SAT with $k \geq 5$ and constant~$\Delta$.  Alekhnovich, Arora, and Tourlakis~\cite{AAT05} extended this result to random instances of $3$-SAT.

The strongest results along these lines involve the Sum of Squares (AKA Positivstellensatz or Lasserre) proof system.  This system, parameterized by a tuneable ``degree'' parameter~$d$, is known to be very powerful; e.g., it generalizes the degree-$d$ Sherali--Adams$_+$ (SA$_+$) and LS$_+$ proof systems.  In the context of CSP$(\calP)$ over domain~$\{0,1\}$, it is also (approximately) automatizable in $n^{O(d)}$ time using semidefinite programming.   As such, it has proven to be a very powerful positive tool in algorithm design, both for CSPs and for other tasks; in particular, it has been used to show that several conjectured hard instances for CSPs are actually easy~\cite{BBH+12, OZ13, KOTZ14}.  Finally, thanks to work of Lee, Raghavendra, and Steurer~\cite{LRS15}, it is known that constant-degree SOS approximates the optimum value of CSPs at least as well as \emph{any} polynomial-size family of SDP relaxations. See, e.g.,~\cite{OZ13, BS14, Lau09} for surveys concerning SOS.

Early on, Grigoriev~\cite{Gri01} showed that SOS of degree $\Omega(n)$ could not refute $k$-XOR instances on sufficiently good expanders.  Schoenebeck~\cite{Sch08} essentially rediscovered this proof and showed that it applied to random instances of $k$-SAT and $k$-XOR, specifically showing that SOS degree ~$\frac{n}{\Delta^{2/(k-2) - \eps}}$ is required to refute instances with density~$\Delta$.  Tulsiani~\cite{Tul09} extended this result to the alphabet-$q$ generalization of random $3$-XOR.

Much less was previously known about predicates other than $k$-SAT and $k$-XOR. Austrin and Mossel~\cite{AM08} established a connection between hardness of $\CSP(\calP)$ and pairwise-uniform distributions, showing inapproximability beyond the random-threshold subject to the Unique Games Conjecture. A  key work of Benabbas~et~al.~\cite{BGMT12} showed an unconditional analog of this result: random instances of $\CSP(P^{\pm})$ with sufficiently large constant constraint density require $\Omega(n)$ degree to refute in the SA$_+$ SDP hierarchy when $P$ is a predicate (over any alphabet) supporting a pairwise-uniform distribution on satisfying assignments.
O'Donnell and Witmer~\cite{OW14} extended these results by observing a density/degree tradeoff: they showed that if the predicate supports a $(t-1)$-wise uniform distribution, then the SA LP hierarchy at degree $n^{\Omega(\eps)}$ cannot refute random instances of $\CSP(P^{\pm})$ with $m = n^{t/2 - \eps}$ constraints.  They also showed the same thing for the SA${}_+$ SDP hierarchy, provided one can remove a carefully chosen~$o(m)$ constraints from the random instance.  Extending results of Tulsiani and Worah~\cite{TW13}, Mori and Witmer~\cite{MW16} showed this result for the SA$_+$ and LS$_+$ SDP hierarchies, for purely random instances. Finally, Barak, Chan, and Kothari~\cite{BCK15} recently extended the~\cite{BGMT12} result to the SOS system, though not for purely random instances:  they showed that for any Boolean predicate~$P$ supporting a pairwise-uniform distribution, if one chooses a random instance of $\CSP(P^{\pm})$ with large constant~$\Delta$ and then carefully removes a certain $o(n)$ constraints, then SOS needs degree~$\Omega(n)$ to refute the instance.

Beyond semialgebraic proof systems and hierarchies, even less is known about non-SAT, non-XOR predicates.  Feldman, Perkins, and Vempala \cite{FPV15} proved lower bounds for refutation of CSP$(P^\pm)$ using statistical algorithms when $P$ supports a $(t-1)$-wise uniform distribution.  Their results are incomparable to the above lower bounds for LP and SDP hierarchies: the class of statistical algorithms is quite general and includes any convex relaxation, but the \cite{FPV15} lower bounds are not strong enough to rule out refutation by polynomial-size SDP and LP relaxations.

\paragraph{Summary.} For the strongest semialgebraic proof system, SOS, our evidence of hardness for random CSPs from previous work was somewhat limited.  We did not know any hardness results for a superlinear number of constraints, except in the case of $k$-SAT/$k$-XOR and the alphabet-$q$ generalization of $3$-XOR.  We did not know any results that differentiated weak refutation from $\delta$-refutation.  Finally, the results known for refuting $\CSP(P^{\pm})$ with pairwise-uniform-supporting~$P$ did not hold for purely random instances.

\subsection{Our result} \label{sec:our-result}
We  essentially achieve the Goal described in Section~\ref{sec:desire} in the context of the powerful SOS hierarchy.  Specifically, for every predicate family~$\calP$, we provide a \emph{full three-way tradeoff between constraint density, SOS degree, and strength of refutation}.  Our lower bound subsumes all of the hardness results for semialgebraic proof systems mentioned in the previous section.  Furthermore, as we will describe,  known algorithmic work implies that our full three-way hardness tradeoff is tight, up to lower-order terms.

To state our result, we need a definition. For a predicate $P: \Omega^k \to \{0,1\}$ and an integer $1 < t \leq k$, we define $\delta_P(t)$ to be $P$'s distance from supporting a $t$-wise uniform distribution.  Formally,
\[
\delta_P(t) :=  \min_{\substack{\text{$\mu$  is a $t$-wise uniform distribution on $\Omega^k$,} \\ \text{$\sigma$ is a distribution supported on satisfying assignments for $P$}}} \dtv{\mu}{\sigma},
\]
where $\dtv{\cdot}{\cdot}$ denotes total variation distance.

We can now (slightly informally) state our main theorem in the context of Boolean predicates:
\begin{theorem} \label{thm:main-intro}
Let $P$ be a $k$-ary Boolean predicate and let $1 < t \leq k$.  Let $\calI$ be a random instance of $\CSP(P^{\pm})$ with $m = \Delta n$ constraints.  Then with high probability, degree-$\wt{\Omega}\left(\frac{n}{\Delta^{2/(t-1)}}\right)$ SOS fails to $(\delta_P(t)+o(1))$-refute $\calI$.
\end{theorem}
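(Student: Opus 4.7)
The plan is to prove the theorem by exhibiting, with high probability over the random instance $\calI$, a degree-$d$ pseudo-expectation $\pE$ on $\bn$ with $d = \wt{\Omega}(n/\Delta^{2/(t-1)})$ such that $\pE[\text{fraction of constraints satisfied by } \x]\ge 1-\delta_P(t)-o(1)$; this is exactly the SOS dual object whose existence rules out degree-$d$ SOS certification of any value $\le 1-\delta_P(t)-o(1)$. The starting ingredient is a distribution $\mu$ on $\bits^k$ that is $t$-wise uniform and within total variation distance $\delta_P(t)$ of some distribution $\sigma$ supported on $P^{-1}(1)$; such a $\mu$ exists by the definition of $\delta_P(t)$. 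Literal negations in $P^{\pm}$ are folded into $\mu$ on a per-constraint basis.

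The first step is to establish expansion of the random factor graph: with high probability, every set $T$ of at most $\Tcxty\cdot d$ constraints has the property that a $(1-o(1))$-fraction of its constraints each contain a ``unique'' variable, meaning a variable appearing in no other constraint of $T$. A direct first-moment calculation shows the bad event for a fixed $T$ has probability at most roughly $(|T|\Delta/n)^{(t-1)|T|/2}$, so a union bound over all small $T$ succeeds precisely when $|T|\le\wt{O}(n/\Delta^{2/(t-1)})$; this is where the exponent $2/(t-1)$ is forced.

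Next, I would define $\pE$ by a closure construction in the style of \cite{BCK15,OW14}. For any set of variables $S$ with $|S|\le d/2$, define $\Cl(S)$ by iteratively absorbing every constraint whose scope contributes fewer than $t$ new variables; the expansion step guarantees both that $|\Cl(S)|=O(|S|)$ and that the constraints with scope inside $\Cl(S)$ admit a linear order in which each next constraint contributes at least $t$ fresh variables. For a monomial supported on $S$, set $\pE$ of it equal to its expectation under the distribution on $\bits^{\Cl(S)}$ obtained by revealing the constraints one at a time in that order, each contributing its scope-marginal $\mu$ conditioned on the already-revealed variables; the $t$-wise uniformity of $\mu$ together with the ``at least $t$ fresh variables'' property make this conditional well-defined and independent of the chosen order. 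The same $t$-wise uniformity makes $\pE[\chi_\alpha]$ independent of the particular $S\supseteq\vbls(\alpha)$ used to compute it, so $\pE$ is globally well-defined.

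The main obstacle is verifying the PSD condition $\pE[f^2]\ge 0$ for every polynomial $f$ of degree $\le d/2$. The plan is to realize $\pE$ restricted to polynomials supported inside any fixed small set as the honest expectation under a genuine probability distribution, built by independently sampling each covering constraint's scope from $\mu$ and combining these samples along the forest-like structure of $\Cl(\vbls(f))$; the fresh-variable property of the closure is what ensures these local samples patch together consistently, so $\pE[f^2]$ is a true second moment and hence nonnegative. Finally, each constraint's $k$-variable marginal under $\pE$ equals $\mu$ (after accounting for literal signings), so the pseudo-expected satisfaction of each constraint is $\ge 1-\delta_P(t)$; averaging over the $m$ constraints and applying a Chernoff-type concentration on the random instance yields the $1-\delta_P(t)-o(1)$ satisfaction bound, completing the proof.
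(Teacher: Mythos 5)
Your construction of the pseudoexpectation is broadly aligned with the paper's (a closure-based local-distribution object built from a $t$-wise uniform $\mu$ close to $P^{-1}(1)$, with the $\delta$-refutation conclusion coming from the fact that most constraints see exactly the marginal $\mu$). But the PSD step — which you yourself flag as ``the main obstacle'' — is where the proposal breaks down, and it breaks down in exactly the way that makes this theorem hard.

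You write that you will ``realize $\pE$ restricted to polynomials supported inside any fixed small set as the honest expectation under a genuine probability distribution \dots so $\pE[f^2]$ is a true second moment and hence nonnegative.'' This only establishes positivity of $\pE[f^2]$ when $f$ depends on a small set of variables. The SOS PSD condition quantifies over \emph{all} polynomials $f$ of degree at most $d/2$; such an $f$ can have $\binom{n}{d/2}$ terms and depend on all $n$ variables, and $f^2$ is then a degree-$d$ polynomial touching all $n$ variables, so there is no single small closure under which $\pE[f^2]$ becomes an honest moment. What your argument delivers is precisely the Sherali--Adams local positivity condition (which the paper explicitly notes ``is weaker than positive semidefiniteness''). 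Crossing from local positivity to global PSD is the entire content of the paper's Section~\ref{sec:psd-pf}: it runs a modified Gram--Schmidt process, defines ``witnesses'' for each orthogonalized basis element, and proves a combinatorial key lemma (Lemma~\ref{lem:KEY}) showing that whenever $\pE[y\cdot x^T]\neq 0$, the correlation is witnessed by a small, dense subfactor, so the orthogonalized basis element $y_S$ retains small support. Only then can Fact~\ref{fact:gs-local} combine orthogonality with local positivity to get $\pE[g^2]\ge 0$ for arbitrary degree-$D$ polynomials $g$. None of this machinery, or any substitute for it, appears in your proposal.

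Two secondary concerns. First, your closure (``iteratively absorb every constraint whose scope contributes fewer than $t$ new variables'') is the BCK15/OW14-style closure; the paper goes out of its way to replace it with a different definition (union of small $S$-closed \subfactors), in part because the iterative version is awkward to bound when $m$ is superlinear, and in part because the new definition is tailored to make the revenue/income accounting in the PSD proof go through. Second, your expansion statement (``a $(1-o(1))$-fraction of constraints in any small $T$ have a unique variable'') is not strong enough by itself to control the size of iterated closures; the paper's Plausibility Assumption is a quantitative income inequality over all small \subfactors, which is what feeds the accounting lemmas (Lemmas~\ref{lem:smallness-trick}, \ref{lem:union-revenue0}, \ref{lem:rev-loss}) used throughout.
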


Additionally, in the case that $\delta_P(t) = 0$, our result does not need the additive~$o(1)$ in refutation strength. That is:
\begin{theorem}                                     \label{thm:main2}
Let $P$ be a $k$-ary predicate and let $\cmplx(P)$ be the minimum integer $3 \leq \tau \leq k$ for which $P$ fails to support a $\tau$-wise uniform distribution.  Then if $\calI$ is a random instance of $\CSP(P^{\pm})$ with $m = \Delta n$ constraints, with high probability degree-$\wt{\Omega}\left(\frac{n}{\Delta^{2/(\cmplx(P)-2)}}\right)$ SOS fails to (weakly) refute~$\calI$.
\end{theorem}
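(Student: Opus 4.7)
The plan is to prove Theorem~\ref{thm:main2} by explicitly constructing a degree-$d$ pseudo-expectation $\pE$, with $d = \wt{\Omega}(n/\Delta^{2/(\cmplx(P)-2)})$, that assigns value exactly $1$ to every constraint of the random instance $\calI$.  Setting $t := \cmplx(P) - 1$, minimality of $\cmplx(P)$ ensures that $P$ supports a $t$-wise uniform distribution $\mu$ on its satisfying assignments; this $\mu$ is the local object we lift to a global pseudo-distribution on all $n$ variables.  Because $\mu$ is supported on satisfying assignments, any construction built by planting copies of $\mu$ will satisfy each constraint exactly, so the weak-refutation (rather than $o(1)$-refutation) conclusion comes for free once the construction is in place.

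The construction has two main ingredients.  First, a first-moment/union-bound argument shows that with high probability the random factor graph of $\calI$ is $(t,d)$-boundary expanding: every set of at most $d$ constraints admits an ordering $(P_1,S_1),\ldots,(P_s,S_s)$ such that each $S_i$ contributes at least $t$ variables not appearing in $S_1 \cup \cdots \cup S_{i-1}$, for $d$ as large as $\wt{\Omega}(n/\Delta^{2/(t-1)})$.  Second, for any monomial $x^\alpha$ of degree $\leq d$ one defines a ``closure'' $C(\alpha)$: the smallest set of constraints touching $\vbls(\alpha)$ whose complement remains boundary expanding.  Iterating $\mu$ along an expanding order for $C(\alpha)$---each constraint contributing an independent draw from $\mu$ on its $\geq t$ newly introduced variables, the remaining variables being filled by the $t$-wise uniform marginals of $\mu$---produces an honest probability distribution $\nu_{C(\alpha)}$ on the variables of $C(\alpha)$, and one sets $\pE[x^\alpha] := \E_{\nu_{C(\alpha)}}[x^\alpha]$, with literal signs folded into $\mu$ in the obvious way and variables outside $C(\alpha)$ taking independent uniform values.

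The main obstacle, as in every SOS lower bound in this lineage, is verifying positivity of the moment matrix $M_{\alpha,\beta} = \pE[x^\alpha x^\beta]$.  The approach I would take, following the Grigoriev--Schoenebeck framework and its Barak--Chan--Kothari extension to pairwise-uniform-supporting predicates, is to decompose any degree-$\leq d/2$ polynomial $f$ along the closure structure and show that $\pE[f^2]$ splits as a sum of nonnegative terms indexed by expanding sub-hypergraphs, with cross-contributions between distinct closure pieces vanishing by $t$-wise uniformity of $\mu$.  The delicate point is that $C(\alpha \cdot \beta)$ need not equal $C(\alpha) \cup C(\beta)$; controlling this discrepancy requires that the boundary-expansion property from the first step be robust to iterated closure operations (one typically proves a stronger ``closure-expansion'' statement asserting that unions of small closures stay small and expanding), and it is precisely this robustness requirement that pins down the quantitative degree threshold $d = \wt{\Omega}(n/\Delta^{2/(t-1)})$.
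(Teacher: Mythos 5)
Your proposal correctly captures the high-level template shared by all results in this lineage: lift a $t$-wise uniform distribution $\mu$ supported on $P^{-1}(1)$ to a pseudo-expectation by planting $\mu$ along a ``closure'' of each monomial's support, and then argue positivity. You also land on the right degree threshold $\wt\Omega(n/\Delta^{2/(t-1)})$ and the right expansion/first-moment calculation for the random factor graph. But the part you gesture at---``decompose $\pE[f^2]$ along the closure structure and show cross-contributions between distinct closure pieces vanish by $t$-wise uniformity''---is precisely the part the prior art \emph{could not} make work, and it is where the paper's substantive innovations live; as written, your proposal inherits the limitations you are trying to escape.

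Two concrete issues. First, your closure (``the smallest set of constraints touching $\vbls(\alpha)$ whose complement remains boundary expanding'') is the BGMT/BCK-style closure. The paper explicitly abandons that definition: the BCK closure degenerates to all of $[n]$ once $\Delta=\omega(1)$, and the BGMT closure only supports the weaker Sherali--Adams local-positivity condition, not $\pE[f^2]\ge 0$. The paper's replacement is a \emph{subgraph} of the factor graph (not a set of constraints): the union of all small ``$\tau$-subfactors'' in which every constraint-vertex retains degree $\ge\tau$ and every leaf variable lies in $S$. This inside-out definition is what keeps closures small in the superlinear regime and is essential to everything downstream. Second, the positivity argument in the paper is not a direct decomposition of $\pE[f^2]$ into nonnegative pieces. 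Instead one runs the ordinary modified Gram--Schmidt process on the monomial basis in degree order, and the technical heart (the paper's Lemma~\ref{lem:KEY}) is a conditional-independence argument over the planted distribution showing that whenever $\pE[y_{S,T}\cdot x^{T'}]\neq 0$ there is a small \subfactorplus ``witness'' containing all relevant variables; iterating this shows the orthogonalized basis functions $y_S$ have small variable support, so positivity reduces to local positivity of the planted distribution. Your sketch also misidentifies the delicate point: the concern is not really $C(\alpha\beta)$ vs.\ $C(\alpha)\cup C(\beta)$ but rather controlling the support growth of the orthogonalized basis through all of Gram--Schmidt. Without the new closure and the witness lemma you would be stuck, as BCK were, needing girth assumptions and constraint pruning---which cannot give the stated bound for $\Delta=\omega(1)$ or for purely random instances.
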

\begin{remark}
We comment here on the (surprisingly mild) parameter-dependence hidden by the $\wt{\Omega}(\cdot)$ and~$o(1)$ in these bounds.  See~Section~\ref{sec:parameters} for full details.
\begin{itemize}
    \item  In terms of $\Delta$, the $\wt{\Omega}(\cdot)$ is only hiding a factor of $\log \Delta$. Thus we get a full linear $\Omega(n)$-degree lower bound for $m = O(n)$ in both theorems above.
    \item In terms of $k$, and $t$, the $\wt{\Omega}(\cdot)$ is only hiding a factor of $1/(k 2^{O(k/t)})$. There are a number of interesting cases where one may take $t = \Theta(k)$; for example, $k$-SAT, $k$-XOR, and ${\mathrm{XOR}_{k/2} \oplus \mathrm{MAJ}_{k/2}}$, a predicate often used in cryptography (e.g., it was suggested by~\cite{AL16} for as the basis for high-stretch PRGs in $\mathsf{NC}^0$). In these cases, the dependence of the degree lower bound depends only \emph{linearly} on~$k$ and thus, there's little loss in having~$k$ significantly superconstant.
    \item Indeed in this case of $t = \Theta(k)$, if we also have $\Delta = 2^{\Theta(k)}$ then the degree lower bound for weak refutation in Theorem~\ref{thm:main2} is $\Omega(n)$ for $k$ as large as $\Omega(n)$; here, both $\Omega(\cdot)$'s hide only a \emph{universal} constants.  The regime of $\Delta = 2^{\Theta(k)}$ is  the algorithmically hardest one for $k$-SAT, and thus in this very natural case we have a linear-degree lower bound even for $k = \Omega(n)$.
    \item The refutation strength  $\delta_P(t)+o(1)$ in Theorem~\ref{thm:main-intro} is more precisely $\delta_P(t) + O(1/\sqrt{n})$ whenever $\Delta = n^{\Omega(1)}$.
    \item Theorem~\ref{thm:main-intro} also holds for predicates~$P$ with alphabet size~$q > 2$, with absolutely no additional parameter dependence on~$q$.
    \end{itemize}
\end{remark}

The full three-way tradeoff in Theorem~\ref{thm:main-intro} between constraint density, SOS degree, and strength of refutation is tight up to a polylogarithmic factor in the degree and an additive $o(1)$ term in the strength of the refutation. The tightness  follows from the below theorem, which is an immediate consequence of the general $\delta$-refutation framework of Allen et al.~\cite{AOW15} and the strong refutation algorithm for XOR due to Raghavendra, Rao, and Schramm~\cite{RRS16} (which fits in the SOS framework).
\begin{theorem} \textup{(Follows from \cite{AOW15, RRS16}.)} \label{thm:ub-intro}
Let $P$ be a $k$-ary Boolean predicate and let ${1 < t \leq k}$. Let $\calI$ be a random instance of $\CSP(P^\pm)$ with $m = \Delta n$ constraints.  Then with high probability, degree-$\wt{O}\left(\frac{n}{\Delta^{2/(t-2)}}\right)$ SOS \emph{does} $(\delta_P(t)-o(1))$-refute~$\calI$.  Furthermore, with high probability degree-$O(1)$ SOS succeeds in $(\delta_P(2)-o(1))$-refuting~$\calI$, provided $\Delta$ is at least some $\polylog(n)$.
\end{theorem}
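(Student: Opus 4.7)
The plan is to reduce $(\delta_P(t) - o(1))$-refutation of random $\CSP(P^\pm)$ to strong refutation of random XOR instances via the framework of Allen, O'Donnell, and Witmer~\cite{AOW15}, and then to invoke the XOR refutation algorithm of Raghavendra, Rao, and Schramm~\cite{RRS16} (which itself fits into the SOS hierarchy).

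I would first Fourier-expand each constraint as $P(y) = \sum_{T \subseteq [k]} \widehat{P}(T) \chi_T(y)$; averaging over the random scopes $S^{(e)}$ and literal signs $\xi^{(e)}$ then writes the fraction of satisfied constraints as
\[
\frac{1}{m}\sum_{e=1}^{m} P(\xi^{(e)} \cdot x_{S^{(e)}}) \;=\; \widehat{P}(\emptyset) + \sum_{\emptyset \neq T \subseteq [k]} \widehat{P}(T) \cdot A_T(x),
\]
where $A_T(x) := \frac{1}{m}\sum_e \prod_{i \in T} \xi^{(e)}_i \, x_{S^{(e)}_i}$ is the normalized value of a random $|T|$-XOR instance. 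Next, by the LP duality that defines $\delta_P(t)$ --- between $t$-wise uniform distributions on $\bits^k$ and functions with Fourier support strictly below level~$t$ that majorize~$P$ --- I would produce a polynomial $R : \bits^k \to \mathbb{R}$ with $\widehat{R}(T) = 0$ for all $|T| \geq t$, $R(y) \geq P(y)$ pointwise, and $\widehat{R}(\emptyset) = 1 - \delta_P(t)$. Substituting $R$ for $P$ constraint-wise gives the pointwise domination $\tfrac{1}{m}\sum_e P(\xi^{(e)} \cdot x_{S^{(e)}}) \leq (1-\delta_P(t)) + \sum_{1 \leq |T| < t} \widehat{R}(T) \cdot A_T(x)$, so the refutation task reduces to certifying inside SOS that the residual sum of low-degree XOR averages is $o(1)$.

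The second step invokes \cite{RRS16} to certify the residual. For each $\ell \geq 2$ and density $\Delta$, the RRS16 strong refutation theorem provides a degree-$\wt{O}(n/\Delta^{2/(\ell-2)})$ SOS certificate that each random $\ell$-XOR average is $o(1)$. I would plug this in at the ``worst'' relevant level $\ell = t$ (after using the AOW framework to combine the bounds for all $T$ with $1 \leq |T| < t$, as well as any $|T| \geq t$ terms arising from the majorization), producing a single SOS certificate of degree $\wt{O}(n/\Delta^{2/(t-2)})$ bounding the entire residual by $o(1)$. Combined with the pointwise majorization from the first step, this yields the claimed $(\delta_P(t) - o(1))$-refutation.

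For the ``furthermore'' statement concerning $t=2$, the only residual terms are the degree-$1$ averages $A_{\{i\}}(x) = \frac{1}{m}\sum_e \xi^{(e)}_i\, x_{S^{(e)}_i}$. Certifying these reduces to bounding the spectral norm of a random $\pm 1$ matrix, for which a standard matrix Chernoff bound yields $|A_{\{i\}}(x)| = O(\sqrt{\log n / \Delta}) = o(1)$ as soon as $\Delta \geq \polylog(n)$; this is already captured in degree-$2$ SOS (i.e.\ the basic SDP). The main obstacle I expect is the careful combination of the RRS16 certificates across the different Fourier levels $|T|$ without incurring an extra polynomial blow-up in the SOS degree --- precisely the technical content of the AOW $\delta$-refutation framework.
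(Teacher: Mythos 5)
Your high-level plan is the right one: Fourier-expand $P$, use LP duality against $t$-wise uniform distributions to produce a low-Fourier-degree majorant $R \geq P$ with $\widehat{R}(\emptyset) = 1 - \delta_P(t)$, and then certify the residual XOR averages via \cite{RRS16}. This is exactly the route the paper intends (which is why it only cites \cite{AOW15,RRS16} rather than giving a proof). However, you have an off-by-one error in the Fourier degree of the dual polynomial that propagates into an internal inconsistency.

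A distribution $\mu$ on $\bits^k$ is $t$-wise uniform iff $\widehat{\mu}(T) = 0$ for all $1 \leq |T| \leq t$; this is the constraint set of the LP defining $\delta_P(t)$. Dualizing therefore produces a majorant $R \geq P$ with $\widehat{R}(T) = 0$ for all $|T| > t$ --- i.e.\ Fourier degree $\leq t$, \emph{not} $< t$ as you wrote. If your claimed $\widehat{R}(T) = 0$ for $|T| \geq t$ were correct, the worst XOR level would be $\ell = t-1$ and you would obtain SOS degree $\wt{O}(n/\Delta^{2/(t-3)})$, which (i) is better than the theorem's bound and (ii) would contradict the matching lower bound Theorem~\ref{thm:main-intro}. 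You then plug in $\ell = t$ anyway, which is the correct worst level for the correct claim (degree $\leq t$), so your final exponent $2/(t-2)$ comes out right, but the reasoning is inconsistent as written. The same slip affects the ``furthermore'' part: for $t = 2$ the residual contains both degree-$1$ and degree-$2$ XOR averages, and it is the degree-$2$ terms (not the degree-$1$ ones, which are handled by a trivial $\ell_1$ bound) that require the $\|M\| = O(\sqrt{n\Delta})$ spectral certificate, i.e.\ the basic degree-$2$ SDP, giving $o(1)$ once $\Delta \geq \polylog(n)$. Your invocation of a spectral-norm bound is correct, but it belongs to the level-$2$ terms that your stated majorant would have killed. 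Fix the degree of $R$ to $\leq t$ and the proof sketch goes through as you intend.
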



\paragraph{An example.} As the parameters can be a little difficult to grasp, we illustrate our main theorem and its tightness with a simple example.  Let $P$ be the $3$-bit predicate that is true if \emph{exactly} one if its three inputs is true.  The resulting $3$-SAT variant $\CSP(P^\pm)$ is traditionally called $\text{$1$-in-$3$-SAT}$.  Let us compute the $\delta(t)$ values.  The uniform distribution on the odd-weight inputs is pairwise-uniform, and it only has probability mass $\frac14$ off of $P$'s satisfying assignments.  This is minimum possible, and therefore $\delta_{\text{$1$-in-$3$-SAT}}(2) = \frac{1}{4}$.  The only $3$-wise uniform distribution on $\{0,1\}^3$ is the fully uniform one, and it has probability mass $\frac58$ off of $P$'s satisfying assignments; thus $\delta_{\text{$1$-in-$3$-SAT}}(3) = \frac{5}{8}$.

Let us also note that as soon as $\Delta$ is a large enough constant, $\Opt(\calI) \leq \frac38 + o(1)$ (with high probability, a qualifier we will henceforth omit). Furthermore, it's long been known~\cite{BB02} that for~$\Delta=O(\log n)$ there is an efficient algorithm that weakly refutes $\calI$; i.e., certifies $\Opt(\calI) < 1$.  But what can be said about stronger refutation?  Let us see what our Theorem~\ref{thm:main-intro} and its counterpart Theorem~\ref{thm:ub-intro} tell us.

 Suppose first that there are $m = n\,\polylog(n)$ constraints. Theorem~\ref{thm:ub-intro} tells us that constant-degree SOS certifies $\Opt(\calI) \leq \frac34 + o(1)$.  However our result, Theorem~\ref{thm:main-intro}, says this~$\frac34$ cannot be improved: SOS cannot certify $\Opt(\calI) \leq \frac34 - o(1)$ until the degree is as large as $\wt{\Omega}(n)$.  (Of course at degree~$n$, SOS can certify the exact value of $\Opt(\calI)$.)

 What if there are $m = n^{1.1}$ constraints, meaning $\Delta = n^{.1}$?  Our result says SOS still cannot certify $\Opt(\calI) \leq \frac34 - o(1)$ until the degree is as large as $n^{.8}/O(\log n)$.  On the other hand, as soon as the degree gets bigger than some $\wt{O}(n^{.8})$, SOS \emph{does} certify $\Opt(\calI) \leq \frac34 - o(1)$; in fact, it certifies $\Opt(\calI) \leq \frac38 + o(1)$.

 Similarly (dropping lower-order terms for brevity), if there are $m = n^{1.2}$ constraints, SOS is stuck at certifying just $\Opt(\calI) \leq \frac34$ up until degree $n^{.6}$, at which point it jumps to being able to certify the truth, $\Opt(\calI) \leq \frac38 + o(1)$.  If there are $n^{1.49}$ constraints, SOS remains stuck at certifying just $\Opt(\calI) \leq \frac34$ up until degree $n^{.02}$.  Finally (as already shown in~\cite{AOW15}), once $m = n^{1.5}\,\polylog(n)$, constant-degree SOS can certify $\Opt(\calI) \leq \frac38 + o(1)$. \hfill \textbf{(End of example.)}\\

More generally, for a given predicate~$P$ and a fixed number of random constraints $m = n^{1+c}$, we provably get a ``time vs.\ quality'' tradeoff with an intriguing discrete set of breakpoints:  With constant degree, SOS can $\delta_P(2)$-refute, and then as the degree increases to $n^{1-2c}$, $n^{1-c}$, $n^{1-2c/3}$, etc., SOS can $\delta_P(3)$-refute, $\delta_P(4)$-refute, $\delta_P(5)$-refute, etc.

An alternative way to look at the tradeoff is by fixing the SOS degree to some $n^\eps$ and considering how refutation strength varies with the number of constraints.  So for $m$ between $n$ and $n^{3/2-\eps/2}$ SOS can $\delta_P(2)$-refute; for $m$ between $n^{3/2 - \eps/2}$ and $n^{2-\eps}$ SOS can $\delta_P(3)$-refute; for $m$ between $n^{2-\eps}$ and $n^{5/2 - 3\eps/2}$ SOS can $\delta_P(4)$-refute; etc.

It is particularly natural to examine our tradeoff in the case of constant-degree SOS, as this corresponds to polynomial time.  In this case, our Theorem~\ref{thm:main-intro} says that random $\CSP(P^\pm)$ cannot be $(\delta_P(t) + o(1))$-refuted when $m \ll n^{(t+1)/2}$, and it cannot even be weakly refuted  when $m \ll n^{\cmplx(P)/2}$.  Now by applying the work of Lee, Raghavendra, and Steurer~\cite{LRS15}, we get the same                                      hardness results for \emph{any} polynomial-size SDP-based refutation algorithm.  (See~\cite{LRS15} for precise definitions.)
\begin{corollary}                                       \label{cor:noSDP}
Let $P$ be a $k$-ary predicate,  and fix a sequence of polynomial-size SDP relaxations for $\CSP(P^\pm)$.  If $\calI$ is a random instance of $\CSP(P^\pm)$ with $m \leq \wt{\Omega}(n^{\cmplx(P)/2})$ constraints, then whp the SDP relaxation will have value~$1$ on~$\calI$.  Furthermore, if $m \leq \wt{\Omega}(n^{(t+1)/2})$ (for $1 < t \leq k$), then  whp the SDP relaxation will have value
at least $1 - \delta_P(t) - o(1)$ on $\calI$.
\end{corollary}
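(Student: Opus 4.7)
The plan is to derive Corollary~\ref{cor:noSDP} by combining our Theorems~\ref{thm:main-intro} and~\ref{thm:main2} with the semidefinite programming lower bound framework of Lee, Raghavendra, and Steurer~\cite{LRS15}. The LRS theorem, applied to $\CSP(P^{\pm})$, says that any fixed sequence of polynomial-size SDP relaxations is pointwise dominated, up to an additive $o(1)$, by the SOS hierarchy at some $d = \polylog(n)$ degree: on every instance $\calI$ one has $\mathrm{SOS}_d(\calI) \leq \mathrm{SDP}(\calI) + o(1)$. Equivalently, whenever $\mathrm{SOS}_d(\calI) \geq v$, also $\mathrm{SDP}(\calI) \geq v - o(1)$. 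Thus any lower bound against $\polylog$-degree SOS transfers to a lower bound against every polynomial-size SDP on the same instance.

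Given this black-box reduction, the corollary reduces to checking that the degree lower bounds of our main theorems still apply at the relevant $\polylog(n)$ level. For the first claim, Theorem~\ref{thm:main2} says degree-$\wt{\Omega}(n/\Delta^{2/(\cmplx(P)-2)})$ SOS whp fails to weakly refute a random $\calI$. Setting $m = \Delta n \leq \wt{\Omega}(n^{\cmplx(P)/2})$ makes this degree bound at least $d$, so whp $\mathrm{SOS}_d(\calI) = 1$. LRS then yields $\mathrm{SDP}(\calI) \geq 1 - o(1)$ whp; combined with the trivial $\mathrm{SDP}(\calI) \leq 1$, this is (absorbing the $o(1)$) the corollary's claim that the SDP value equals~$1$ whp.

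The second claim is entirely parallel, substituting Theorem~\ref{thm:main-intro} for Theorem~\ref{thm:main2}: when $m \leq \wt{\Omega}(n^{(t+1)/2})$, the SOS degree lower bound $\wt{\Omega}(n/\Delta^{2/(t-1)})$ exceeds $d = \polylog(n)$, so whp $\mathrm{SOS}_d(\calI) \geq 1 - \delta_P(t) - o(1)$, and LRS then gives $\mathrm{SDP}(\calI) \geq 1 - \delta_P(t) - o(1)$ whp.

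There is no real technical obstacle on the side of the corollary itself; the entire substance lies in proving Theorems~\ref{thm:main-intro} and~\ref{thm:main2}. The only routine verification is the parameter arithmetic matching the ranges of $m$ in the corollary statement to the thresholds at which $\polylog(n)$-degree SOS provably fails.
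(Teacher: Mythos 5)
Your proposal matches the paper's argument exactly: the corollary is an immediate combination of the SOS degree lower bounds of Theorems~\ref{thm:main-intro} and~\ref{thm:main2} with the Lee–Raghavendra–Steurer transfer theorem, which asserts that polylogarithmic-degree SOS is at least as tight as any fixed polynomial-size SDP relaxation on $\CSP(P^\pm)$. The paper itself presents this corollary with no further proof beyond citing~\cite{LRS15}, so there is nothing substantively different in your route.

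One small imprecision worth flagging: for the first claim you derive only $\mathrm{SDP}(\calI)\geq 1-o(1)$ and then ``absorb'' the $o(1)$, but $1-o(1)\leq\mathrm{SDP}(\calI)\leq 1$ does not literally give $\mathrm{SDP}(\calI)=1$. In the regime of the first claim the degree-$d$ SOS relaxation satisfies \emph{all} constraint identities exactly (per Corollary~\ref{cor:pE-sats-constraints}), i.e.\ $\mathrm{SOS}_d(\calI)=1$, and the LRS correspondence preserves this exactly (no additive loss is needed for feasibility, only for value approximation). If you prefer to keep the $o(1)$-loss form of LRS, then the honest conclusion for the first claim is $\mathrm{SDP}(\calI)\geq 1-o(1)$; this is how the paper's second claim is phrased, and it is the safe way to state it. Either way the substance is the same as the paper's, and your parameter arithmetic (ensuring the SOS degree threshold exceeds $\polylog(n)$ at the stated constraint densities) is correct.
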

The results in this corollary are tight up to the polylogs on~$m$, by the SOS algorithms of~\cite{AOW15}.

\ignore{
\subsection{Junk}
In the case of $P = \text{$k$-SAT}$, much is known about both upper and lower bounds for refutation.  Chv\'{a}tal and Szemer\'{e}di proved that resolution refutations of random instances of $k$-SAT with $O(n)$ constraints require exponential size \cite{CS88};  Ben-Sasson and Wigderson later strengthened this result to show that resolution refutations of $k$-SAT require superpolynomial size even when $m = n^{3/2-\eps}$ \cite{BSW01}.  Ben-Sasson and Impagliazzo and Alekhnovich and Razborov proved the analogous theorem for polynomial calculus \cite{BI99, AR01a}.

More recent results have focused on static semialgebraic proof systems and their corresponding semidefinite programming (SDP) hierarchies.  Goerdt and Krivelevich showed the spectral techniques, which can be captured by SDP hierarchies, enable refutation of $k$-SAT with $m = n^{\lceil k/2 \rceil}$ \cite{GK01}.  For $3$-SAT, Friedman and Goerdt improved this to $m = n^{3/2+\eps}$, again using spectral methods \cite{FG01}.  One of the first lower bounds for refutation of $3$-SAT using a hierarchy was given by Alekhnovich, Arora, and Tourlakis for the \Lovasz-Schrijver (LS) proof system and instances with $O(n)$ \cite{AAT05}.  Buresh-Oppenheim et al. prove lower bounds for refuting random $k$-SAT with $m=O(n)$ using the stronger \Lovasz-Schrijver$_+$ (LS$_+$) SDP hierarchy \cite{BGH+03}.  Grigoriev gave lower bounds for refutation of random instances of XOR using the even stronger SOS hierarchy in a slightly different model \cite{Gri01}; Schoenebeck rediscovered his proof and showed that it applied to random instances of $k$-SAT and $k$-XOR in the model we consider here for $m$ up to $n^{k/2-\eps}$ \cite{Sch08}.

We extend this line of work in two directions.  First, we consider arbitrary predicates $P$.  In his breakthrough work, Feige showed worst-hardness of approximation results assuming hardness of refuting random $3$-SAT \cite{Fei02}.  Subsequent work has used assumptions about the hardness of refuting CSPs with other predicates to prove additional worst-case hardness results \cite{GL04, AAM+11}.  Furthermore, the recent work of Daniely and his coauthors has used this paradigm to prove hardness of learning results based on hardness of refuting CSPs with non-SAT predicates \footnote{While \cite{AOW15} invalidates several of the hardness-of-refutation assumptions for non-SAT predicates used in \cite{DLS14}, the technique remains promising.  In addition, even though \cite{Dan15} only assumes hardness of refuting SAT, much of the work in the paper is devoted to deriving hardness of refuting of a non-SAT predicate from hardness of refuting SAT.} \cite{DLS13, DLS14, DSS14, Dan15}.    Barak, Kindler, and Steurer used the stronger assumption that a basic SDP optimally refutes any CSP to prove a couple of hardness of approximation results \cite{BKS13}.

Beginning with a paper of Goldreich \cite{Gol00}, pseudorandom generators (PRGs) based on applying a predicate $P$ to random subsets of output bits have been extensively studied because they can be computed in constant parallel time \cite{ABR12, ABW10, AL16}.  CHECK IF ABW FALLS INTO THIS?  See Applebaum's survey for details \cite{App13b}.  Hardness of refuting $\CSP(P)$ provides evidence that these PRGs are secure and proving such results for larger $m$ gives evidence that these PRGs have high stretch.  On the algorithmic side, Allen et al. showed that a constant-degree SOS algorithm can refute instances of $\CSP(P)$ with $m \gg n^{\cmplx(P)/2}$ constraints, where $\cmplx(P)$ measures the complexity of $P$ \cite{AOW15}.  Specifically, $\cmplx(P)$ is the smallest integer $t$ for which there is no $t$-wise uniform distribution supported on satisfying assignments to $P$.  Raghavendra, Rao, and Schramm extended this result to higher degree, showing that an $O(n^{2\eps/(\cmplx(P)-2)})$-degree SOS algorithm can refute instances of $\CSP(P)$ with $m \gg n^{\cmplx(P)/2-\eps}$ constraints \cite{RRS16}.

Second, we study a stronger notion of refutation.  Not only is random $3$-SAT unsatisfiable when $m \gg n$, at most a $\frac{7}{8}+o(1)$ fraction of constraints can be simultaneously satisfied.  For $\CSP(P)$, the maximum fraction of satisfiable constraints is $\frac{|P^{-1}(0)|}{2^k} + o(1)$ when $m \gg n$.  With this in mind, we consider the $\delta$-refutation problem, in which we wish to find a certificate that at most a $1-\delta$ fraction of constraints can be simultaneously satisfied.  We will call $\delta$ the strength of the refutation.  The problem of certifying that at most a $\frac{|P^{-1}(0)|}{2^k} + o(1)$ fraction of constraints can be satisfied ($\delta = 1-\frac{|P^{-1}(0)|}{2^k}-o(1)$) is called strong refutation.  Both Feige's and Daniely at al.'s hardness results actually assume hardness of $\delta$-refutation for some $\delta$ CHECK THIS.  These stronger notions of refutation has also been studied in previous algorithmic work \cite{COGL04, COCF10, BM15, AOW15}.  Of course, lower bounds for refutation are lower bounds for $\delta$-refutation for any $\delta > 0$ GET THIS RIGHT.  However, even at larger numbers of constraints for which refutation is easy, there may be values of $\delta$ for which $\delta$-refutation remains hard.  We are not aware of any previous work giving lower bounds for $\delta$-refutation or strong refutation at densities higher than those for which lower bounds for refutation are known.  Consider the following parameter $\delta_P(t)$.
\[
\delta_P(t) :=  \min_{\substack{\text{$t$-wise uniform distributions $\mu$ on $\{0,1\}^k$} \\ \text{distributions $\sigma$ supported on satisfying assignments to $P$}}} \dtv{\mu}{\sigma}.
\]
Allen et al. show that their constant-degree SOS algorithm $(\delta_P(t) - o(1))$-refutes instances of $\CSP(P)$ with $m \gg n^{t/2}$ \cite{AOW15}.  By taking $t = k$, they strongly refute when $m \gg n^{k/2}$.  Again, Raghavendra, Rao, and Schramm's result can be used to show that an $O(n^{2\eps/(t-2)})$-degree SOS algorithm can $(\delta_P(t)-o(1))$-refute instances of $\CSP(P)$ with $m \gg n^{t/2-\eps}$ constraints \cite{RRS16}.  Finally, in the context of hierarchies, it is natural to ask how SOS degree increases as a function of refutation strength.  Chlamtac and Singh ask a similar question for the $3$-uniform hypergraph independent set problem \cite{CS08} and show that approximation factor increases as a function of number of rounds in the Sherali-Adams$_+$ SDP hierarachy.  To our knowledge, no similar results were known for CSPs or for the SOS hierarchy.

\paragraph{Our result} We prove lower bounds for $\delta$-refutation of $\CSP(P)$ using SOS.
\begin{theorem} \label{thm:main-intro}
FILL THIS IN
\end{theorem}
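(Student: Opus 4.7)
The plan is to construct, with high probability over the random instance $\calI$, a degree-$d$ pseudoexpectation operator $\pE$ for $d = \wt{\Omega}(n/\Delta^{2/(t-1)})$ that satisfies the SOS axioms for $\CSP(P^\pm)$ (PSD moment matrix and consistency with $x_i^2 = x_i$) and that certifies $\pE\bigl[\tfrac{1}{m}\sum_{C \in \calE} P_C(x)\bigr] \geq 1 - \delta_P(t) - o(1)$. By the definition of $\delta_P(t)$, I fix a $t$-wise uniform distribution $\mu$ on $\{0,1\}^k$ together with a distribution $\sigma$ supported on $P^{-1}(1)$ realizing $\dtv{\mu}{\sigma} = \delta_P(t)$; this $\mu$, twisted appropriately by the literal signs of each individual constraint, will serve as the ``local distribution'' on the scope of each random constraint. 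The alphabet-$q$ case is identical once one replaces $\{0,1\}$ by $\Omega$.

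The construction follows the Grigoriev--Schoenebeck framework as adapted for general predicates by~\cite{BGMT12, OW14, BCK15}. For each $T \subseteq [n]$ of size at most $d$, I would define a closure $\Cl(T)$ by iteratively absorbing any constraint whose scope lies ``almost entirely'' inside the current set, with the ``almost'' threshold tuned to~$t$. A random-factor-graph expansion property, which holds whp at density~$\Delta$ as long as $d = O(n/(\Delta^{2/(t-1)} \log \Delta))$, then guarantees two things: $|\Cl(T)| = O(|T|)$, and the subfactor induced on $\Cl(T)$ is \emph{$t$-sparse}, meaning every constraint in it has at least $t$ variables appearing in no other constraint of $\Cl(T)$. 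On a $t$-sparse subfactor the natural product of local $\mu$-distributions is well-defined and consistent, because every $\leq t$-wise marginal of $\mu$ is the uniform distribution. The pseudoexpectation is then set by $\pE\bigl[\prod_{i \in T} x_i\bigr] = \mathbf{E}_{x \sim D_{\Cl(T)}}\bigl[\prod_{i \in T} x_i\bigr]$, where $D_{\Cl(T)}$ is the $\mu$-product over $\Cl(T)$ extended by independent uniform marginals on the rest.

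The certification step is the easy one: at each individual constraint the local marginal is exactly $\mu$, which satisfies $P$ with probability $1-\delta_P(t)$, so $\pE[P_C] = 1-\delta_P(t)$; averaging over the $m$ constraints and applying a Chernoff bound yields the claimed $1-\delta_P(t)-o(1)$. The technical heart of the proof, and the main obstacle, is positivity of the degree-$d$ moment matrix $M[\pE]$. The plan is to exhibit a Gram factorization $M = Z^\top Z$, where rows of $Z$ are indexed by multilinear monomials of degree at most $d/2$ and columns by the ``leaves'' of each closure's tree-like skeleton, mirroring the approach of~\cite{BCK15} in the pairwise-uniform ($t=2$) setting. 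Positivity then reduces to operator-norm bounds on random matrices derived from the factor (hyper)graph, which I would control via a trace-moment computation driven by the same expansion/sparsity properties used to build the closure. The threshold $d = \wt{\Omega}(n/\Delta^{2/(t-1)})$ enters precisely because, above it, ``$t$-dense'' substructures begin to appear in the random factor graph and destroy both the consistency of the local distributions and the Gram factorization. Extending the~\cite{BCK15} machinery beyond pairwise-uniformity (to $t \geq 3$), beyond the constraint-pruning restriction (to purely random instances), and to arbitrary alphabets $\Omega$ is where the bulk of the new technical work lies; I expect the trace-moment analysis in the $t \geq 3$ regime to be the single most delicate component.
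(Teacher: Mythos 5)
Your overall architecture (fixing a $t$-wise uniform distribution $\mu$ realizing $\delta_P(t)$, a closure operation on variable sets, local planted distributions defining $\pE$, and an expansion/forbidden-subgraph property of the random factor graph at degree $\wt{\Omega}(n/\Delta^{2/(t-1)})$) matches the paper's framework. But two things do not survive contact with the details. First, your claimed structural property of the closure --- that it is ``$t$-sparse,'' so that a \emph{product} of copies of $\mu$ is automatically well-defined and consistent --- is not what one gets: the closure must absorb precisely those constraints that overlap the current set in many variables, so constraints inside $\cl(S)$ can share arbitrarily many variables. The paper instead defines the local distribution by drawing from each $\mu_f$ independently and \emph{conditioning on consistency}, and it is a theorem (using $(\tau-1)$-wise uniformity together with the plausibility/forbidden-subgraph assumption) that this conditioning has nonzero probability and that the marginal on $S$ stabilizes once the subgraph contains $\cl(S)$. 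Relatedly, your certification step assumes every constraint's local marginal is exactly $\mu$; this fails for the constraints whose closure strictly contains their own neighborhood, and the $o(1)$ loss in refutation strength comes from a counting argument showing there are at most $o(m)$ such constraints --- not from a Chernoff bound.

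Second, and more seriously, the heart of the theorem --- positive semidefiniteness --- is left as a plan, and the plan points in a direction the paper does not take. You propose a Gram factorization $M = Z^\top Z$ certified by operator-norm and trace-moment bounds on random matrices; this is not the method of \cite{BCK15} (whose argument is combinatorial local orthogonalization), and it is not known how to push a spectral argument of this kind to general $t$-wise uniform predicates at superlinear density. The paper's actual argument runs the full, global Gram--Schmidt process on the monomial basis in order of increasing degree (so orthogonality is automatic) and then proves that each orthogonalized basis function $y_S$ has small variable support, witnessed by a small plausible subgraph of bounded ``revenue.'' The key lemma shows that $\pE[y_S \cdot x^T] \neq 0$ forces a small, dense, highly connected subgraph joining $T$ to the support of $y_S$ (via a conditional-independence computation in the planted distribution), and the forbidden-subgraph assumption caps how much such witnesses can accumulate; positivity of $\pE[y_S^2]$ then follows from \emph{local} positivity since $y_S$ depends on few variables. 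This replacement of both local orthogonalization and any spectral/trace-moment machinery by ``global Gram--Schmidt plus small witnesses'' is exactly the paper's main new idea; without it, or a worked-out substitute, your proposal does not yet constitute a proof.
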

For all degrees, this result matches the upper bound of \cite{RRS16} up to a multiplicative polylogarithmic factor in the number of constraints and up to an additive $o(1)$ in the strength of the refutation.  In other words, it gives an almost exact tradeoff between the number of constraints and SOS degree required to $\delta$-refute any CSP.  We can state this tradeoff in a couple of ways.  Fixing $m = n^{c+1}$, we consider SOS degree and refutation strength as a function of the parameter $t$.  Our result and that of \cite{RRS16} say that, to within polylogarithmic factors, SOS degree $n^{\frac{t-2-2c}{t-2}}$ is both necessary and sufficient to $\delta_P(t)$-refute instances of $\CSP(P)$.  Refutation strength increases in discrete steps as the degree increases: we can $\delta_P(3)$-refute once the degree is as least $n^{1-2c}$, we can $\delta_P(4)$-refute once the degree is at least $n^{1-c}$, etc.  We can also set SOS degree to $n^d$ and consider $m$ and refutation strength as a function of $t$ to see that $m = n^{t/2(1-d)+d}$ constraints are necessary and sufficient for $\delta_P(t)$-refutation.  Now, refutation strength increases in discrete steps as $m$ increases: we can $\delta_P(3)$-refute once $m$ is at least $n^{3/2-d/2}$, we can $\delta_P(4)$-refute once $m$ is at least $n^{2-d}$, etc.

Via work of Lee, Raghavendra, and Steurer \cite{LRS15}, our result together with that of \cite{AOW15} also gives an almost exact tradeoff between number of constraints and strength of refutation for \emph{any} SDP-based refutation algorithm of the following form.
\begin{enumerate}
\item Solve a polynomial-size SDP whose objective function estimates the fraction of satisfied constraints.
\item If $\SDPOpt \leq 1-\delta$, return ``not $(1-\delta)$-satisfiable".  Else, return ``fail".
\end{enumerate}
In particular, up to polylogarithmic factors, $m=n^{t/2}$ constraints are both necessary and sufficient for $\delta_P(t)$-refutation in this model.  As far as we know, such SDP-based algorithms capture all state-of-the-art, polynomial-time refutation algorithms except for Gaussian elimination, which can be used to refute $k$-XOR instances \cite{AOW15}.  Additionally, we believe that this model captures all known polynomial-time $\delta$-refutation algorithms with constant $\delta$ \cite{AOW15}.  Looking beyond average-case instances, an SDP algorithm gives the best-possible polynomial-time approximation to the optimal value of any CSP in the worst case, assuming the Unique Games Conjecture \cite{Rag08}.

\paragraph{Comparison to previous work} O'Donnell and Witmer proved an analogous result for the Sherali-Adams (SA) linear programming hierarchy \cite{OW14} with $\delta = 0$ FIX THIS based on techniques of Benabbas et al. \cite{BGMT12}.  Mori and Witmer showed the corresponding lower bound for the Sherali-Adams$_+$ (SA$_+$) and \Lovasz-Schrijver$_+$ (LS$_+$) SDP hierarchies using methods of Tulsiani and Worah \cite{TW13}, again only for $\delta=0$.

For SOS, Barak, Chan, and Kothari proved that random instances of $\CSP(P)$ with $P$ supporting a pairwise-uniform distribution over satisfying assignments, $m = O(n)$, and $o(m)$ constraints removed cannot be refuted by SOS with degree $o(n)$ \cite{BCK15}.  Our result improves that of Barak et al. in several ways.  First, we show that removing $o(m)$ constraints is not necessary.  Our lower bound holds in the more natural and extensively-studied setting of refuting fully random instances.  Second, our result holds for much larger values of $m$.  Hardness of refutation with larger numbers of constraints is important because of connections to security of local PRGs and hardness of learning described above.  Third, we generalize from CSPs with predicates supporting pairwise-uniform distributions to those with predicates supporting $t$-wise uniform distributions.  In both the PRG and hardness-of-learning applications, one chooses a predicate based on the statements one wants to prove; this freedom to choose a predicate with helpful properties is crucial.  Fourth, we show lower bounds for $\delta$-refutation for any $\delta \in [0, 1-\frac{|P^{-1}(0)|}{2^k}-o(1)]$.  Increasing $\delta$ makes the problem harder and our lower bounds become correspondingly stronger.  As mentioned earlier, many hardness results based on refutation actually assume hardness of $\delta$-refutation for some $\delta > 0$.  Finally, we believe our proof gives new conceptual insight into construction of SOS lower bounds that we hope will be useful in proving SOS lower bounds for other problems.  We discuss this further in the next section.
}

\section{Technical framework}
In Section~\ref{sec:intro}, we described our results as being SOS lower bounds for random CSPs, with constraints chosen randomly from a fixed predicate family~$\calP$.  However it is conceptually clearest to divorce our results from the ``random CSP'' model as quickly as possible.
\begin{itemize}
    \item Our lower bound applies whenever the underlying factor graph (bipartite constraint/variable graph) does not contain certain small forbidden subgraphs, which we call ``implausible'' subgraphs.  Granted, the only examples we know of such graphs are random graphs (whp).  Further, the condition of ``does not contain any implausible subgraphs'' is highly related to the condition of ``has very good vertex expansion''. Still, we believe the right way to think about the requirement is in terms of forbidden subgraphs.
    \item Our lower bound doesn't really involve CSPs and constraints, per se.  For each constraint-vertex~$f$ in the underlying factor graph, rather than assuming it comes equipped with a constraint predicate~$P$ applied to its vertex-variable neighbors, we assume it comes equipped with a probability distribution~$\mu_f$ on assignments to its vertex-variable neighbors.  We can have a different $\mu_f$ for every constraint-vertex~$f$ if we want (indeed, the constraints need not even have the same arity).
    \item Our SOS lower bounds now take the following form:  Assume we are given a factor graph~$G$ with no implausible subgraphs, and assume each constraint-vertex~$f$ has an associated distribution $\mu_f$ that is $t$-wise uniform.  Then the low-degree SOS proof system ``thinks'' that there is a global assignment to the variables such that, at every constraint-vertex~$f$, the local assignment to the neighboring variable-vertices is in the support of~$\mu_f$. (Indeed, it ``thinks'' that there is a \emph{probability distribution} on global assignments such that for almost all~$f$, the marginal distribution on $f$'s neighbors is equal to~$\mu_f$.)
\end{itemize}

Let us make some of these notions more precise.

\subsection{Constraint satisfaction} \label{sec:constraint-satisfaction}

\begin{notation}
    We fix an \emph{alphabet} $\alphabet$ of cardinality~$\alphasize \geq 2$, and  a \emph{maximum constraint arity} $\maxarity \geq 3$.
\end{notation}
The reader is strongly advised to focus on the case $\alphasize = 2$, with $\alphabet = \{\pm 1\}$, as the only real difficulty posed by larger alphabets is notational.   Also, although we describe $\maxarity$ as a maximum arity, there will be no loss in thinking of every constraint as having arity~$K$.
\begin{definition}[$t$-wise uniform distributions]
    A probability distribution $\mu$ on $\alphabet^k$ is said to be \emph{$t$-wise uniform} if its marginal on every subset of $t$ coordinates is uniform.
\end{definition}
Rather than our full Theorem~\ref{thm:main-intro} concerning $\delta$-refutation, the reader is advised to mainly keep in mind our Theorem~\ref{thm:main2}, which is concerned with (weak) refutation of CSPs for which the predicates support a $(\tau-1)$-wise uniform distribution.  Given our proof of Theorem~\ref{thm:main2}, the more general Theorem~\ref{thm:main-intro} will fall out fairly easily.
\begin{notation}
    We fix an integer $\Tcxty$ satisfying $3 \leq \Tcxty \leq \maxarity$.
\end{notation}
The reader is advised to focus on the simplest case of $\Tcxty = 3$ (corresponding to predicates supporting \emph{pairwise}-uniform distributions), as the value of $\Tcxty$ makes no real difference to our proofs.
\begin{notation}[Instance]
    The \emph{instance} we work with consists of two parts: a \emph{factor graph} and its \emph{constraint distributions}.  The factor graph, denoted~$G$,  is a bipartite graph with edges going between $n$ \emph{variable-vertices} and $m$ \emph{constraint-vertices}. For a constraint-vertex~$f$ we write~$N(f)$ for the \emph{neighborhood} of~$f$, which we take to be an \emph{ordered} list of the variable-vertices adjacent to~$f$.  We assume that the degree (``arity'') of every constraint-vertex~$f$ satisfies $\Tcxty-1 \leq |N(f)| \leq \maxarity$.  Finally, each constraint-vertex~$f$ also comes with a constraint distribution $\mu_f$ on $\alphabet^{N(f)}$.  It is assumed that each $\mu_f$ is $(\Tcxty-1)$-wise uniform.
\end{notation}

\ignore{
The above set up corresponds to the following constraint satisfaction problem. We have a $m$ constraints $c_1, c_2, \ldots, c_m$ corresponding to each constraint vertex in $G$ over $n$ variables $x_1, x_2, \ldots, x_n$ corresponding to each variable vertex in $G$. Each $c_i$ is described by a tuple of size $|c_i|$ at most $\maxarity$ on $x_1, x_2, \ldots, x_n$ - the neighbors of the constraint vertex corresponding to $c_i$ in $G$ - and has an associated predicate $P_i: \alphabet^{|c_i|} \rightarrow \zo$. The algorithmic task is to come up with an assignment from $\alphabet$ to each $x_i$ such that the number of $P_i$ that take the value $1$ (i.e. are \emph{satisfied}) is maximized. We will deal with the task of finding if a given instance of CSP problem as above is satisfiable. This can be equivalently formulated as the following \emph{polynomial feasibility} (i.e. checking if there's a simultaneous solution to polynomial equality/inequality constraints) problem:

For each $i \in \vbls(G)$ and each $c \in \Omega$, we introduce an ``indeterminate'' $\indet{c}{i}$ that is supposed to stand for~$1$ if variable~$i$ is assigned~$c$ and~$0$ otherwise. For each constraint $c_i$, let $f_i$ be the degree $\maxarity \cdot \alphasize$ polynomial that evaluates to $1$ at an assignment $a$ from $\alphabet^{|c_i|}$ if $P_i$ is satisfied by $a$ and $0$ otherwise. Then, for CSP problem above can be formulates as finding if the following system of polynomial equations is feasible.

\begin{align}
  f_i = 1 & \forall \text{ }  i \\
\text{ s.t. } & \indet{c}{i}^2 = \indet{c}{i} \text{ } \forall \text{ } c,i. \label{eq:csp-polyopt-formulation}
\end{align}
}

To orient the reader vis-\`{a}-vis our description of CSPs in Section~\ref{sec:csps}, consider our Theorem~\ref{thm:main2} in which we have $\CSP(P^{\pm})$ instances, where $P : \{\pm 1\}^k \to \{0,1\}$ is a $k$-ary Boolean predicate with complexity $\cmplx(P) = \Tcxty$.  This means there exists some $(\tau-1)$-wise uniform distribution~$\mu$ on $\{\pm 1\}^k$ supported on satisfying assignments for~$P$.  Note that for any ``literal pattern'' $\ell \in \{\pm\}^{k}$, the distribution $\mu_\ell$ gotten by negating inputs to~$\mu$ according to~$\ell$ is also $(\Tcxty-1)$-wise uniform.  In the $\CSP(P^{\pm})$ instance, to every constraint with literal pattern $\ell$ the associated ``constraint distribution'' will be~$\mu_\ell$. (In the more general context of Theorem~\ref{thm:main-intro} where we have a $k$-ary predicate~$P$ with $\delta = \delta_P(t)$, this means there is some distribution $\mu$ on $\{\pm 1\}^k$ which is $t$-wise uniform and which is $\delta$-close to being supported on~$P$.  We will take $\Tcxty = t+1$ and take the constraint distributions to be $\mu_\ell$ again.)



\subsection{Plausible factor graphs} \label{sec:plausible-fg}

As mentioned earlier, our SOS lower bounds will hold whenever the factor graph~$G$ has no ``implausible'' subgraphs. The meaning of this will be discussed in much greater detail in Section~\ref{sec:plausible}, but here we will give the briefest possible definition.

\begin{notation}
    We introduce two parameters:  $1 \leq \CONSMALL \leq n/2$ and $0 < \price < 1$.  (For the sake of intuition, the reader might think of, e.g., $\CONSMALL = n^{\Omega(1)}$ and $\price = \frac{1}{\log n}$.)  The parameters are assumed to satisfy $\maxarity \leq \price \cdot \CONSMALL$.
\end{notation}
\paragraph{Plausibility Assumption.}\emph{Henceforth the factor graph $G$ is assumed to satisfy the following property: Let $H$ be an edge-induced subgraph in which every constraint-vertex has minimum degree~$\tau$.  Suppose $H$ has $c$~constraint-vertices, $v$~variable-vertices, and $e$~edges, with $c \leq 2 \cdot \CONSMALL$.  Then $(\tau - \price)c \geq 2(e-v)$.}\\

We call the subgraphs $H$ for which the inequality holds \emph{plausible} because they are indeed the ones that may plausibly show up when the factor graph~$G$ is randomly chosen:
\begin{proposition} \label{prop:randgraph} (Roughly stated; see Theorem~\ref{thm:random-graph} for a precise statement.)
    A random  $G$ with constraint density~$\Delta$ will satisfy the Plausibility Assumption whp provided $\displaystyle \CONSMALL \ll \frac{n}{\Delta^{2/(\Tcxty - 2 - \price)}}$.
\end{proposition}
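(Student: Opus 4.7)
The plan is a first-moment / union-bound argument over all potential implausible edge-induced subgraphs~$H$ of the random factor graph~$G$. Parametrize each candidate~$H$ by the triple $(c,v,e)$ of numbers of constraint-vertices, variable-vertices, and edges. The number of possible specifications of~$H$ inside~$G$ is at most $\binom{m}{c}\binom{n}{v}\binom{cK}{e}$: choose~$c$ constraint-vertices, $v$ variable-vertices, and the~$e$ ``active'' ports among the $cK$~ports of the chosen constraints. In the natural random factor graph model (each of the $mK$ constraint-ports picks its neighboring variable independently and uniformly at random from $[n]$), the probability that all~$e$ active ports land in the chosen $v$-set is $(v/n)^e$. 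Hence the expected number of implausible subgraphs in $G$ is at most
\[
\sum_{c,v,e}\binom{m}{c}\binom{n}{v}\binom{cK}{e}(v/n)^e,
\]
summed over $1\le c\le 2\CONSMALL$, $\Tcxty c\le e\le cK$, and $0\le v < e - (\Tcxty-\price)c/2$ (from the implausibility condition $2(e-v)>(\Tcxty-\price)c$).

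The key estimate is $\binom{n}{v}(v/n)^e \le \mathrm{e}^v(v/n)^{e-v}$, obtained from $\binom{n}{v}\le(\mathrm{e}n/v)^v$. Combined with $e-v\ge(\Tcxty-\price)c/2$ and $v\le cK$, this yields $\binom{n}{v}(v/n)^e \le \mathrm{e}^v(cK/n)^{(\Tcxty-\price)c/2}$. Using also $\binom{m}{c}\le(\mathrm{e}m/c)^c$, $\binom{cK}{e}\le 2^{cK}$, and absorbing the polynomial-in-$c$ factors from summing over $(v,e)$ into a $K$-dependent constant, I expect the bound to reduce to
\[
\sum_{c=1}^{2\CONSMALL}\left(\frac{C_{K,\Tcxty}\,\Delta}{(n/c)^{(\Tcxty-2-\price)/2}}\right)^{c}.
\]
The exponent $(\Tcxty-2-\price)/2$ on~$c$ emerges because the $c^{-c}$ from $\binom{m}{c}$ cancels against the $c^{(\Tcxty-\price)c/2}$ from $(cK/n)^{(\Tcxty-\price)c/2}$, leaving net exponent $(\Tcxty-\price)/2-1$; correspondingly, the $n^c$ from $m^c=(\Delta n)^c$ combines with $n^{-(\Tcxty-\price)c/2}$ to yield $n^{-(\Tcxty-2-\price)c/2}$.

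Under the hypothesis $\CONSMALL \ll n/\Delta^{2/(\Tcxty-2-\price)}$ (with $\ll$ hiding a sufficiently large constant to beat $C_{K,\Tcxty}$), the bracketed quantity is $o(1)$ uniformly for $c\le 2\CONSMALL$, so the geometric sum is $o(1)$ and Markov's inequality finishes the argument. The main technical care is not being wasteful with the $\mathrm{e}^v$ factor: one must attribute only the excess $e-v\ge(\Tcxty-\price)c/2$ powers --- not all $e$ of them --- to the $v/n$ factor supplying the $n$ in the denominator. The naive bound $\binom{n}{v}(v/n)^e\le(cK/n)^e$ would instead produce a useless $c^{K-1}$ exponent on~$c$ rather than the correct $(\Tcxty-2-\price)/2$. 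A secondary concern is the precise factor-graph model (iid uniform ports vs.\ distinct neighbors within a constraint vs.\ the literal-variant $\CSP(P^\pm)$ model): in all cases the embedding probability is $(v/n)^e$ up to lower-order factors, so the bound persists.
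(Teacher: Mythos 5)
Your proposal is correct and matches the paper's own argument in Appendix~\ref{app:expansion} essentially step for step: a first-moment union bound over implausible \subfactors parametrized by $(c,v,e)$, with $\binom{m}{c}\le(\mathrm{e}m/c)^c$, $\binom{n}{v}\le(\mathrm{e}n/v)^v$, a $2^{Kc}$ factor for choosing which ports are active, and the crucial observation that only the excess $e-v\ge(\Tcxty-\price)c/2$ edges should be ``charged'' against $n$. The paper organizes the algebra slightly differently (bounding a single probability-of-at-least-$A$-edges term via Lemma~\ref{lem:alternate-accounting} rather than summing explicitly over $e$, and it also handles the second $I_0=\Tcxty-1$ case needed for~\eqref{eqn:dont-fail-me}), but the core computation and the resulting geometric sum are the same.
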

The Plausibility Assumption is highly similar to the assumption that $G$ has good vertex-expansion, and indeed our proof of  Theorem~\ref{thm:random-graph} in Appendix~\ref{app:expansion} is a completely standard variant of the well-known proof that random bipartite graphs have good vertex-expansion.

\subsection{The Sum of Squares algorithm, and pseudoexpectations}
We give a brief overview of the Sum of Squares algorithm/proof system here.  For more general background see, e.g.,~\cite{BS16}; for more details germane to this paper, see Section~\ref{sec:pE}.

The Sum of Squares (SOS) algorithm is a hierarchy of semidefinite programming-based relaxations applicable to \emph{polynomial optimization problems}; i.e., maximizing an $n$-variate polynomial subject to polynomial inequality and equality constraints. Each algorithm in the hierarchy is indexed by a parameter~$d$ known as the \emph{degree} of the relaxation. Central to the algorithm is the concept of \emph{pseudoexpectations} that describe the feasible points of the SOS algorithm of degree $d$.

\begin{definition}[Pseudoexpectations]
Given $n$ indeterminates, a degree-$d$ pseudoexpectation is a linear operator~$\pE$ on the space of real polynomials of degree at most~$d$ in those indeterminates, such that $\pE[1] = 1$.  We also generally want it to satisfy the \emph{Positive Semidefiniteness} condition: $\pE[p^2] \geq 0$ for every polynomial $p$ of degree at most $d/2$.
\end{definition}
\begin{definition}[Pseudoexpectations satisfying an identity]
	A degree-$d$ pseudoexpectation~$\pE$ is said to \emph{satisfy a polynomial identity} ``$p = 0$'' if,  for every polynomial $q$ with $\deg(p) + \deg(q) \leq d$, we have $\pE[pq] = 0$.
\end{definition}


Given a polynomial optimization problem --- say, maximizing a polynomial $p_1$ subject to constraints $\{q_i = 0 : i \in [m]\}$ --- the degree-$d$ SOS relaxation maximizes $\pE[p_1]$ over all degree-$d$ pseudoexpectations $\pE$ that satisfy the identities $\{q_i = 0 : i \in [m]\}$.  A feasibility problem, in particular, would ask if there is a degree-$d$ pseudoexpectation satisfying certain  polynomial equality constraints. These SOS relaxations can be expressed using a semidefinite program (SDP) of size $n^{O(d)}$.  The \emph{Sum of Squares algorithm} refers to (approximately) solving the SDP, which can generally be done in $n^{O(d)}$ time.

As suggested by the name, pseudoexpectations generalize the notion of expectations with respect to a \emph{probability distribution} on real indeterminate values satisfying the given polynomial identity  constraints.  In particular, if there is at least one real solution for the polynomial identity constraints, then \emph{any} probability distribution on solutions yields a valid degree-$d$ pseudoexpectation, for any~$d$. However, even when the polynomial constraints have no real solution, there may well be  pseudoexpectations of limited degree that satisfy all the constraints.  As one would expect, as the degree~$d$ grows, the pseudoexpectations resemble actual expectations more and more. Indeed, if the constraints include that the $n$ indeterminates are Boolean (``$x_i^2 = x_i$'' or ``$x_i^2 = 1$'') then every degree-$2n$ pseudoexpectation in fact corresponds to an actual distribution on real solutions.

In our context of CSPs, we can think of a constraint satisfaction problem $\calE = \{(P_i, S_i)\}$ over~$n$ Boolean variables $x_1, \dots, x_n$ as a polynomial feasibility problem, with (the arithmetization of) the constraints $P_i(x_{S_i}) = 1$ as polynomial identities.  As we know, randomly chosen CSPs with $\Delta \gg 1$ are unsatisfiable whp; to show a lower bound on the degree-$d$ SOS refutation algorithm amounts to showing that there exists a degree-$d$ pseudoexpectation that satisfies all the constraints.  In more casual terminology, we say that degree-$d$ SOS ``thinks'' that the CSP is satisfiable.


%
\subsection{Main result}
We can now describe our main result with the terminology and set-up developed above. 

\begin{theorem} [Roughly stated; cf.~Theorem~\ref{thm:main}.] \label{thm:technical-approx}
Suppose we are given an instance, with factor graph~$G$ satisfying the Plausibility Assumption, and constraint distributions~$\mu_f$ for each constraint-vertex.  Then for $D = \frac13 \price \cdot \CONSMALL$, there exists a degree-$D$ pseudoexpectation~$\pE$ on global variable assignments such that for every constraint-vertex~$f$, the following (suitably encoded) polynomial identity is satisfied: ``The marginal distribution on assignments to the variable-neighbors of~$f$ is supported within $\supp(\mu_f)$.''  (Indeed, for almost all~$f$, a stronger identity is satisfied, that the marginal simply \emph{equals}~$\mu_f$.)
\end{theorem}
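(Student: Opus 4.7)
The plan is to construct $\pE$ via local probability distributions defined on ``closures'' of small sets of variables. For any polynomial $p$ of degree at most $D$, its variable-support $V(p)$ has size at most $D$. I would first define a closure operation $\Cl(V)$ that, starting from a set $V$ of variable-vertices, iteratively absorbs any constraint-vertex $f$ that shares at least $\Tcxty$ of its variable-neighbors with the current set, together with $f$'s remaining neighbors. The Plausibility Assumption is precisely designed to make this closure small: a direct edge-counting argument on the induced subgraph of absorbed constraints yields $|\Cl(V)| \leq \CONSMALL$ whenever $|V| \leq D = \frac{1}{3}\price \cdot \CONSMALL$, and further guarantees that $G[\Cl(V)]$ is ``near-tree-like,'' with no dense subpiece. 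A crucial byproduct of the closure rule is that any constraint $f$ outside $\Cl(V)$ shares at most $\Tcxty - 1$ variables with it, so its influence on small marginals can be washed out using the $(\Tcxty - 1)$-wise uniformity of $\mu_f$.

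For each closure $T = \Cl(V)$, I would build a probability distribution $\nu_T$ on $\Omega^T$ satisfying two properties: the marginal on $N(f)$ for each $f \in T$ equals $\mu_f$, and the marginal on every subset of $\leq \Tcxty - 1$ variables is uniform. The existence of such $\nu_T$ would be established by stitching the $\mu_f$'s together along a spanning structure of $G[T]$: the near-tree-like combinatorics delivered by plausibility, combined with the fact that $(\Tcxty - 1)$-wise uniformity forces any two $\mu_f, \mu_g$ to agree (uniformly) on their small overlaps, means no inconsistency arises when we ``glue'' along shared variables. I would then define $\pE[p] := \mathbb{E}_{\nu_{\Cl(V(p))}}[p]$ for every polynomial $p$ of degree $\leq D$.

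The remaining steps are: (i) well-definedness --- showing that enlarging the closure from $\Cl(V(p))$ to $\Cl(V(q)) \supseteq \Cl(V(p))$ preserves the marginal on the old variables, which again rests on the ``small marginals are uniform'' property of $(\Tcxty - 1)$-wise uniformity; (ii) verifying the polynomial identities --- for each constraint-vertex $f$, observe $f \in \Cl(N(f))$, and by construction the marginal of $\nu_{\Cl(N(f))}$ on $N(f)$ equals $\mu_f$, except possibly for a small number of ``exceptional'' $f$ where the local closure structure forces only the strictly weaker conclusion that the marginal is supported within $\supp(\mu_f)$; (iii) positive semidefiniteness --- for any polynomial $p$ of degree $\leq D/2$, the polynomial $p^2$ has degree $\leq D$, so $\pE[p^2] = \mathbb{E}_{\nu_{\Cl(V(p^2))}}[p^2] \geq 0$, since $p^2$ is pointwise nonnegative and $\nu_{\Cl(V(p^2))}$ is a genuine probability distribution --- the usually-difficult PSD check thereby collapses to a triviality. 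The main obstacle is the unified existence-and-consistency argument for the local distributions $\nu_T$: we must show that they can be defined for every relevant $T$, that they agree coherently on nested closures, and that the resulting $\pE$ is therefore unambiguous. Both the existence and the gluing rely on a careful combinatorial induction on the near-tree-like structure of $G[\Cl(V)]$ guaranteed by the Plausibility Assumption, together with the marginal-uniformity supplied by $(\Tcxty - 1)$-wise uniform constraint distributions.
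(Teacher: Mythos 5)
The fatal flaw is step (iii). You claim that positive semidefiniteness ``collapses to a triviality'' because $\pE[p^2] = \E_{\nu_{\Cl(V(p^2))}}[p^2] \geq 0$ whenever $\deg(p) \leq D/2$. But this identity requires $\Cl(V(p^2))$ to be a \emph{small} subgraph supporting a genuine local distribution, and nothing constrains $|V(p^2)|$ to be small: a polynomial of degree $D/2 \geq 1$ can involve all $n$ variables (e.g.\ $p = x_1 + \cdots + x_n$). Since the instance is globally unsatisfiable with high probability, there is no consistent distribution on the whole factor graph, and the closure of a linear-size set of variables has no reason to be small (indeed the Plausibility Assumption only controls subgraphs with at most $2\CONSMALL$ constraints). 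What actually holds for a closure-based $\pE$ is merely the Sherali--Adams local positivity condition: $\pE[p] \geq 0$ for nonnegative $p$ depending on few variables. The jump from that to full PSD --- $\pE[p^2]\geq 0$ for arbitrary $p$ of degree $\leq D/2$, including those of global support --- is precisely the hard part of the whole argument. The paper devotes all of Section~\ref{sec:psd-pf} to it, via a modified Gram--Schmidt procedure: one orthogonalizes the monomial basis under $\pE[\cdot]$ and shows inductively that each resulting basis function $y_S$ is still supported on a small ``witness'' subgraph, so that Fact~\ref{fact:gs-local} combined with local positivity yields PSD. Nothing in your proposal replaces this; you have essentially only reproduced the (correct but standard) $\SA$ lower bound skeleton.

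A secondary issue concerns your closure definition. You define $\Cl(V)$ by iteratively absorbing every constraint with $\geq\Tcxty$ neighbors in the current set, together with that constraint's remaining neighbors. This is the Barak--Chan--Kothari-style closure, which the paper explicitly rejects because it ``degenerates into the set of all variables with high probability'' once $\Delta = \omega(1)$. The paper's closure $\cl(S)$ is instead defined as the union of all \smallish $S$-closed \subfactors, a subtly different object that need not even contain $S$ among its vertices. Without this repair, step (i) of your plan (smallness and well-definedness of the local distributions) also breaks in the superlinear-density regime, which is the regime of primary interest. Your outline of step (ii), and the general idea of defining $\pE$ via local planted distributions on closures, does match the paper's construction in Section~\ref{sec:PE}, so the broad architecture is recognizable --- but the claim that PSD is free is incorrect and is the central gap.
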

In particular, if our instance comes from an actual random CSP with predicates, where for each~$f$ the distribution $\mu_f$ is supported on satisfying assignments for the predicate at~$f$, then the degree-$D$ SOS algorithm ``thinks'' that the CSP is completely satisfiable. This is of course despite the fact that, whp, the CSP is not satisfiable.


Given Proposition~\ref{prop:randgraph} and Theorem~\ref{thm:technical-approx}, we can now point out how the constraint density vs.~SOS-degree tradeoff arises in our Theorem~\ref{thm:main2}.  For $\CSP(P^{\pm})$ with $\cmplx(P) = \tau$ and $\Delta n$ random constraints, we get an SOS lower bound for degree roughly $\price \cdot \frac{n}{\Delta^{2/(\Tcxty - 2 - \price)}}$.  The best choice of $\price$ is roughly $1/\log \Delta$, and this indeed yields a degree bound of $\wt{\Omega}\left(\frac{n}{\Delta^{2/(\cmplx(P)-2)}}\right)$.  More precise details of parameter-setting are given in Section~\ref{sec:parameters}.

\section{Sketch of our techniques}
Throughout this section, we describe our techniques in the context of CSPs on $n$ Boolean variables and $k$-ary predicates that are $(\tau-1)$-wise uniform. As stated before, almost all of our ideas are present in this special case. Our goal is to build a degree-$d$ pseudoexpectation operator $\pE$ as described in Theorem~\ref{thm:technical-approx}.
\subsection{Constructing the pseudoexpectation}
As in all previous works on CSP lower bounds for hierarchies, we use a variant of the natural pseudoexpectation introduced by Benabbas~et~al.~\cite{BGMT12}. This pseudoexpectation is always defined in terms of a certain ``closure'' operator on instance graphs; previous works have used slightly different notions of ``closure''.  Our method introduces yet another definition of closure that we believe is the ``right'' one; at the very least, it seems to be precisely the right definition for facilitating our proofs.



\subsubsection{Closures}
We can describe a pseudoexpectation by prescribing its values on the basis of monomials of degree at most~$d$.  We work with the Fourier basis; i.e., $\pm 1$ notation.

In the context of CSPs, a natural way to come up with a pseudoexpectation is via the idea of \emph{local distributions.} If $\pE$ is a degree-$d$ pseudoexpectation, then for every collection $S$ of at most $d/2$ variables, $\pE$ agrees with the expectation of an actual probability distribution. In particular, the pseudoexpectation of a monomial $x^S := \prod_{i \in S} x_i$ for $S \subseteq [n]$ (or indeed any function on $S$) can then be described as the expectation of $x^S$ with respect to the local distribution $\eta_S$ that $\pE$ induces on the set $S$ of variables. For such a definition to make sense, the local distributions must satisfy \emph{consistency}: the pseudoexpectation of $x^T$ should equal the expectation of $x^T$ with respect to the local distribution $\eta_S$ for any $S$ that includes $T$ and is of size at most $d$.


We would like to choose local distributions $\eta_S$ that are supported on satisfying assignments of all constraints completely included in $S$ (we call these the constraints covered by $S$). At first blush, we could choose the uniform distribution over the set of satisfying assignments for the constraints covered by $S$. However, this choice doesn't satisfy the consistency constraints. The $t$-wise uniform distributions that are supported on satisfying assignments of the predicate $P$ now come to our rescue: if we obtain a local probability distribution that induces $\mu$ on the literals of any constraint in our CSP instance, we should intuitively expect be in good shape because $t$-wise uniformity roughly guarantees that any constraint that intersects $S$ in $t$ or less variables has a satisfying assignment that agrees with the assignment sampled for $S$. A natural choice is to define the probability of an assignment to $S$ to be the product of the probabilities (with respect to~$\mu$) of the partial assignments corresponding to the constraints covered by $S$. This doesn't work as-is, either: there could be constraints  that intersect~$S$ in many variables and yet are not completely contained inside~$S$. A sample from $\eta_S$ thus might already force such a constraint to not be satisfied.

To correct for this, we want to collect all such ``dependencies'' before choosing the local distribution. Benabbas~et~al.~\cite{BGMT12} make this idea precise by defining a notion of \emph{closure} for a set of variables $S$: intuitively, these are all the variables that one should care about when defining the local distribution on $S$. Concretely, their closure maps $S$ into a larger set $S'$ such that for any $T \supseteq S'$, the marginal of $\eta_T$ on $S$ is equal to the marginal of $\eta_{S'}$ on $S$. We then choose $\eta_{S'}$ to be the local distribution on $S'$ and define $\eta_S$ to be the marginal of $\eta_{S'}$ on $S$. For such an effort to be feasible, $S'$ shouldn't be much bigger than $S$: if in the extreme case the closure happened to be the whole set of variables $[n]$, we cannot define a distribution on satisfying assignments of all constraints covered by $S'$.


The closure of Benabbas et al.~\cite{BGMT12} guarantees local consistency as we wanted.  Local consistency is all that is required for showing a Sherali--Adams lower bound and is equivalent to the following local positivity condition, which is weaker than positive semidefiniteness: $\pE[p] \geq 0$ for $p$ for every truly nonnegative polynomial~$p$ depending on at most~$d$ variables.  However, when trying to show that the more global $\pE[p^2]$ positive-semidefiniteness condition holds, the \cite{BGMT12} construction seems hard to analyze.

To address this problem, Barak, Chan, and Kothari \cite{BCK15} introduced a simpler variant of the \cite{BGMT12} closure in order to show that the $\pE$ defined above satisfies the positive-semidefiniteness condition for certain pruned random instances of the CSP$(P^{\pm})$, when $P$ supports a pairwise-uniform distribution. However, their definition of closure degenerates into the set of all variables with high probability when the random CSP has $\density = \omega(1)$.

\paragraph{Our closure.} One of the main innovations in our work is the introduction of a new, simpler definition of closure that plays a key role in our proof of positive semidefiniteness and gives a definition of $\pE$ that works even when the number of constraints is superlinear in $n$. In addition, our definition of closure enables us to extend our results to $\delta$-refutation. 

Our closure for a set of variables $S$ is a subgraph of the factor graph of the CSP instance, including both variables and constraints. We think of the closure of $S$ as being the set of variables and constraints that ``matter" when defining the distribution $\eta_S$.  Given that a predicate $P$ supports a $(\tau-1)$-wise uniform distribution, any constraint that affects $\eta_S$ must have at least $\tau-1$ variables in $S$.  Otherwise, $(\tau-1)$-wise uniformity implies that we could ignore such a constraint without changing $\eta_S$.  Any variable $v$ not in $S$ that occurs in only one constraint isn't necessary for defining~$\eta_S$, either.  We could sum $\eta_S$ over the two assignments to $v$ to get a new distribution that no longer depends on $v$.  This leads to a natural choice of the closure as the union of all small subgraphs of the factor graph such that each constraint contains at least $\tau-1$ variables and each variable outside of $S$ occurs in at least two constraints. For a formal definition, see Section \ref{sec:PE}.

\subsection{Proving positivity}
Once we have the definition of the pseudoexpectation, we get to the main challenge in showing any SOS lower bound: arguing positive-semidefiniteness of the $\pE$ constructed. The high level idea in our analysis builds on the work of Barak, Chan and Kothari~\cite{BCK15}. Their idea of proving positive-semidefiniteness is simple. They begin by observing that it suffices to verify positive-semidefiniteness for a basis that satisfies \emph{orthogonality} under $\pE[\cdot]$, meaning, the pseudoexpectation of the product of any distinct pair of basis polynomials is~ $0$.

\begin{fact} \label{fact:gs-local}
Suppose there exists a basis $f_1,f_2,\ldots$ for degree-$d$ polynomials such that the following two properties hold:
\begin{enumerate}
\item $\pE[f_i f_j] = 0$ for all $i \ne j$. \label{enum:orth}
\item $\pE[f_i^2] \geq 0$ for all $i$. \label{enum:pos}
\end{enumerate}
Then $\pE[g^2] \geq 0$ for all $g$ of degree at most $d$.
\end{fact}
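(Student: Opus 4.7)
The plan is to use the given basis to expand an arbitrary polynomial $g$ of degree at most $d$, and then exploit orthogonality together with linearity of $\pE$ to reduce the quadratic form $\pE[g^2]$ to a sum of non-negative terms. This is essentially a Gram--Schmidt style argument, and it is the reason one prefers orthogonal bases when checking positive semidefiniteness.

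Concretely, I would proceed as follows. First, since $f_1, f_2, \ldots$ is a basis for the space of polynomials of degree at most $d$, I can write $g = \sum_i c_i f_i$ for some real scalars $c_i$, where only finitely many $c_i$ are nonzero. Squaring gives
\[
g^2 = \sum_{i} c_i^2 f_i^2 + \sum_{i \ne j} c_i c_j f_i f_j.
\]
Each term here has degree at most $2d$, which is within the range on which $\pE$ is defined (assuming $\pE$ has degree at least $2d$, which is implicit in the statement that positive semidefiniteness is being checked for polynomials of degree $d$).

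Next, I would apply linearity of $\pE$ to obtain
\[
\pE[g^2] \;=\; \sum_i c_i^2 \, \pE[f_i^2] \;+\; \sum_{i \ne j} c_i c_j \, \pE[f_i f_j].
\]
By hypothesis~\ref{enum:orth}, every cross term $\pE[f_i f_j]$ vanishes for $i \ne j$, and by hypothesis~\ref{enum:pos}, each $\pE[f_i^2]$ is non-negative. Since the coefficients $c_i^2$ are also non-negative, this gives $\pE[g^2] \geq 0$, completing the argument.

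There is no serious obstacle here; the only subtlety is the implicit degree-bookkeeping, namely that the basis in question spans degree-$\leq d$ polynomials so that $g^2$ has degree $\leq 2d$ and thus lies in the domain on which the degree condition $\pE[p^2] \geq 0$ is being verified. The real conceptual content of the fact is that it reduces the task of proving global positive semidefiniteness of $\pE$ (which involves an uncountable family of inequalities, one per polynomial $g$) to checking positivity only on the finitely many diagonal entries $\pE[f_i^2]$ of an orthogonal basis. This reduction is what makes the subsequent search for a good basis (rather than directly manipulating arbitrary $g$) the right strategy for proving SOS lower bounds.
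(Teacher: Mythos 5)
Your proof is correct and is exactly the paper's argument: expand $g$ in the basis, apply linearity, kill the cross terms by orthogonality, and observe the remaining diagonal terms are products of $c_i^2 \geq 0$ with $\pE[f_i^2] \geq 0$. The additional remarks about degree-bookkeeping and the conceptual role of the reduction are accurate but beyond what the one-line proof in the paper states.
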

\begin{proof}
Write $g$ as $\sum_i a_i f_i$.  Then
$\displaystyle
\pE[g^2] = \sum_{i,j} a_i a_j \pE[f_i f_j] = \sum_i a_i^2 \pE[f_i^2] \geq 0. \qedhere
$
\end{proof}
Notice that the standard Fourier monomial basis guarantees us positivity (since $\pE$ satisfies the local Sherali--Adams positivity condition by construction). However, it is not orthogonal in general. How can we construct such a basis? One way to construct a basis that is orthogonal under $\pE[\cdot]$ is to perform the Gram--Schmidt process on, say, the monomial basis $1,x_1,x_2,\ldots,x_1x_2,\ldots$ to get a new basis $f_1,f_2,\dots$.  Now, Property~\ref{enum:orth} above holds for this new basis by construction. However, the Gram--Schmidt process is highly sequential and, in particular, the basis function towards the end could depend on all~$n$ variables. Thus, we cannot appeal to local positivity of $\pE$ in order to argue positive-semidefiniteness of the newly generated basis. It appears that we have made no progress, ensuring orthogonality but potentially losing positivity.

The idea of Barak~et~al.~to escape this pitfall is to show that \emph{local orthogonalization} is enough.  Before the start of the Gram--Schmidt process, we fix an order on basis vectors.   In each step of the process, one orthogonalizes a basis function against all previous basis functions in this order by subtracting off its projection onto their span.  Barak~et~al.~analyze the variant of this process in which one orthogonalizes a basis function~$x^S$ by subtracting off its projection onto the span of all basis functions the precede it in the order \emph{and} are functions of variables that lie in a small ``ball'' around $S$ in the factor graph $G$ of the instance. This lets them ensure that the new basis satisfies positivity (since it now depends only on a small number of variables, one can appeal to the local positivity of $\pE$), and they show that this relaxed variant of the Gram--Schmidt process still ensures orthogonality.

Their proof, however, is highly combinatorial and requires various assumptions on the factor graph of the instance that intuitively shouldn't matter. In particular, they need that the factor graph have no small cycles (girth should be logarithmic): while this can be ensured by pruning $o(n)$ fraction of the constraints in a random instance with $\Theta(n)$ constraints, this proof strategy breaks down for super-linear number of constraints .


\paragraph{Our approach} Our main idea simplifies the analysis without requiring the assumptions of \cite{BCK15} and yields tight results. It also naturally extends to the case of $t$-wise uniform predicates and further to $\delta$-approximate $t$-wise uniform predicates. We next describe our key technical ideas that makes this possible.

At a high level, our argument drops the \emph{local orthogonalization} strategy of Barak et al.~\cite{BCK15} and instead runs the Gram--Schmidt procedure ``as-is''. Thus orthogonality of the resulting basis functions is immediate, and we need only show positive-semidefiniteness. We show that for any sequential ordering of the basis monomials in the Gram--Schmidt procedure, so long as it is of increasing degree, whenever we orthogonalize a monomial~$x^S$, the result basis function depends only on a small number of variables.

To see why such an assertion might be plausible, let us consider the task of orthogonalizing the singletons. The monomial basis may not orthogonal under $\pE[\cdot]$; e.g., consider the following $3$-XOR instance:
\begin{align*}
x_1 x_2 x_3 &= 1 & y_1 y_2 y_3 &= -1 \\
x_2 x_4 x_5 &= 1 & y_2 y_4 y_5 &= -1 \\
x_4 x_5 x_6 &= 1 & y_4 y_5 y_6 &= -1 \\
x_6 x_7 x_8 &= 1 & y_6 y_7 y_8 &= -1 \\
x_3 x_7 x_8 &= 1 & y_3 y_7 y_8 &= -1
\end{align*}
Observe that $x_1$ and $y_1$ each appear in exactly one constraint and all other variables each occur in exactly two constraints.  Multiplying each block of constraints together, we see that if $\pE[\cdot]$ satisfies all constraints then  $\pE[x_1] = 1$ and $\pE[y_1] = -1$.  So neither $x_1$ nor $y_1$ are orthogonal to $1$.  Since the two sets of equations are disjoint, we also know that $\pE[x_1 y_1] = -1$, so $x_1$ and $y_1$ are not orthogonal.  We note that many such blocks may occur in a random instance with $m \gg n^{1.4}$ constraints.  Let's try to understand  what happens when we run the Gram--Schmidt procedure on this basis. Consider an instance consisting of $n$ such disjoint blocks of $5$ constraints on $8n$ variables.  Let $x_{i1}$ be the variables that is fixed in block $i$.  Then every $x_{i1}$ is not orthogonal to $1$ and every pair $x_{i1}, x_{j1}$ is not orthogonal.  Intuitively, the variables $x_{i1}, x_{j1}$ behave independently, but are biased.  To fix this bias, consider the functions $\overline{x}_{i1}$ (where we use the notation $\overline{z} \coloneqq z - \pE[z]$).  Now we have that $\overline{x}_{i1}$ is orthogonal to $1$ and, by independence of the blocks, $\pE[\overline{x}_{i1} \cdot \overline{x}_{j1}] = 0$ for all $i,j$.


Ideally, we might hope this this new basis satisfies orthogonality when we move to degree~$2$, as well. Unfortunately, in general the basis $\{1,\overline{x}_1, \overline{x}_2, \ldots, \overline{x}_n, \overline{x_1 x_2}, \ldots\}$ again need not be orthogonal.  Consider a $3$-XOR instance with $n$ constraints $x_0 x_i y_i = b_i$ for $i \in [n]$; call this an $n$-star.  Random instances contain stars of superconstant size with high probability.  For all $\binom{n}{2}$ pairs $i,j$, it holds that $\overline{x_i y_i}$ and $\overline{x_j y_j}$ are not orthogonal under $\pE[\cdot]$:
\[
\pE[\overline{x_i y_i} \cdot \overline{x_j y_j}] = \pE[x_i y_i \cdot x_j y_j] - \pE[x_i y_i] \pE[x_j y_j] = b_i b_j - 0  = b_i b_j.
\]
Instead, consider the basis
\[
\wh{1} = 1, ~\wh{x_0} = x_0, ~\wh{x_1} = x_1, \,\ldots\,, ~\wh{y_1}= y_1, ~\wh{y_2} = y_2, \,\ldots\,, ~\wh{x_1 y_1} = x_1 y_1 - b_1 x_0, ~\wh{x_2 y_2} = x_2 y_2 - b_2 x_0, \,\ldots
\]
A simple calculation shows that these basis functions are orthogonal.  Each basis function depends on at most $3$ variables, so the degree-$3$ Sherali-Adams positivity condition and Fact~\ref{fact:gs-local} imply that degree-$2$ positive semidefiniteness holds.  We give a proof of orthogonality of $\wh{x_i y_i}$ and $\wh{x_j y_j}$ that illustrates the underlying intuition.  Observe that $\wh{x_i y_i}$ and $\wh{x_j y_j}$ are independent \emph{conditioned on $x_0$} for all $i \ne j$, and we can write
\begin{align*}
\pE[\wh{x_i y_i} \cdot \wh{x_j y_j}] &= \E[\wh{x_i y_i} \cdot \wh{x_j y_j}] \qquad &\text{($\pE[\cdot]$ is a valid expectation on small sets)} \\
&= \E[\E[\wh{x_i y_i} \cdot \wh{x_j y_j} | x_0]] \qquad &\text{(law of total expectation)} \\
&= \E[\E[\wh{x_i y_i} | x_0] \cdot \E[\wh{x_j y_j} | x_0]] \qquad &\text{(conditional independence of $\wh{x_i y_i}$ and $\wh{x_j y_j}$ given $x_0$)}.
\end{align*}
Next, note that
\[
\E[\wh{x_i y_i}| x_0=b] = \frac{1}{\Pr[x_0=b]}\E[\wh{x_i y_i} \cdot \indic{\{x_0 = b\}}(x_0)],
\]
where $\indic{\{x_0 = b\}}$ is the indicator function for $x_0 = b$.  Since we have orthogonalized $\wh{x_i y_i}$ against all degree-$1$ basis functions and $\indic{\{x_0 = b\}}$ is a degree-$1$ polynomial, this expression is equal to $0$.  Therefore, $\E[\wh{x_i y_i} | x_0] = 0$ and $\wh{x_i y_i}$ and $\wh{x_j y_j}$ are orthogonal.  In this case, $x_i y_i$ and $x_j y_j$ are correlated because they are connected by $x_0$.  After subtracting off their correlation with $x_0$, the resulting functions are orthogonal and no longer correlated.

Let us now formalize this intuition and generalize it to higher degree. At a high level, our idea is to show that the Gram--Schmidt process produces a basis such that each new basis element depends only on a small number of variables. Let $y_S$ be the result of applying the Gram--Schmidt process to $x^S$.  If $y_T$ appears in $y_S$ with a nonzero coefficient, then it must be the case that $\pE[x^S \cdot y_T] \ne 0$.  That is, $x^S$ and $y_T$ are correlated under $\pE[\cdot]$.  We show that this correlation is ``witnessed'' by some small, ``dense" subgraph containing many constraints covered by few variables.  If $y_S$ has many variables in its support, then there must be many such subgraphs.  We show that the union of these subgraphs is dense enough to be ``implausible''. This means that $y_S$ cannot have too many variables in its support.

Our witness can be seen as a generalization of the connected sets in the degree-$2$ case discussed above.  Call two sets of vertices $c$-connected if removing any set of $c-1$ vertices cannot disconnect them.  In the degree-$1$ case, nonzero correlation between $x^S$ and $y_T$ with $|S| = |T| = 1$ is witnessed by a small, dense, connected ($1$-connected) subgraph.  In the degree-$2$ case after orthogonalizing against degree-$1$ terms, we expect based on the star example that if $S$ and $T$ are only $1$-connected, then $x^S$ and $y_T$ will no longer be correlated.  We show that nonzero correlation between $x^S$ and~$y_T$ with $|S| = |T| = 2$ is then witnessed by a small, dense, $2$-connected subgraph.  In general, we show that nonzero correlation between $x^S$ and $y_T$ with $|S| = |T| = d$ is witnessed by a small, dense, $d$-connected subgraph.  This stronger connectivity requirement enables us to show that these witness subgraphs and their unions are dense enough to be implausible if the support of a basis function grows too large. For details of this argument, see Section~\ref{sec:psd-pf}.

\section{Forbidden subgraphs for the factor graph} \label{sec:plausible}

Let us make a few definitions concerning factor graphs, after which we will elaborate on the ``Plausibility Assumption''.
\begin{definition}[Subgraphs]
    We call $H$ a \emph{subgraph} of~$G$ if it is an \emph{edge-induced} subgraph; i.e., $H = G[A]$ for some subset~$A$ of the edges of~$G$.  We explicitly allow $A = \emptyset$ and hence $H = \emptyset$. The subgraph~$H$ need not be connected.
\end{definition}
\begin{notation}
    For $H$ a subgraph, we write $\vbls(H)$ for the set of variables appearing in~$H$, $\cons(H)$ for the set of constraints appearing in~$H$, and $\edges(H)$ for the set of edges appearing in~$H$.
\end{notation}
\begin{notation}
    Given $f \in \cons(H)$, we write $N_H(f) = \{i \in \vbls(H) : (f,i) \in \edges(H)\}$.  Note that this is \emph{not} necessarily the same thing as $N(f) \cap \vbls(H)$.
\end{notation}
We will typically measure the ``size'' of a subgraph by the number of constraints in it:
\begin{definition}[Small subgraphs]
    We say that subgraph $H$ is \emph{\smallish} if $|\cons(H)| \leq \CONSMALL$.
\end{definition}

Now regarding the Plausibility Assumption, for intuition's sake let us suppose we are concerned with weak refutation and degree-$O(1)$ SOS, as in Corollary~\ref{cor:noSDP}.  Thus we have some $k$-ary predicate~$P$ with $\cmplx(P) = \tau$, and we are selecting a random CSP with slightly fewer than $n^{\tau/2}$ constraints; say $m = n^{(\tau-\price)/2}$.  What does a random factor graph look like in this case? Which small subgraphs may appear?  A quick-and-dirty method to analyze this is as follows.  Consider the fixed  small subgraph in Figure~\ref{fig:1}; call it~$H$.
\myfig{.25}{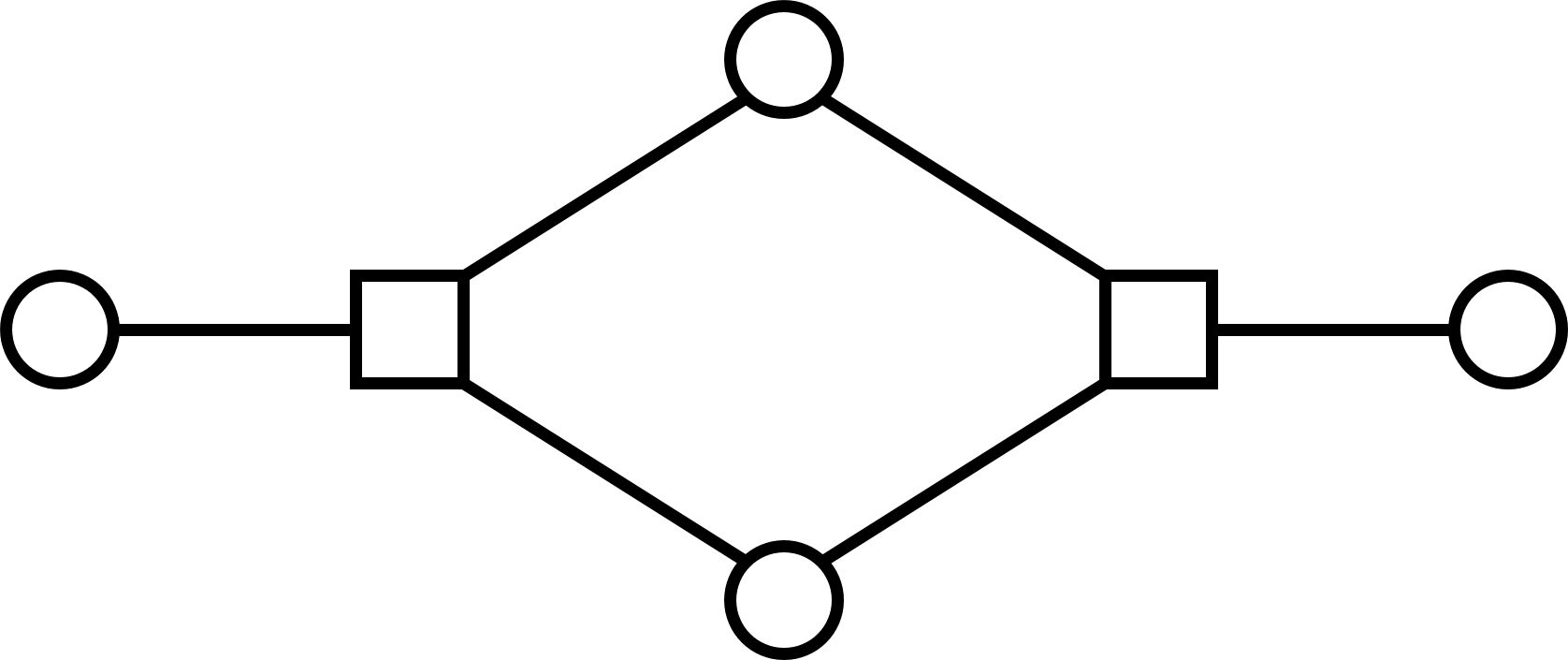}{An example small subgraph.  Constraint-vertices are squares, variable-vertices are circles.}{fig:1}
What is the expected number of copies of $H$ in a  random factor graph~$G$ with $n$ variable-vertices and $m = n^{(\tau-\price)/2}$ constraint-vertices?  There are $\binom{m}{2} \approx m^2$ choices for $H$'s $2$ constraint-vertices and $\binom{n}{4} \approx n^4$ choices for $H$'s $4$ variable-vertices.  Thinking of each constraint-vertex as choosing $k = O(1)$ random neighbors, the chance that the $6$ edges of~$H$  show up is roughly~$n^{-6}$.  Thus, very roughly, we expect about $m^{2} n^{4} n^{-6} = n^{2 \cdot (\tau - \price)/2 + (4-6)}$ copies of~$H$ in a random~$G$.  Thus copies of~$H$ ``plausibly'' show up if and only $2 \cdot (\tau - \price)/2 + (4-6) \geq 0$; i.e., if and only if $\tau \geq 2+\price$.  Since $\tau \geq 3$ always, this means we should certainly expect copies of~$H$ in~$G$.

For a general subgraph~$H$ with $c = |\cons(H)|$, $v = |\vbls(H)|$, $e = |\edges(H)|$,
\begin{equation} \label{eqn:plaus}
	\E[\#\text{ copies of } H] \approx m^{c}n^{v}n^{-e} = n^{c \cdot (\tau - \price)/2 + v - e} \ \implies\  H \text{ ``plausibly occurs'' iff } c \cdot (\tau - \price)/2 + (v - e) \geq 0.
\end{equation}
This inequality is precisely the one occurring in the Plausibility Assumption from Section~\ref{sec:plausible-fg}.

Despite the simple form of the inequality, we will find it helpful to view it in a different way.  For reasons that will become clear in Section~\ref{sec:PE}, we will be concerned almost exclusively with subgraphs of~$G$ in which all constraint-vertices have degree at least~$\Tcxty$:
\begin{definition}[$\tau$-subgraphs]
    Let $H$ be a subgraph.  We will call~$H$ a \emph{\subfactor} if  every constraint-vertex in~$H$ has degree at least~$\Tcxty$ within~$H$; i.e., $|N_H(f)| \geq \Tcxty$ for all $f \in \cons(H)$.
\end{definition}
\begin{remark}
    The empty subgraph~$\emptyset$ is always trivially a \subfactor.  Also, if $H$ and $H'$ are \subfactors then so is $H \cup H'$.
\end{remark}
\begin{definition}[Leaf vertices and interior vertices]
    Given a subgraph~$H$, we classify the variable-vertices in $H$ as either \emph{leaf} or \emph{interior} depending on whether they have degree $1$ or at least~$2$.  (Since $H$ is an edge-induced subgraph, it does not have any isolated vertices.)
\end{definition}

For \subfactors, there is a different way to view the ``plausibility inequality'' that will be more useful for us.  We define it with some ``accounting'' terminology.
\begin{definition}[Credit, debit, excess, revenue, cost, income]
    Let $H$ be a \subfactor. For the purposes of this definition, consider each of its edges to be two directed edges.
    \begin{itemize}
        \item For each variable-vertex, we assign it a \emph{credit} of~$1$ if it is a leaf vertex.  We'll write $\ell$ for the total credits.
        \item For each variable-vertex, any out-edges in excess of~$2$ are called \emph{excess}, and we assign a \emph{debit} for each.  We'll write $e_v$ for the total number of these.
        \item For each constraint-vertex, any out-edges in excess of~$\Tcxty$ are called \emph{excess}, and we assign a debit for each.  We'll write $e_c$ for the total number of these, and $e = e_c + e_v$ for the total debit (number of excess edges).
        \item The sum of credits minus the sum of debits, $\ell - e$, is called the \emph{revenue}.  We denote it by~$R(H)$.
        \item Each constraint-vertex has a \emph{cost} of $\price$. We write $C(H) = \price \cdot \Size{H}$ for the total \emph{cost}.
        \item The \emph{income} is $I(H) = R(H) - C(H)$.
    \end{itemize}
\end{definition}
\begin{definition}[Plausible \subfactors]
    Let $H$ be a \subfactor.  We say that $H$  is \emph{plausible} if~$I(H) \geq 0$.
\end{definition}
\begin{remark}
	$H$ being plausible implies (indeed, is equivalent to) $\Size{H} \leq \frac{1}{\price} \cdot R(H)$. Thus controlling a subgraph's revenue is equivalent to controlling its size.
\end{remark}

The next lemma implies that the inequality $I(H) \geq 0$ is the same as the inequality appearing in the Plausibility Assumption and in~\eqref{eqn:plaus}.
\begin{lemma}                                     \label{lem:alternate-accounting}
    Let $H$ be a \subfactor with $c = \Size{H}$, $v = |\vbls(H)|$, $e = |\edges(H)|$, and $I = I(H)$.  Then $e = \frac{\Tcxty-\price}{2}\cdot c + v - \frac{I}{2}$.
\end{lemma}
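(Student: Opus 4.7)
The plan is a direct edge-counting / double-counting computation that turns the definition of $I(H)$ into the claimed relation. I will exploit the bipartite structure of the factor graph together with the $\tau$-subgraph hypothesis.

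First, since $H$ is bipartite with every edge having one variable endpoint and one constraint endpoint, I would begin by observing that summing (undirected) degrees separately on each side gives $\sum_{i \in \vbls(H)} \deg_H(i) = e = \sum_{f \in \cons(H)} \deg_H(f)$. This is the basic identity I will apply twice.

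Second, I would evaluate the excess contributions separately. On the variable side, $e_v = \sum_{i \in \vbls(H)} \max(0, \deg_H(i) - 2) = e - \sum_{i} \min(\deg_H(i), 2)$. Since leaf vertices contribute $1$ and interior vertices contribute $2$ to the second sum, and since there are $\ell$ leaves and $v - \ell$ interior vertices, this simplifies to $e_v = e - \ell - 2(v - \ell) = e - 2v + \ell$. On the constraint side, the $\tau$-subgraph hypothesis says every $f \in \cons(H)$ has $\deg_H(f) \geq \tau$, so $\min(\deg_H(f), \tau) = \tau$ for each $f$, giving $e_c = e - \tau c$.

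Third, I would plug these into the definitions. The revenue is
\[
R(H) = \ell - (e_v + e_c) = \ell - (e - 2v + \ell) - (e - \tau c) = \tau c + 2v - 2e,
\]
and subtracting the cost $C(H) = \zeta c$ yields $I(H) = (\tau - \zeta)c + 2v - 2e$. Rearranging for $e$ gives exactly $e = \frac{\tau - \zeta}{2} c + v - \frac{I}{2}$, as required. There is no real obstacle here; the only thing to be slightly careful about is the treatment of the "two directed edges" convention in the definitions of $\ell$, $e_v$, $e_c$, but once one observes that "out-edges" of a vertex is just its undirected degree in the bipartite subgraph, the computation is mechanical.
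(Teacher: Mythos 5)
Your proof is correct and takes essentially the same approach as the paper: a degree-sum / double-counting argument expressing $e_v$, $e_c$, $\ell$ in terms of $v$, $e$, $c$ and rearranging. The paper counts directed edges from each side to get $2e = \ell + 2(v-\ell) + e_v + \tau c + e_c$ directly; your version spells out $e_v = e - 2v + \ell$ and $e_c = e - \tau c$ first, which is the same computation and also has the advantage of avoiding the paper's unfortunate reuse of the letter $e$ for both $|\edges(H)|$ and the total excess.
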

\begin{proof}
    We count the number of ``directed edges'' in~$H$. Counting those coming out of variable-vertices, the $\ell$ leaf vertices contribute~$1$ each, and the $v-\ell$ interior vertices contribute $2(v-\ell) + e_v$. Counting the directed edges coming out of constraint-vertices yields $\Tcxty c + e_c$.  Thus
    \[
        \text{\#\ directed edges} = 2e = \ell + 2(v-\ell) + e_v + \Tcxty c + e_c = \Tcxty c + 2v - (\ell -  e) = \Tcxty c + 2v - (\price c+I),
    \]
    since $\ell - e = R(H) = C(H) + I(H)$.  The claim follows.
\end{proof}

In light of this, we may restate the Plausibility Assumption:
\paragraph{Plausibility Assumption, Restated.} \emph{Henceforth we assume the factor graph~$G$ has the following property: All \subfactors~$H$ of~$G$ with $\Size{H} \leq 2 \cdot \CONSMALL$ are plausible.}\\

As mentioned earlier, for an appropriate choice of $\CONSMALL$, the Plausibility Assumption holds for a random instance. More precisely, in  Appendix~\ref{app:expansion} we prove the below theorem.  The reader is advised that in this theorem, the first claim is the main one; it is used to show our Theorem~\ref{thm:main2} concerning weak refutation.  The second claim (``Moreover\dots'') is a technical variant needed to extend our results to give Theorem~\ref{thm:main-intro} concerning $\delta$-refutation.
\begin{theorem}     \label{thm:random-graph}
    Let $\Tone = \Tcxty - 2 \geq 1$.   Fix $0 < \price \leq .99 \Tone$,  $0 < \beta < \frac12$. Then except with probability at most $\beta$, when $\bG$~is a random instance with $m = \density n$ constraints, the Plausibility Assumption holds provided
    \[
        \CONSMALL \leq \gamma \cdot \frac{n}{\density^{2/(\Tone-\price)}},
    \]
    where $\gamma = \frac{1}{\maxarity} \left(\frac{\beta^{1/\Tone}}{2^{\maxarity/\Tone}}\right)^{O(1)}$.   Moreover, assuming $\price < 1$, except with probability at most~$\beta$ we have
	\[
	\#\{\textnormal{nonempty \subfactors~$H$ with $\cons(H) \leq 2 \cdot \CONSMALL$} : I(H) \leq \Tcxty - 1\} \leq \density n^{\frac{1+\price}{2}}.
    \]
\end{theorem}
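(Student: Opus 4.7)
The plan is to establish Theorem~\ref{thm:random-graph} via a first-moment computation, bounding the expected number of ``bad'' \subfactor shapes in the random factor graph $\bG$, and then applying Markov's inequality. The guiding heuristic is~\eqref{eqn:plaus}: for a \subfactor with $c$ constraint-vertices, $v$ variable-vertices, and $e$ edges, the expected number of copies in $\bG$ behaves like $m^c n^{v-e} = \density^c n^{c+v-e}$ times combinatorial multiplicities, so the implausibility inequality $c+v-e<-\alpha c$ (with $\alpha := (\Tone-\price)/2$, by Lemma~\ref{lem:alternate-accounting}) forces the expected count to decay in $c$.

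For each fixed triple $(c,v,e)$, I would bound $\E[\#\,\text{copies in }\bG]$ by choosing the constraint-vertices ($\binom{m}{c}$ ways), the variable-vertices ($\binom{n}{v}$ ways), and the bipartite adjacency (summed over degree sequences $(d_f)_{f}$ with $d_f \in [\Tcxty, \maxarity]$ and $\sum_f d_f = e$), multiplied by the probability that the prescribed edges are realized.  Since each constraint-vertex's scope is a uniformly random $\maxarity$-subset of $[n]$, the probability that a given constraint covers a specified $d_f$-set of variables is $\binom{\maxarity}{d_f}/\binom{n}{d_f}\leq(C\maxarity/n)^{d_f}$; multiplying across constraints gives an edge-probability factor of $(C\maxarity/n)^e$.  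The combinatorial bookkeeping---summing $\prod_f \binom{v}{d_f}$ over admissible degree sequences and combining with $\binom{n}{v}$---is the most delicate step, since a naive use of $\binom{cv}{e}$ introduces a spurious $v^{e-v}$ factor; instead one wants a refined enumeration yielding an upper bound of the form $\E[\#]\leq g(c,v,e,\maxarity)\,\density^c\,n^{c+v-e}$ where $g$ depends on $\maxarity$ only through factors that can be absorbed using $e\leq\maxarity c$.

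I would then sum over bad shapes.  By Lemma~\ref{lem:alternate-accounting}, implausibility $I(H)<0$ forces $c+v-e\leq -\alpha c$ (integer-valued), contributing a geometric factor $O(n^{-\alpha c})$ upon summing over $c+v-e$; the relaxed condition $I(H)\leq\Tcxty-1$ forces $c+v-e\leq -\alpha c + (\Tcxty-1)/2$, giving $O(n^{-\alpha c+(\Tcxty-1)/2})$.  For each fixed $c$ there are $O(\maxarity c)$ admissible $(v,e)$ pairs, and single-constraint \subfactors ($c=1$) contribute nothing, because they have $I(H)=\Tcxty-\price>0$ (and also $>\Tcxty-1$ whenever $\price<1$).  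This yields a bound of the form
\[
\E[\#\,\text{implausible with }c\leq 2\CONSMALL]\;\leq\;\sum_{c=2}^{2\CONSMALL}\,h_{\maxarity}(c)\,(\density\,n^{-\alpha})^{c},
\]
and analogously with an extra factor of $n^{(\Tcxty-1)/2}$ for the moreover count.

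Finally, the hypothesis $\CONSMALL \leq \gamma\,n/\density^{2/(\Tone-\price)}$ rearranges to $\density\,n^{-\alpha}\leq(\gamma/\CONSMALL)^{\alpha}$, so for $\gamma = (\beta^{1/\Tone}/2^{\maxarity/\Tone})^{O(1)}/\maxarity$ chosen sufficiently small, the series above is dominated by its $c=2$ term and is bounded by $\beta$ for the main claim (resp.\ by $\beta\cdot\density\,n^{(1+\price)/2}$ for the moreover claim, after including the extra $n^{(\Tcxty-1)/2}$).  Markov's inequality then yields both probability statements.  The chief obstacle I anticipate is ensuring the combinatorial enumeration is tight enough to produce an $h_{\maxarity}(c)$ that grows at most polynomially (or sub-exponentially) in $c$---a naive $\binom{cv}{e}$ bound introduces a $v^{e-v}$ factor that can be super-exponential in $c$ and would force $\CONSMALL$ to be much smaller than the theorem allows---so that the geometric series is indeed dominated by the $c=2$ contribution, matching the claimed functional form of $\gamma$.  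The remainder of the argument is routine algebraic bookkeeping of the $\maxarity$- and $\beta$-dependence.
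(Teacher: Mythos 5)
Your overall framework is exactly the paper's: a first-moment bound on bad \subfactors via Lemma~\ref{lem:alternate-accounting} (translating the income threshold into an edge-count threshold $e \geq A := \frac{\Tcxty-\price}{2}c + v - \frac{I_0}{2}$), followed by Markov's inequality, with the $c=1$ term correctly excluded from the ``moreover'' count because $I(H) = \Tcxty - \price > \Tcxty - 1$. The rearrangement of $\CONSMALL \leq \gamma n/\density^{2/(\Tone-\price)}$ into a geometric decay in $c$ also matches.

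The one place you diverge is precisely the step you flag as ``the chief obstacle.'' You propose to enumerate, for each fixed $(c,v,e)$, the per-constraint degree sequence and then the bipartite adjacency, paying a probability $\binom{\maxarity}{d_f}/\binom{n}{d_f}$ per constraint. This can be made to work---using $\binom{v}{d_f}/\binom{n}{d_f}\leq(v/n)^{d_f}$ and $\prod_f\binom{\maxarity}{d_f}\leq 2^{\maxarity c}$ you recover $\binom{m}{c}\binom{n}{v}2^{\maxarity c}(v/n)^e$ up to another $2^{O(\maxarity c)}$ from counting admissible degree sequences, which is exactly exponential (not super-exponential) in $c$ and hence absorbable into $\gamma$---but you leave it as an unresolved worry rather than carrying it through. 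The paper sidesteps the entire enumeration: it fixes only the pair $(c,v)$, \emph{not} the edge count or the shape, and directly bounds $\Pr[\text{a fixed set of $c$ constraints and $v$ variables receives at least $A$ edges}]$ by observing that any such event forces $\geq A$ of the at most $\maxarity c$ half-edges out of the constraint set to land in the $v$-set, giving $\binom{\maxarity c}{\lceil A\rceil}(v/n)^{\lceil A\rceil}\leq 2^{\maxarity c}(v/n)^A$ in one stroke. This collapses the ``delicate combinatorial bookkeeping'' you anticipate to a single binomial-tail estimate, and resolves your worry about spurious $v^{e-v}$-type factors: you simply never need to enumerate shapes at all. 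Your plan is sound and would have landed in the same place, but the paper's formulation is the clean way to make the combinatorics a non-issue.
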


\section{Defining the pseudoexpectation} \label{sec:PE}
\subsection{Closures}
In this section we define the ``closure'' of a set of variables.  Roughly speaking, this can be thought of as the smallest \subfactor of~$G$ that fully determines the distribution on~$S$ under a natural ``planted distribution''.

\begin{definition}[$S$-closed subgraph]
    Let $S$ be a set of variables.  We say that a subgraph $H$ is \emph{$S$-closed} if it is a \subfactor and all its leaf vertices are in~$S$.
\end{definition}
\begin{remark}      \label{rem:constr-in-cl}
    For every constraint in $G$, if $H$ is taken to be the full neighborhood of that constraint, and $S$ is the set of variables in that constraint, then $H$ is $S$-closed.
\end{remark}
Note that a union of $S$-closed \subfactors is $S$-closed. This leads us to the following definition:
\begin{definition}[Closure, $\cl(S)$]
    Let $S$ be a set of variables. We define the \emph{closure} of $S$, written $\cl(S)$, to be the union of all \emph{\smallish} $S$-closed \subfactors~$H$.  Note that  $\cl(S)$ is itself an $S$-closed \subfactor.
\end{definition}
\begin{remark}
    A key warning to remember: we do not necessarily have $S \subseteq \vbls(\cl(S))$.
\end{remark}
\begin{remark}  \label{rem:ST-closure}
    Let $T \subseteq S$.  Then if $H$ is $T$-closed, it is also $S$-closed.   It follows that $\cl(T) \subseteq \cl(S)$.
\end{remark}

\begin{fact}        \label{fact:cl-emptyset}
    The only plausible $\emptyset$-closed \subfactor $H$ is $H = \emptyset$.  It follows that $\cl(\emptyset) = \emptyset$.
\end{fact}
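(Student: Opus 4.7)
The plan is to compute $I(H)$ for an arbitrary $\emptyset$-closed $\tau$-subgraph and observe that plausibility forces it to be empty. Since being $\emptyset$-closed means that every leaf variable-vertex lies in $\emptyset$, there are no leaf vertices at all, so every variable-vertex of $H$ has degree at least $2$. In particular, the credit count is $\ell = 0$.

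Next I would unpack the definition of revenue. Since $\ell = 0$ and the debit count $e = e_v + e_c$ is nonnegative, we get
\[
R(H) = \ell - e = -(e_v + e_c) \leq 0.
\]
Combined with $C(H) = \price \cdot \Size{H} \geq 0$, the income satisfies $I(H) = R(H) - C(H) \leq 0$, with equality only when $e_v = e_c = 0$ and $\Size{H} = 0$. Thus plausibility of $H$ (i.e., $I(H) \geq 0$) forces $\Size{H} = 0$. Because $H$ is edge-induced and cannot have isolated vertices, $\Size{H} = 0$ implies $\vbls(H) = \emptyset$ and $\edges(H) = \emptyset$, hence $H = \emptyset$.

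Finally, to deduce $\cl(\emptyset) = \emptyset$, I would invoke the Plausibility Assumption: every \smallish \subfactor of $G$ is plausible, so in particular every \smallish $\emptyset$-closed \subfactor is plausible, and by the argument above must equal $\emptyset$. The closure $\cl(\emptyset)$, defined as the union of all \smallish $\emptyset$-closed \subfactors, is therefore a union of empty subgraphs, and the claim follows.

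The only subtle point is keeping the accounting straight: it is crucial that in the absence of leaves, the revenue is genuinely $\leq 0$ (not just bounded by something involving $\tau$ or~$\price$), so that any positive cost immediately pushes the income negative. Everything else is a direct unpacking of the definitions.
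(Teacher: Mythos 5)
Your proof is correct and takes the same approach as the paper: observe that an $\emptyset$-closed \subfactor has no leaf vertices, hence $R(H)\leq 0$, so plausibility forces $C(H)=\price\cdot\Size{H}=0$, i.e.\ $H=\emptyset$. You have simply spelled out the arithmetic and the invocation of the Plausibility Assumption that the paper's two-sentence proof leaves implicit.
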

\begin{proof}
    If $H$ is $\emptyset$-closed then its revenue is at most~$0$.  Hence if it is plausible, its cost is~$0$.
\end{proof}

We will now give an important generalization of this fact for $S$-closures, $|S| > 0$
\begin{theorem}                                     \label{thm:closures-are-small}
    Let $S$ be a set of variables with $|S| \leq \price \cdot \CONSMALL$.  Then $\cl(S)$ is \smallish and satisfies $R(\cl(S)) \leq |S|$.
\end{theorem}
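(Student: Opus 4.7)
The plan is to build $\cl(S)$ incrementally, using the Plausibility Assumption at each step to keep the partial union smallish, which then forces the final union to be smallish as well.

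First I would record two easy closure properties. Observation~1: the union of two (or finitely many) $S$-closed \subfactors is itself $S$-closed. The \subfactor property is preserved under union because every constraint-vertex's degree can only grow. For the leaf condition, any variable-vertex that is a leaf in $H \cup H'$ has degree~$1$ in the union, so it appears in only one of the two graphs and is a leaf there; since both graphs are $S$-closed, it lies in~$S$. Observation~2: for any $S$-closed \subfactor~$H$, since every leaf-vertex lies in~$S$, the credit satisfies $\ell \leq |S|$, and therefore
\[
R(H) = \ell - e \;\leq\; \ell \;\leq\; |S|.
\]

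Next I would enumerate the (finitely many) smallish $S$-closed \subfactors as $H^{(1)}, H^{(2)}, \dots, H^{(N)}$, and set $H_0 = \emptyset$, $H_i = H_{i-1} \cup H^{(i)}$. By Observation~1 each $H_i$ is an $S$-closed \subfactor, and $\cl(S) = H_N$ by definition. The main claim to prove by induction on~$i$ is that $|\cons(H_i)| \leq \CONSMALL$. The base case $H_0 = \emptyset$ is trivial. For the inductive step, the inductive hypothesis together with $|\cons(H^{(i)})| \leq \CONSMALL$ gives $|\cons(H_i)| \leq 2\CONSMALL$. The Plausibility Assumption (restated form) then applies to the $S$-closed \subfactor~$H_i$, yielding $I(H_i) \geq 0$; equivalently,
\[
\price \cdot |\cons(H_i)| \;=\; C(H_i) \;\leq\; R(H_i) \;\leq\; |S| \;\leq\; \price \cdot \CONSMALL,
\]
where the middle inequality uses Observation~2 and the last uses the hypothesis on $|S|$. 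Dividing by~$\price$ closes the induction.

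Taking $i = N$, we conclude that $\cl(S)$ is smallish, and the same chain of inequalities for $H_N$ gives $R(\cl(S)) \leq |S|$. I do not anticipate a real obstacle here; the only subtle point is making sure that the union-of-smallish-$S$-closed-\subfactors stays $S$-closed so that Plausibility can be invoked, and that the size bound $|\cons(H_i)| \leq 2\CONSMALL$ (rather than just $\leq \CONSMALL$) is exactly what the Plausibility Assumption is formulated to handle, which is precisely why the assumption was phrased with the ``$2\CONSMALL$'' slack.
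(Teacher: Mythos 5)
Your proof is correct and follows essentially the same approach as the paper: you build $\cl(S)$ as an incremental union of the smallish $S$-closed \subfactors, use the leaf-in-$S$ property to bound revenue by $|S|$ at every stage, and invoke the Plausibility Assumption on each partial union (which has at most $2\CONSMALL$ constraints) to keep the constraint count under $\CONSMALL$. The paper factors the inductive step out as Lemma~\ref{lem:smallness-trick}, whereas you inline it, but the argument is the same.
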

\begin{proof}
    Since $\cl(S)$ is $S$-closed, all its leaf vertices are in~$S$; thus $\cl(S)$ has at most $|S|$ credits and so $R(\cl(S)) \leq |S|$, as claimed.  Observe that if $H_1, \dots, H_t$ is the complete list of $S$-closed \subfactors, we may make the same deduction about $H_1 \cup \cdots \cup H_j$ for any $1 \leq j \leq t$, in particular deducing that $R(H_1 \cup \cdots \cup H_j) \leq \price \cdot \CONSMALL$ for each~$j$.  The \smallness of $\cl(S)$ is now a consequence of the lemma that immediately follows.
\end{proof}
\begin{lemma}                                       \label{lem:smallness-trick}
    Suppose that $H$ is a \subfactor formed as a union, $H = H_1 \cup \cdots \cup H_t$, where each $H_j$ is \smallish and where we have $R(H_1 \cup \cdots \cup H_j) \leq \price \cdot \CONSMALL$ for all $1 \leq j \leq t$.  Then $H$ is \smallish.
\end{lemma}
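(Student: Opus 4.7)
The plan is to proceed by induction on $j$, showing that each partial union $H^{(j)} := H_1 \cup \cdots \cup H_j$ is already \smallish, so that in particular $H^{(t)} = H$ is \smallish. The key leverage is the Plausibility Assumption: any \subfactor of size at most $2\CONSMALL$ has nonnegative income, so cost is bounded by revenue. Since we are given a uniform bound on the revenue of every $H^{(j)}$, this will bound the cost and hence the size.

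More concretely, the base case $H^{(0)} = \emptyset$ is trivial. For the inductive step, suppose $H^{(j-1)}$ is \smallish, i.e.\ $\Size{H^{(j-1)}} \leq \CONSMALL$. Because $H_j$ is \smallish by hypothesis, we get
\[
    \Size{H^{(j)}} \;\leq\; \Size{H^{(j-1)}} + \Size{H_j} \;\leq\; 2\CONSMALL.
\]
Since each $H_i$ is a \subfactor, their union $H^{(j)}$ is also a \subfactor (by the remark immediately following the definition of \subfactor). Thus $H^{(j)}$ is a \subfactor with at most $2\CONSMALL$ constraints, so the Plausibility Assumption applies and gives $I(H^{(j)}) \geq 0$, i.e.\ $C(H^{(j)}) \leq R(H^{(j)})$. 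Combining with the hypothesis $R(H^{(j)}) \leq \price \cdot \CONSMALL$ and the definition $C(H^{(j)}) = \price \cdot \Size{H^{(j)}}$, we conclude
\[
    \price \cdot \Size{H^{(j)}} \;=\; C(H^{(j)}) \;\leq\; R(H^{(j)}) \;\leq\; \price \cdot \CONSMALL,
\]
so $\Size{H^{(j)}} \leq \CONSMALL$, completing the induction.

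There is no real obstacle here; the argument is a short bootstrapping trick that exploits precisely the ``buffer factor of $2$'' built into the Plausibility Assumption (which requires plausibility up to size $2\CONSMALL$, not just $\CONSMALL$). The slightly subtle point — and the one worth emphasizing in the write-up — is why it suffices to control $R(H^{(j)})$ rather than $R(H)$ directly: without the inductive control, one cannot invoke the Plausibility Assumption on $H^{(j)}$ because a priori its size could exceed $2\CONSMALL$. The induction ensures we stay inside the plausibility window at every stage.
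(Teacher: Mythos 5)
Your proof is correct and follows essentially the same inductive argument as the paper: at each step the union of two \smallish subgraphs has at most $2\cdot\CONSMALL$ constraints, so the Plausibility Assumption applies, and the revenue bound forces the cost (hence the constraint count) back down to $\CONSMALL$. The only cosmetic difference is that you start the induction at $j=0$ rather than $j=1$.
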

\begin{proof}
    The proof is by induction on~$t$, with the base case of $t = 1$ being immediate.  In general, suppose $H' = H_1 \cup \cdots \cup H_{t-1}$ is \smallish.  Since $H_t$ is also \smallish we have $|\cons(H')|, |\cons(H_t)| \leq \CONSMALL$ and hence $|\cons(H' \cup H_t)| \leq 2 \cdot \CONSMALL$.  Thus $H' \cup H_t$ is plausible and so
    \[
        \cons(H' \cup H_t) = (1/\price) \cdot C(H' \cup H_t) \leq (1/\price) \cdot R(H' \cup H_t) \leq (1/\price) \cdot \price \cdot \CONSMALL = \CONSMALL,
    \]
    showing that $H' \cup H_t$ is \smallish, completing the induction.
\end{proof}

In proving Theorem~\ref{thm:closures-are-small}, we iteratively formed the union of all \smallish $S$-closed subgraphs, at each step verifying that we have a \smallish \subfactor of revenue at most~$|S|$.  Once we finish producing $\cl(S)$ in this way, let $V = \vbls(\cl(S))$, and suppose we \emph{continue} iteratively adding in \smallish \subfactors that are $(S \cup V)$-closed.  This process cannot add any leaf vertices except possibly in~$S$; thus we will still have that revenue is bounded by~$|S| \leq \price \cdot \CONSMALL$,  and Lemma~\ref{lem:smallness-trick} will still imply the resulting \subfactor is \smallish.  Thus we end up with a \smallish, $S$-closed \subfactor --- which by definition is already contained in~$\cl(S)$.  Thus we have shown:
\begin{theorem} \label{thm:iterated-closure-is-closure}
    Let $S$ be a set of variables with $|S| \leq \price \cdot \CONSMALL$.  Then $\cl(S \cup \vbls(\cl(S))) = \cl(S)$.
\end{theorem}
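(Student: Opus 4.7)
The plan is to establish the two containments separately. The inclusion $\cl(S) \subseteq \cl(S \cup \vbls(\cl(S)))$ is immediate from Remark~\ref{rem:ST-closure} applied to $T = S$ and $S' = S \cup \vbls(\cl(S))$. So the real content is the reverse inclusion, which is exactly what is sketched in the paragraph just above the statement. Write $V := \vbls(\cl(S))$. It suffices to show that every \smallish, $(S \cup V)$-closed \subfactor $W$ already satisfies $W \subseteq \cl(S)$; taking the union over all such $W$ then gives $\cl(S \cup V) \subseteq \cl(S)$.

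To prove this, I will form the \subfactor $H := \cl(S) \cup W$ (a union of \subfactors is a \subfactor) and show that $H$ itself is a \smallish, $S$-closed \subfactor, so that $W \subseteq H \subseteq \cl(S)$ by definition of $\cl(S)$. The key step is a leaf analysis: I claim every leaf vertex $x$ of $H$ lies in $S$. Split into two cases. (i) If $x \in V$, then $x$ is already a variable-vertex of $\cl(S)$. If it were interior in $\cl(S)$, it would have degree $\geq 2$ in $\cl(S) \subseteq H$, contradicting the assumption that $x$ is a leaf of $H$; so $x$ is a leaf of $\cl(S)$, and since $\cl(S)$ is $S$-closed, $x \in S$. (ii) If $x \notin V$, then $x$ is not a vertex of $\cl(S)$ at all, so its incident edges in $H$ coincide with its incident edges in $W$; in particular $x$ is a leaf of $W$, so $(S \cup V)$-closedness of $W$ forces $x \in S \cup V$, and since $x \notin V$ we get $x \in S$. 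In both cases $x \in S$, proving $H$ is $S$-closed.

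Because $H$ is $S$-closed, its number of leaf vertices is at most $|S|$, so $R(H) \leq |S| \leq \price \cdot \CONSMALL$; the same bound holds for $R(\cl(S))$. Both $\cl(S)$ (by Theorem~\ref{thm:closures-are-small}) and $W$ (by hypothesis) are \smallish, so Lemma~\ref{lem:smallness-trick} applied to the two-term sequence $\cl(S), W$ immediately yields that $H$ is \smallish. Thus $H$ is a \smallish, $S$-closed \subfactor, so it is part of the union defining $\cl(S)$, and $W \subseteq H \subseteq \cl(S)$.

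There is no really hard step here; the argument is essentially the verbatim formalization of the preceding paragraph of the paper. The only place that requires care is the leaf-case analysis in Case (i): one must observe that a variable-vertex in $V$ which is interior in $\cl(S)$ cannot become a leaf in the larger graph $H$ (because adding edges only increases degree), while a variable-vertex which is a leaf in $\cl(S)$ is automatically in $S$. Once this dichotomy is in place, the rest is just assembly of Theorem~\ref{thm:closures-are-small} and Lemma~\ref{lem:smallness-trick}.
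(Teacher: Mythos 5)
Your proof is correct and is essentially the paper's own argument, formalized carefully: the paper continues the iterated union from the proof of Theorem~\ref{thm:closures-are-small} by adding \smallish $(S \cup V)$-closed \subfactors and observes that no new leaves outside $S$ can arise, then invokes Lemma~\ref{lem:smallness-trick}; your version packages the same leaf-dichotomy and the same appeal to Theorem~\ref{thm:closures-are-small} and Lemma~\ref{lem:smallness-trick} into a clean one-step argument showing $\cl(S) \cup W$ is a \smallish $S$-closed \subfactor for each candidate $W$.
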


\subsection{The planted distribution}

\begin{definition}[Planted distribution on a small subgraph]
    Let $H$ be a \smallish subgraph of~$G$. The \emph{planted distribution on~$H$} is a probability distribution on assignments~$\bx \in \Omega^n$ to the variables of~$G$, defined as follows:  For each constraint $f \in \cons(H)$ we independently draw an assignment $\bw_f \in \Omega^{N(f)}$ according to~$\mu_f$.  We write its component associated to variable $i \in N_H(f)$ as $\bw_{f,i}$, and think of it as an assignment ``suggested'' for this variable.  (Note that we will ignore the components of $\bw_f$ correspoding to variables not in~$N_H(f)$.)  Now each variable $i \in \vbls(H)$ has one or more assignments in~$\Omega$ suggested by its adjacent constraints. We get a unique assignment~$\bx_i$ for it by \emph{conditioning} on all the suggestions being consistent.  (We will show later in~\eqref{eqn:consist-nonzero} that this occurs with nonzero probability.) Finally, assignments for variables not in~$H$ are chosen independently and uniformly from~$\Omega$.

    We'll write $\eta_H$ for the probability distribution on $\Omega^n$ associated to this planted distribution on~$H$, and we'll write $\E_H[\cdot]$ for the associated expectation.
\end{definition}

\begin{definition}
    For each $i \in \vbls(G)$ and each $c \in \Omega$, we introduce an ``indeterminate'' $\indet{c}{i}$ that is supposed to stand for~$1$ if variable~$i$ is assigned~$c$ and~$0$ otherwise.
\end{definition}

The key theorem about the planted distributions is that as soon as a subgraph~$H$ contains~$\cl(S)$, the marginal of~$\eta_H$ on~$S$ is determined.  In some sense, this property is exactly the reason we defined the closure the way we did.

\begin{theorem}                                     \label{thm:closure}
    Let $S$ be a set of variables and let $H \supseteq \cl(S)$ be a \smallish subgraph.
    Then the marginal of $\eta_H$ on $S$ is the same as the marginal of $\eta_{\cl(S)}$ on~$S$.
\end{theorem}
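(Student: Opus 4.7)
The plan is to reformulate $\eta_H$ as a Markov random field (MRF) on $\vbls(H)$ and then reduce $H$ down to $\cl(S)$ by a sequence of ``peeling'' operations, each of which preserves the marginal on $S$.  First, I would observe that the distribution of $\bx_{\vbls(H)}$ under $\eta_H$ is the MRF $\pi_H(x) \propto \prod_{f \in \cons(H)} \tilde{\mu}_{f,H}(x_{N_H(f)})$, where $\tilde{\mu}_{f,H}$ denotes the marginal of $\mu_f$ on $N_H(f)$, while variables outside $\vbls(H)$ are uniform and independent.  This follows directly from the definition of the planted distribution by summing out the unused coordinates of each $\bw_f$. The proof then proceeds by induction on $|\edges(H)|$, with the base case $H = \cl(S)$ immediate.

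For the inductive step, I rely on two reduction lemmas. \emph{Edge peeling}: if $v \in \vbls(H)$ is a degree-$1$ vertex with $v \notin S$ and its unique neighboring constraint $f$ satisfies $|N_H(f)| \geq \Tcxty + 1$, then $H' := H - (f,v)$ is a \smallish \subfactor containing $\cl(S)$, and summing $\pi_H$ over $\bx_v$ converts $\tilde{\mu}_{f,H}$ into $\tilde{\mu}_{f,H'}$, recovering $\pi_{H'}$ and preserving the $S$-marginal. \emph{Constraint peeling}: if $f \in \cons(H)$ has $|N_H(f)| = \Tcxty$ and some degree-$1$ neighbor $v \notin S$, then $H' := H - f$ is a \smallish \subfactor containing $\cl(S)$, and the $(\Tcxty-1)$-wise uniformity of $\mu_f$ yields marginal agreement: summing $\pi_H$ over the degree-$1$ neighbors of $f$ outside $S$ collapses $\tilde{\mu}_{f,H}$ onto the marginal of $\mu_f$ on at most $\Tcxty-1$ coordinates, which is uniform, so $f$'s factor becomes constant.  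A parallel application of $(\Tcxty-1)$-wise uniformity to compute the partition function ratio $Z_H/Z_{H'}$ supplies exactly the normalization needed so that the degree-$1$ neighbors of $f$ lying in $S$ emerge marginally uniform and independent---exactly as under $\eta_{H'}$ (which treats them as variables outside $\vbls(H')$).  In both lemmas, the containment $H' \supseteq \cl(S)$ is verified using the observation that a degree-$1$ vertex of $H$ outside $S$ cannot lie in $\vbls(\cl(S))$: if it did, it would have $\cl(S)$-degree at most $1$, hence be a leaf of $\cl(S)$, hence lie in $S$ by $S$-closedness, a contradiction; the analogous argument for constraint peeling (using $|N_H(f)| = \Tcxty$) shows $f \notin \cons(\cl(S))$.

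To close the argument, I would show that whenever $H \supsetneq \cl(S)$ some peeling applies.  If every leaf of $H$ were in $S$, then $H$ itself would be a \smallish $S$-closed \subfactor, so by the maximality of $\cl(S)$ we would get $H \subseteq \cl(S)$ and hence $H = \cl(S)$---contradicting $H \supsetneq \cl(S)$.  Hence $H$ has a leaf $v \notin S$ adjacent to some $f$, and edge peeling applies if $|N_H(f)| \geq \Tcxty + 1$ while constraint peeling applies if $|N_H(f)| = \Tcxty$.  Each step strictly decreases $|\edges(H)|$, so iterating terminates at $\cl(S)$. The main technical obstacle will be the bookkeeping in the constraint-peeling lemma: one must carefully track both the $(\Tcxty-1)$-wise uniformity application (which kills $f$'s factor on the $\vbls(H')$ side) and the partition function ratio (which recovers the correct $\alphasize^{-1}$-weights on the degree-$1$ neighbors of $f$ that happen to lie in $S$), so that the two effects combine to reproduce precisely $\eta_{H'}$'s treatment of those variables as uniform and independent of $\vbls(H')$.
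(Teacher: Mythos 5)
Your proposal is correct in its essentials but takes a genuinely different route from the paper. The paper proves the statement by a single global computation: it expands $p_H(x_T)$ in the Fourier basis (the $\olindet{c}{i}$ variables), obtains a sum indexed by sub-\subfactors $H' \subseteq H$, and observes that $(\Tcxty-1)$-wise uniformity of the $\mu_f$ kills every non-\subfactor summand while uniformity of the variables outside~$T$ kills every non-$T$-closed summand; since every \smallish $S$-closed \subfactor of $H$ is automatically a subset of $\cl(S)$, the resulting expression is manifestly independent of which $H \supseteq \cl(S)$ one started with. Your argument instead recasts $\eta_H$ as a Gibbs/MRF distribution with factors $\tilde\mu_{f,H}$ and peels $H$ down to $\cl(S)$ one step at a time, using $(\Tcxty-1)$-wise uniformity locally to show each peel preserves the $S$-marginal. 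Both are valid; the paper's Fourier-sum proof is shorter and handles all of $H$ at once, while yours is more conceptual (it explains \emph{why} the closure is the right object, via conditional-independence/factorization intuition) at the cost of the partition-function bookkeeping you flag. I have checked the constraint-peeling normalization $Z_{H'}/Z_H = q^{|N_{\geq 2}(f)|}$ and it works exactly as you sketch.

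There is one small but genuine omission. The theorem is stated for an arbitrary \smallish subgraph $H \supseteq \cl(S)$, not only for a \subfactor. Your termination argument (``if every leaf of $H$ is in $S$, then $H$ is a \smallish $S$-closed \subfactor, hence $H \subseteq \cl(S)$'') and your two peeling rules both presuppose that every constraint-vertex of $H$ has degree $\geq \Tcxty$; neither rule fires if some $f \in \cons(H)$ has $|N_H(f)| < \Tcxty$. This is needed in the paper's applications (e.g.\ in Theorem~\ref{thm:consistent-distribution} the enclosing $H$ is an arbitrary \smallish subgraph). The fix is easy and in the same spirit as your constraint peel: if $|N_H(f)| \leq \Tcxty - 1$, then $\tilde\mu_{f,H}$ is itself the uniform distribution by $(\Tcxty-1)$-wise uniformity, so $f$'s factor is constant and $f$ can be dropped outright (with the same partition-function accounting); iterating this preprocessing reduces to the case where $H$ is a \subfactor, after which your argument runs as written. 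You should state this third peeling rule explicitly.
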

\begin{remark}
    Although the notation in the below proof looks cumbersome, the calculations are actually fairly straightforward.  We strongly encourage the reader to work through the proof in the case of $q = 2$, $\Omega = \{\pm 1\}$, with ``$\olindet{c}{i}$'' replaced by $cx_i \in \{\pm 1\}$.
\end{remark}
\begin{proof}
    For brevity we write $v_H = |\vbls(H)|$ and $e_H = |\edges(H)|$.
    We also introduce the notation $\olindet{c}{i} = q\indet{c}{i} - 1$.  Recalling that $\eta_H$ puts the uniform distribution on the $n-v_H$ variables outside~$H$, we have
    \begin{align}
        p_H(x) &\coloneqq \Pr_{\bw}[\bw \text{ suggestions consistent, and the assignment to $\vbls(H)$ they agree on is~$x$}] \nonumber\\
            &= q^{-(n-v_H)} \cdot \E_{\bw} \prod_{(f,i) \in \edges(H)} \left(q^{-1}(1+\olindet{\bw_{f,i}}{i}\right)\nonumber\\
            &= q^{v_H - e_H - n} \cdot  \sum_{H' \subseteq H} \E_{\bw} \prod_{(f,i) \in \edges(H')} \olindet{\bw_{f,i}}{i} \label{eqn:deduce0}\\
            &= q^{v_H - e_H - n} \cdot \sum_{H' \subseteq H} \prod_{f \in \cons(H')} \E_{\bw_f \sim \mu_f} \prod_{i \in N_{H'}(f)} \olindet{\bw_{f,i}}{i}, \label{eqn:deduce-from-me}
    \end{align}
    where we used that the draws $\bw_f \sim \mu_f$ are independent across $f$'s.  Now whenever $f \in \cons(H')$ has $|N_H(f)| < \Tcxty$, the $(\Tcxty-1)$-wise uniformity of~$\mu_f$ implies that
    \[
        \E_{\bw_f \sim \mu_f} \prod_{i \in N_{H'}(f)} \olindet{\bw_{f,i}}{i} = \E_{\substack{\bw_{f,i} \sim \Omega \\ \text{uniform, indep.}}} \prod_{i \in N_{H'}(f)} \olindet{\bw_{f,i}}{i} =  \prod_{i \in N_{H'}(f)} \E_{\substack{\bw_{f,i} \sim \Omega \\ \text{uniform}}}\olindet{\bw_{f,i}}{i} = 0,
    \]
    since $\E_{\bc \sim \Omega}[\olindet{\bc}{i}] = 0$ for any fixed value~$\x \in \Omega$.  Thus in~\eqref{eqn:deduce-from-me} it is equivalent to sum over \subfactors~$H'$, and so returning to~\eqref{eqn:deduce0} we get
    \begin{equation} \label{eqn:use-me}
        p_H(x) 
        = q^{v_H - e_H - n} \cdot \sum_{\substack{H' \subseteq H \\ \mathclap{H' \text{ a \subfactor}}}} \ \ \ \ \E_{\bw} \prod_{(f,i) \in \edges(H')} \olindet{\bw_{f,i}}{i}.
    \end{equation}
    Suppose now that $T \subseteq [n]$ is a set of variables.  We'll decompose an $x \in \Omega^n$ into its projection $x_T$ onto the coordinates in~$T$ and $x_{\ol{T}}$ onto the coordinates not in~$T$.  Then
    \begin{align}
        p_H(x_T) &\coloneqq \Pr_{\bw}[\bw \text{ suggestions consistent, and the assignment to $T$ they agree on is~$x_T$]} \nonumber\\
        &= \sum_{x_{\ol{T}} \in \Omega^{\ol{T}}} p_H(x_T,x_{\ol{T}}) = q^{n-|T|} \cdot \E_{\substack{\bx_{\ol{T}} \sim \Omega^{\ol{T}} \\ \text{uniform}}} [p_H(x_T,\bx_{\ol{T}})] \nonumber\\
        &= q^{v_H - e_H - |T|} \cdot \sum_{\substack{H' \subseteq H \\ \mathclap{H' \text{ a \subfactor}}}} \quad \E_{\bw} \prod_{\substack{(f,i) \in \edges(H') \\ i \in T}} \olindet{\bw_{f,i}}{i}  \cdot  \E_{\substack{\bx_{\ol{T}} \sim \Omega^{\ol{T}} \\ \text{uniform}}} \prod_{\substack{(f,i) \in \edges(H') \\ i \in \ol{T}}} \olbindet{\bw_{f,i}}{i}, \label{eqn:nonoggin}
    \end{align}
    where we used~\eqref{eqn:use-me}. Now suppose the \subfactor $H'$ has a leaf vertex~$j$ that is in~$\ol{T}$; i.e., it's \emph{not} in~$T$.  Then $\bx_j$ appears exactly once in the above, within the expression
    \begin{equation}    \label{eqn:will-be-0}
         \E_{\substack{\bx_{\ol{T}} \sim \Omega^{\ol{T}} \\ \text{uniform}}} \prod_{\substack{(f,i) \in \edges(H') \\ i \in \ol{T}}} \olbindet{\bw_{f,i}}{i}.
    \end{equation}
    As $\bx_j$ is chosen uniformly and independently of all random variables, the above contains a factor of the form $\E_{\bx_j \sim \Omega}[\olbindet{\bw_{f,j}}{j}]$. But for \emph{any} fixed outcome of $\bw_{f,j}$, this expectation is~$0$, meaning~\eqref{eqn:will-be-0} will vanish.  Thus any summand $H'$ in~\eqref{eqn:nonoggin} will vanish if $H'$ has a leaf variable outside~$T$.  Thus we may equivalently sum only over \emph{$T$-closed}~$H'$.  That is,
    \begin{align}
        p_H(x_T) &= \Pr_{\bw}[\bw \text{ suggestions consistent, and the assignment to $T$ they agree on is~$x_T$]} \nonumber\\
        &= q^{v_H - e_H - |T|} \cdot \sum_{\substack{H' \subseteq H \\ \mathclap{H' \text{ is $T$-closed}}}} \quad \E_{\bw} \prod_{\substack{(f,i) \in \edges(H') \\ i \in T}} \olindet{\bw_{f,i}}{i}  \cdot  \E_{\substack{\bx_{\ol{T}} \sim \Omega^{\ol{T}} \\ \text{uniform}}} \prod_{\substack{(f,i) \in \edges(H') \\ i \in \ol{T}}} \olbindet{\bw_{f,i}}{i} \nonumber\\
        &= q^{v_H - e_H - |T|} \cdot \sum_{\substack{H' \subseteq H \\ \mathclap{H' \text{ is $T$-closed}}}} \quad \E_{\bw} \quad \E_{\substack{\bx \sim \Omega^n \text{ unif.,} \\ \text{condit.\ on }\bx_T = x_T}} \ \prod_{(f,i) \in \edges(H')} \olbindet{\bw_{f,i}}{i}. \label{eqn:planted-formula}
    \end{align}
    Suppose we took $T = \emptyset$ above.  Since $H$ is \smallish, every subgraph $H'$ is plausible and hence Fact~\ref{fact:cl-emptyset} implies that the above has only one summand, corresponding to~$H' = \emptyset$.  The summand is trivially~$1$, and hence
    \begin{equation}    \label{eqn:consist-nonzero}
        \Pr_{\bw}[\bw \text{ suggestions consistent}] = q^{v_H - e_H}.
    \end{equation}
    Observe that this does not depend at all on the $\mu_f$'s; in particular, it is easily seen to the be the probability of consistent suggestions under completely uniform~$\mu_f$'s.  In any case, since~\eqref{eqn:consist-nonzero} is positive, as promised, we may condition on the associated event; thus from~\eqref{eqn:planted-formula} we obtain
    \begin{multline*}
        \Pr_{\bw}[\text{the suggested assignment to $S$ is~$x_S$} \mid \text{the  suggestions $\bw$ are consistent}] \\
        = q^{- |S|} \cdot \sum_{\substack{H' \subseteq H \\ \mathclap{H' \text{ is $S$-closed}}}} \quad \E_{\bw} \quad \E_{\substack{\bx \sim \Omega^n \text{ unif.,} \\ \text{condit.\ on }\bx_S = x_S}} \ \prod_{(f,i) \in \edges(H')} \olbindet{\bw_{f,i}}{i}.
    \end{multline*}
    This formula visibly has the property that once $H \supseteq \cl(S)$, it does not depend on~$H$.
\end{proof}

\subsection{Pseudoexpectations} \label{sec:pE}
In this section, we formally define the pseudoexpectation with which we will work.
\begin{definition}
Given a polynomial expression $p(\x)$ in the indeterminates $\indet{c}{i}$, we write
    \begin{align*}
        \vblsp(p) &= \{i : \text{at least one $\indet{c}{i}$ appears in~$p(\x)$}\}, \\
        \degmlin(p) &= \max\{|\vblsp(M)| : \text{$M(\x)$ is a monomial in $p(\x)$}\}.
    \end{align*}
    We call the latter the \emph{\multilineardegree}; note that $\degmlin(p) \leq \deg(p)$ always.

    Recall that a \emph{pseudoexpectation} on polynomials of degree at most~$D$ is a linear map $\pE[\cdot]$ satisfying~$\pE[1] = 1$.  We can uniquely define it by specifying its values on all monomials of degree at most~$D$.  Further, recall that if $p(\x)$ is a polynomial, 
     we say that $\pE[\cdot]$ \emph{satisfies the identity $p(\x) = 0$} if $\pE[p(\x)\cdot q(\x)] = 0$ for all polynomials $q(\x)$ with $\deg(p\cdot q) \leq D$.
\end{definition}

\begin{definition}[Our pseudoexpectation]
    We'll define our pseudoexpectation $\pE[\cdot]$ on all polynomials of \emph{\multilineardegree} at most~${\price \cdot \CONSMALL}$; in particular, this defines it for all polynomials of (usual) degree at most~${\price \cdot \CONSMALL}$.  We define it by imposing that $\pE[M(\x)] = \E_{\cl(\vblsp(M))}[M(\bx)]$ for all monomials $M(\x)$ having $\degmlin(M) \leq \price \cdot \CONSMALL$.  (Here we are using the abbreviation $\E_C[M(\bx)]$ for $\E_{\bx \sim \eta_C}[q(\bx)]$.)  By Theorem~\ref{thm:closures-are-small}, this makes sense in that $\cl(\vbls(M))$ will always be \smallish.  Note that we have $\pE[1] = \E_{\cl(\emptyset)}[1] = 1$, as required.
\end{definition}
\begin{theorem}                                     \label{thm:consistent-distribution}
    Let $p(x)$ be a polynomial expression of \multilineardegree at most~$\price \cdot \CONSMALL$.  Let $H$ be any \smallish subgraph containing
    \[
        H' = \bigcup \{ \cl(\vblsp(M)) : \text{$M(\x)$ is a monomial of $p(\x)$}\}.
    \]
    For example, if $\cl(\vblsp(p))$ is \smallish then it would qualify for~$H$.  Then
    \[
        \pE[p(\x)] = \E_{H}[p(\bx)] = \E_{H'}[p(\bx)].
    \]
\end{theorem}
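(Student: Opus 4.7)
The plan is to reduce the polynomial identity to a statement about individual monomials via linearity, and then apply Theorem~\ref{thm:closure} directly.

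First, I would observe that $\pE[\cdot]$, $\E_H[\cdot]$, and $\E_{H'}[\cdot]$ are all linear in their argument, so it suffices to verify $\pE[M(\x)] = \E_H[M(\bx)] = \E_{H'}[M(\bx)]$ for each monomial $M(\x)$ of $p(\x)$.  Fix such an $M$.  Since $|\vblsp(M)| \leq \degmlin(M) \leq \degmlin(p) \leq \price \cdot \CONSMALL$, Theorem~\ref{thm:closures-are-small} ensures $\cl(\vblsp(M))$ is \smallish, so $\pE[M(\x)] = \E_{\cl(\vblsp(M))}[M(\bx)]$ is well-defined by construction of the pseudoexpectation.

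Next, by the definition of $H'$, we have $\cl(\vblsp(M)) \subseteq H'$, and by hypothesis $H' \subseteq H$ with $H$ \smallish; in particular $H'$ is itself \smallish since $\cons(H') \subseteq \cons(H)$, so $\eta_{H'}$ is defined.  Thus both $H$ and $H'$ are \smallish subgraphs containing $\cl(\vblsp(M))$.  The key point now is that $M(\bx)$ depends only on the variables in $\vblsp(M)$, so $\E_H[M(\bx)]$ and $\E_{H'}[M(\bx)]$ are computed purely from the marginals of $\eta_H$ and $\eta_{H'}$ on the set $\vblsp(M)$.  Invoking Theorem~\ref{thm:closure} with $S = \vblsp(M)$ twice---once with containing subgraph $H$ and once with $H'$---both marginals coincide with the marginal of $\eta_{\cl(\vblsp(M))}$ on $\vblsp(M)$.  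Therefore
\[
\E_H[M(\bx)] \;=\; \E_{H'}[M(\bx)] \;=\; \E_{\cl(\vblsp(M))}[M(\bx)] \;=\; \pE[M(\x)],
\]
and summing over monomials of $p$ with their coefficients gives the theorem.

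There is no real obstacle here: the statement is essentially a direct packaging of Theorem~\ref{thm:closure} lifted from sets of variables to polynomials, with the definition of $\pE$ via closures of monomial variable-sets supplying the bridge.  The only bookkeeping is checking that all invoked closures are \smallish (handled by the \multilineardegree hypothesis and Theorem~\ref{thm:closures-are-small}) and that each $\cl(\vblsp(M))$ sits inside both $H$ and $H'$ (immediate from the definition of $H'$ as the union of the relevant closures).
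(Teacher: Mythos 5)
Your proof is correct and follows essentially the same route as the paper's (which simply cites Theorem~\ref{thm:closure} and Remark~\ref{rem:ST-closure}); you've just spelled out the linearity-to-monomials reduction and the smallness bookkeeping that the paper leaves implicit.
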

\begin{proof}
    This is immediate from Theorem~\ref{thm:closure} and Remark~\ref{rem:ST-closure}.
\end{proof}

\begin{theorem}         \label{thm:pE-sat}
    Let $p(\x)$ be a polynomial  with $S = \vblsp(p)$ satisfying $|S| \leq \deg(p)$, $|S| \leq \price \cdot \CONSMALL$. Assume that $p(\bx)$ is identically zero for $\bx \sim \eta_{\cl(S)}$.  (Note that $\cl(S)$ is \smallish by Theorem~\ref{thm:closures-are-small}.) Then our $\pE[\cdot]$ satisfies the identity $p(\x) = 0$.
\end{theorem}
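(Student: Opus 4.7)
The plan is to verify that $\pE[p(\x)\,q(\x)] = 0$ for every polynomial $q(\x)$ with $\deg(p\cdot q) \leq \price\cdot\CONSMALL$. By the linearity of $\pE$, I would immediately reduce to the case $q = M$ a single monomial, set $T = \vblsp(M)$, and observe $|T| \leq \deg(M)$. Combining this with the hypothesis $|S| \leq \deg(p)$ yields the key size bound
\[
|S \cup T| \;\leq\; \deg(p) + \deg(M) \;=\; \deg(p \cdot M) \;\leq\; \price \cdot \CONSMALL,
\]
so Theorem~\ref{thm:closures-are-small} guarantees that $H := \cl(S \cup T)$ is \smallish.

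The next step is to apply Theorem~\ref{thm:consistent-distribution} to the polynomial $p \cdot M$ with the subgraph $H$. Every monomial $M'$ of $p \cdot M$ has $\vblsp(M') \subseteq S \cup T$, so Remark~\ref{rem:ST-closure} gives $\cl(\vblsp(M')) \subseteq \cl(S \cup T) = H$; thus $H$ contains the union of closures required by Theorem~\ref{thm:consistent-distribution}, yielding
\[
\pE[p(\x)\,M(\x)] \;=\; \E_{\eta_H}[p(\bx)\,M(\bx)].
\]

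The remaining step is to compare $\eta_H$ with $\eta_{\cl(S)}$ on the coordinates in~$S$. Since $\cl(S) \subseteq \cl(S \cup T) = H$ (again by Remark~\ref{rem:ST-closure}) and $H$ is \smallish, Theorem~\ref{thm:closure} says that the marginal of $\eta_H$ on~$S$ equals the marginal of $\eta_{\cl(S)}$ on~$S$. Because $p(\x)$ depends only on coordinates in~$S$ and vanishes almost surely under $\eta_{\cl(S)}$ by hypothesis, it also vanishes almost surely under $\eta_H$; hence $p(\bx)\,M(\bx) = 0$ a.s.\ under $\eta_H$, and the right-hand side above is~$0$.

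There is no real obstacle to surmount here: the whole design of the closure operator and the planted distributions in Section~\ref{sec:PE} — in particular Theorem~\ref{thm:closure}, which guarantees exactly this sort of marginal consistency — is engineered to make this kind of identity essentially automatic. The only point to handle with care is the degree accounting: it is precisely the hypothesis $|S| \leq \deg(p)$ that forces $|S \cup T|$ to inherit the bound $\deg(p\cdot M) \leq \price\cdot\CONSMALL$, which is what keeps $\cl(S \cup T)$ inside the \smallish regime where Theorems~\ref{thm:closures-are-small}, \ref{thm:closure}, and~\ref{thm:consistent-distribution} all apply.
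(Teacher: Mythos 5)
Your proposal is correct and follows the paper's own proof essentially verbatim: reduce to a single monomial $M$ by linearity, use $|S\cup\vblsp(M)|\leq\deg(p)+\deg(M)\leq\price\cdot\CONSMALL$ to get \smallness of $\cl(S\cup\vblsp(M))$ via Theorem~\ref{thm:closures-are-small}, convert $\pE$ to $\E_{\eta_{\cl(S\cup\vblsp(M))}}$ via Theorem~\ref{thm:consistent-distribution}, and then invoke Theorem~\ref{thm:closure} with $\cl(S)\subseteq\cl(S\cup\vblsp(M))$ to conclude $p(\bx)\equiv0$ there. The degree accounting and the appeal to marginal consistency are identical to the paper's; the only cosmetic difference is that the paper carries the sum over monomials $M_j$ explicitly rather than saying "reduce to a single monomial by linearity."
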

\begin{proof}
    Let $q(\x)$ be a nonzero polynomial with $\deg(p \cdot q) \leq \price \cdot \CONSMALL$.   Writing $q(\x) = \sum_j M_j(\x)$ where each $M_j(\x)$ is a monomial, we have
    \begin{equation}    \label{eqn:i-should-be-0}
        \pE[p(\x) \cdot  q(\x)] = \sum_j \pE[p(\x) \cdot  M_j(\x)] = \sum_j \E_{\cl(S \cup \vblsp(M_j))}[p(\bx) \cdot  M_j(\bx)].
    \end{equation}
    Here the last equality used Theorem~\ref{thm:consistent-distribution} and the \smallness of $\cl(S \cup \vblsp(M_j))$, which follows from Theorem~\ref{thm:closures-are-small} and the fact that $|S \cup \vblsp(M_j)| \leq \deg(p) + \deg(q)  = \deg(p \cdot q) \leq \price \cdot \CONSMALL$.  But since $\cl(S\cup \vblsp(M_j)) \supseteq \cl(S)$ (Remark~\ref{rem:ST-closure}), Theorem~\ref{thm:closure} tells us that $p(\bx)$ has the same distribution under $\eta_{\cl(S\cup \vblsp(M_j))}$ and $\eta_{\cl(S)}$; i.e., it is identically~$0$.  Thus~\eqref{eqn:i-should-be-0} vanishes, as needed.
\end{proof}
We have the following immediate corollaries:
\begin{corollary}                                       \label{cor:pE-sats-stuff}
    Our pseudoexpectation $\pE[\cdot]$ satisfies the following identities:
    \begin{itemize}
        \item $\sum_{c \in \Omega} \indet{c}{i} = 1$ for all $i \in [n]$ (i.e., the identity $\sum_{c \in \Omega} \indet{c}{i} - 1 = 0$).
        \item $\indet{c}{i}^2 = \indet{c}{i}$ for all $c \in \Omega, i \in [n]$.
    \end{itemize}
    As an immediate consequence of the latter,  we always have $\pE[p(\x)] = \pE[\mathrm{multilin}(p(\x))]$, where $\mathrm{multilin}(p(\x))$ is defined by replacing any positive power of $\indet{c}{i}$ in~$p(\x)$ with just $\indet{c}{i}$.
\end{corollary}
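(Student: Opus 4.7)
The approach is to apply Theorem~\ref{thm:pE-sat} directly to each identity, and then derive the multilinearization consequence from the second identity by an ideal-membership argument.

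For the first identity, I would take $p(\x) = \sum_{c \in \Omega} \indet{c}{i} - 1$. Its variable set is $S = \vblsp(p) = \{i\}$, so $|S| = 1$ is both $\leq \deg(p) = 1$ and $\leq \price \cdot \CONSMALL$ (using the standing assumption $\price \cdot \CONSMALL \geq \maxarity \geq 3$). Under any realization $\bx \sim \eta_{\cl(\{i\})}$, the value $\bx_i$ is a single element of $\Omega$, so exactly one of the $\indet{c}{i}(\bx_i)$ equals $1$ and the rest vanish, giving $p(\bx) \equiv 0$. Theorem~\ref{thm:pE-sat} then yields that $\pE$ satisfies the identity. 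The second identity is entirely analogous: take $p(\x) = \indet{c}{i}^2 - \indet{c}{i}$, note $S = \{i\}$ with $|S| = 1 \leq 2 = \deg(p)$, and observe that $\indet{c}{i}(\bx_i) \in \{0,1\}$ under every $\eta_C$, so $p(\bx) \equiv 0$ identically.

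For the multilinearization consequence, my plan is to show that $p(\x) - \mathrm{multilin}(p(\x))$ lies in the polynomial ideal generated by $\{\indet{c}{i}^2 - \indet{c}{i}\}_{i,c}$; then by linearity of $\pE$ and the just-proved satisfaction of each generator, $\pE$ annihilates the difference. Concretely, the elementary identity $\indet{c}{i}^k = \indet{c}{i}^{k-1} + \indet{c}{i}^{k-2}\bigl(\indet{c}{i}^2 - \indet{c}{i}\bigr)$, valid for $k \geq 2$, lets me reduce each non-multilinear monomial of $p$ toward multilinear form, iteratively exposing copies of $\indet{c}{i}^2 - \indet{c}{i}$ as factors. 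Iterating this substitution across every monomial of $p$ yields a decomposition $p - \mathrm{multilin}(p) = \sum_{i,c} q_{i,c}(\x) \cdot (\indet{c}{i}^2 - \indet{c}{i})$, each term of which is killed by $\pE$.

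The one place requiring care — and hence the closest thing to an obstacle — is verifying that every polynomial appearing during this reduction remains within the domain on which $\pE$ and the identity-satisfaction premise apply, namely \multilineardegree at most $\price \cdot \CONSMALL$. But this is automatic: each substitution $\indet{c}{i}^k \mapsto \indet{c}{i}^{k-1} + \indet{c}{i}^{k-2}(\indet{c}{i}^2-\indet{c}{i})$ lowers the usual degree and does not introduce any new variables, so $\vblsp(\cdot)$ is nonincreasing and the \multilineardegree is preserved throughout. With that degree bookkeeping in place, the corollary is a clean consequence of Theorem~\ref{thm:pE-sat} together with the fact that the planted distribution assigns each variable a legitimate element of $\Omega$.
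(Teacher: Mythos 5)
Your proof is correct and follows what the paper intends: both identities are verified by a direct application of Theorem~\ref{thm:pE-sat} (with $S=\{i\}$, the degree bounds $|S|\leq\deg(p)$ and $|S|\leq\price\cdot\CONSMALL$ trivially satisfied, and $p(\bx)\equiv 0$ under every planted distribution because $\bx_i$ is an honest element of $\Omega$), and the multilinearization consequence follows by the ideal-membership decomposition you describe.

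One small remark worth making about the last part. The multilinearization identity can also be read off even more directly from the definition of $\pE$, without passing through the ``satisfies identity'' formalism at all: for a monomial $M$, $\pE[M] = \E_{\cl(\vblsp(M))}[M(\bx)]$, and pointwise $M(\bx) = \mathrm{multilin}(M)(\bx)$ since each $\indet{c}{i}(\bx) \in \{0,1\}$; moreover $\vblsp(M) = \vblsp(\mathrm{multilin}(M))$, so the two pseudoexpectations literally evaluate to the same expression, and linearity finishes. This side-steps the degree bookkeeping entirely. Your ideal-membership route is fine but does lean on the fact that ``satisfies the identity $\indet{c}{i}^2-\indet{c}{i}=0$'' must be applicable to the cofactors $q_{i,c}$; the paper's formal statement of that notion is phrased in terms of ordinary degree, so for $p$ with large ordinary degree (but small \multilineardegree) one needs the observation you implicitly make — that the proof of Theorem~\ref{thm:pE-sat} really only uses $|\vblsp(p\cdot M_j)|$, so it applies under a \multilineardegree bound. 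The direct-from-definition argument avoids that subtlety altogether.
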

Another corollary is the following (cf.~the rough statement of our main technical result, Theorem~\ref{thm:technical-approx}):
\begin{corollary}   \label{cor:pE-sats-constraints}
    Our pseudoexpectation $\pE[\cdot]$ satisfies the identity
    \[
        s_f(\x) \coloneqq \sum_{\vec{c} \in \supp(\mu_f)} \prod_{i \in N(f)} \indet{c_i}{i} = 1
    \]
    for all $f \in \cons(G)$; i.e., ``$\pE[\cdot]$'s~distribution on $N(f)$ is always in $\supp(\mu_f)$''.
\end{corollary}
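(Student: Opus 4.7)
The plan is to apply Theorem~\ref{thm:pE-sat} with the polynomial $p(\x) = s_f(\x) - 1$ and $S = \vblsp(p) = N(f)$; the conclusion ``$\pE[\cdot]$ satisfies $p(\x) = 0$'' is precisely the claimed identity $s_f(\x) = 1$.

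First I would verify the two numerical hypotheses of Theorem~\ref{thm:pE-sat}. The polynomial $p$ has $\vblsp(p) = N(f)$ and $\deg(p) = |N(f)|$, so $|S| \leq \deg(p)$ is trivial. The bound $|S| \leq \price \cdot \CONSMALL$ follows from the standing assumption $\maxarity \leq \price \cdot \CONSMALL$ together with $|N(f)| \leq \maxarity$. This also ensures via Theorem~\ref{thm:closures-are-small} that $\cl(N(f))$ is \smallish, so $\eta_{\cl(N(f))}$ is well defined.

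The heart of the matter is to show that $p(\bx) \equiv 0$ under $\eta_{\cl(N(f))}$, i.e., that $(\bx_i)_{i \in N(f)} \in \supp(\mu_f)$ with probability one. By Remark~\ref{rem:constr-in-cl}, the subgraph $H_0$ consisting of $f$ together with its full neighborhood is $N(f)$-closed; it contains a single constraint, hence is trivially \smallish, hence $H_0 \subseteq \cl(N(f))$ by the definition of closure. In the planted distribution on any \smallish subgraph that contains $f$, the construction draws $\bw_f \sim \mu_f$ and then conditions on all the suggestions being consistent (a positive-probability event by~\eqref{eqn:consist-nonzero}); in particular, on this event the realized assignment $(\bx_i)_{i \in N(f)}$ coincides with $\bw_f$, which lies in $\supp(\mu_f)$. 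Thus $s_f(\bx) = \sum_{\vec{c} \in \supp(\mu_f)} \prod_{i \in N(f)} \indet{c_i}{i}$ evaluates to~$1$ under $\eta_{\cl(N(f))}$ almost surely, so $p(\bx) \equiv 0$, and Theorem~\ref{thm:pE-sat} delivers the corollary.

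I do not expect any real obstacle: the heavy lifting was done in Theorem~\ref{thm:closure}, which forces the closure-based planted distribution to remember the constraint $f$, and in Theorem~\ref{thm:pE-sat}, which converts an almost-sure polynomial identity under $\eta_{\cl(S)}$ into a formal identity satisfied by $\pE[\cdot]$. The only minor point of care is the degenerate case $|N(f)| = \Tcxty - 1$, where Remark~\ref{rem:constr-in-cl} does not literally apply because $H_0$ fails the \subfactor condition; however, in that case $(\Tcxty-1)$-wise uniformity of $\mu_f$ forces $\mu_f$ to be uniform on $\Omega^{N(f)}$, whence $\supp(\mu_f) = \Omega^{N(f)}$ and $s_f(\x) \equiv 1$ as a formal polynomial identity (using Corollary~\ref{cor:pE-sats-stuff}), making the claim trivial in that case.
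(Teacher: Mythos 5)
Your proposal is correct and takes essentially the same route as the paper: invoke Theorem~\ref{thm:pE-sat} with $p = s_f - 1$ and $S = N(f)$, check the degree/size hypotheses via $|N(f)| \leq \maxarity \leq \price \cdot \CONSMALL$, and argue that $p(\bx)\equiv 0$ under $\eta_{\cl(N(f))}$ because the constraint $f$ (with all its edges) lies inside $\cl(N(f))$, forcing the restriction of $\bx$ to $N(f)$ into $\supp(\mu_f)$.

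The one thing you do beyond the paper is worth commenting on. You correctly observe that if $|N(f)| = \Tcxty - 1$ (which the standing arity assumption permits), then the one-constraint subgraph $H_f$ fails the $\tau$-subgraph condition, so Remark~\ref{rem:constr-in-cl} and the step ``$H_f \subseteq \cl(S)$'' do not literally apply; the paper's proof glosses over this. Your fix is sound: $(\Tcxty-1)$-wise uniformity of a distribution on $\Omega^{\Tcxty-1}$ forces $\mu_f$ to be uniform, so $\supp(\mu_f) = \Omega^{N(f)}$. At that point you could actually stay inside the Theorem~\ref{thm:pE-sat} framework --- $s_f(\bx)\equiv 1$ holds trivially under \emph{any} distribution on $\Omega^n$, in particular under $\eta_{\cl(N(f))}$, so no detour through Corollary~\ref{cor:pE-sats-stuff} is needed; alternatively your route through the identities $\sum_c \indet{c}{i} = 1$ also works, though the phrase ``formal polynomial identity'' is a slight overstatement since $s_f - 1$ is only zero modulo the ideal generated by $\sum_c \indet{c}{i} - 1$, not identically zero as a polynomial. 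Either way, this is a genuine small gap in the paper's phrasing that you caught and correctly repaired.
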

\begin{proof}
    We apply Theorem~\ref{thm:pE-sat}, with $S = N(f)$, which satisfies $|S| =\deg(s_f)$ and  $|S| \leq \maxarity \leq \price \cdot \CONSMALL$. Note that if $H_f$ denotes the \subfactor induced by all edges of~$G$ incident on constraint-vertex~$f$, then $H_f$ is $S$-closed and so $H_f \subseteq \cl(S)$.  It then follows from the definition of $\bx \sim \eta_{\cl(S)}$ that $s_f(\bx) \equiv 1$, since the restriction of $\bx$ to $N(f)$ will always be supported on $\supp(\mu_f)$.
\end{proof}

\section{The proof of positive semidefiniteness} \label{sec:psd-pf}
\subsection{Setup}
Throughout this section, fix a degree~$D$ satisfying $1 \leq D \leq \frac13 \price \cdot \CONSMALL$. \todo{1/3 would be 1/2 if we could get r not r+d in key lemma assumption}Our goal will be to establish:
\begin{theorem}                                     \label{thm:main}
    If $p(x)$ is a polynomial expression of degree at most~$D$, then $\pE[p(x)^2] \geq 0$.
\end{theorem}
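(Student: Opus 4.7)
The strategy is the Gram--Schmidt approach outlined in Section~\ref{sec:psd-pf}'s preamble and Section~3.  Order the square-free monomials $\{x^S : |S| \leq D\}$ by nondecreasing~$|S|$, break ties arbitrarily, and run Gram--Schmidt under the inner product $\langle f, g\rangle \coloneqq \pE[f\cdot g]$.  This produces a family $\{y_S\}$ spanning all polynomials of multilinear-degree at most~$D$ with $\pE[y_S y_T] = 0$ for $S \ne T$ by construction.  Given $y_S$, we have $y_S = x^S - \sum_{T \prec S} c_{S,T}\, y_T$, where $c_{S,T} \neq 0$ only when $\pE[x^S \cdot y_T] \neq 0$.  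By Fact~\ref{fact:gs-local}, in order to conclude $\pE[p(x)^2] \geq 0$ for every $p$ of degree $\leq D$, it suffices to show (i)~that each $\pE[y_S^2] \geq 0$, and (ii)~that $\vblsp(y_S)$ is \emph{small} in the sense that $\cl(\vblsp(y_S))$ is \smallish.  Condition~(ii) combined with Theorem~\ref{thm:consistent-distribution} gives $\pE[y_S^2] = \E_{\cl(\vblsp(y_S))}[y_S(\bx)^2] \geq 0$, which delivers~(i) for free.  So the entire task reduces to bounding $|\vblsp(y_S)|$.

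Bounding $|\vblsp(y_S)|$ proceeds by induction on the Gram--Schmidt order.  For each $T$ with $c_{S,T} \neq 0$, the identity $\pE[x^S y_T] \neq 0$ must be ``certified'' by some sub-structure in the factor graph.  The key technical lemma I would prove is a \emph{correlation-witness lemma}: if $\pE[x^S y_T] \neq 0$, then there exists a \subfactor $H_{S,T} \subseteq G$, of size controlled by the inductive hypothesis, whose variable-set contains $S \cup T$ and which is ``$|S|$-connected'' in the sense that after removing any fewer than $|S|$ variables one cannot separate $S$ from $T$ inside $H_{S,T}$.  The proof unwinds the planted-distribution formula~\eqref{eqn:planted-formula} from Theorem~\ref{thm:closure}: a nonvanishing term in the sum picks out a $(S\cup T)$-closed \subfactor, and the induction shows that after orthogonalizing against lower-order monomials any ``low-connectivity'' witness is cancelled, leaving only $|S|$-connected ones.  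The star example in Section~3 at $|S|=|T|=2$ is exactly the base case showing why weaker connectivity is not enough.

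Given the correlation-witness lemma, if $\vblsp(y_S)$ contains a variable $j$, then $j$ must lie in $\vblsp(y_T)$ for some $T$ with $c_{S,T} \neq 0$, and applying the lemma iteratively down the Gram--Schmidt order attaches a chain of $|S|$-connected witness subgraphs culminating at~$j$.  The union $H$ of all these witnesses is a \subfactor whose only possible leaves lie in~$S$; thus $R(H) \leq |S| \leq D$.  If $|\vblsp(y_S)|$ were larger than $D/\price$, the set of witnesses would force $|\cons(H)| > D/\price$.  Because each inductive step keeps us under the $2\CONSMALL$ size threshold (using $D \leq \tfrac{1}{3}\price\cdot\CONSMALL$), the Plausibility Assumption applies, forcing $|\cons(H)| \leq R(H)/\price \leq D/\price$, a contradiction.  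Thus $\cl(\vblsp(y_S))$ is \smallish as needed.

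The main obstacle will be the correlation-witness lemma, in particular proving the right $|S|$-connectivity property.  Naive arguments only yield $1$-connected witnesses, which are insufficient at degree~$\geq 2$ (per the star example); the subtlety is to show that Gram--Schmidt against all strictly lower-order $y_T$'s kills precisely those correlations routed through ``bottleneck'' vertex cuts of size~$< |S|$.  I expect this to proceed by induction on $|S|$, using that $\pE[x^S y_T] = \pE[x^S x^T] - \sum_{U \prec T} c_{T,U} \pE[x^S y_U]$ and tracking how cuts in the witness graphs combine.  Everything else (orthogonality, the reduction to bounding $|\vblsp(y_S)|$, and the final accounting contradiction against plausibility) is relatively mechanical once the witness lemma is in place.
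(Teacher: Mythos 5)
The overall plan---run Gram--Schmidt in the degree-respecting order under $\langle f,g\rangle = \pE[fg]$, then show each orthogonalized polynomial has small support so that local positivity gives $\pE[y_S^2]\ge 0$---is exactly the strategy of the paper (Section~\ref{sec:psd-pf}). But there are two genuine issues with the proposal as written.

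\textbf{First, the pseudovariance-zero degeneracy is not addressed, and it breaks the claimed ``orthogonality by construction.''} You write that Gram--Schmidt ``produces a family $\{y_S\}$ \dots with $\pE[y_S y_T] = 0$ for $S\ne T$ by construction.'' That is only true syntactically if every previously produced $y_T$ can be normalized, i.e.\ if $\pE[y_T^2] > 0$. Since $\pE$ is only known (and is only being proved) to be positive \emph{semi}definite, the process can hit $\pE[y_T^2]=0$, and then the projection coefficients $c_{S,T} = \pE[x^S y_T]/\pE[y_T^2]$ are undefined, and orthogonality of later $y_S$'s against $y_T$ no longer comes for free. The paper deals with this explicitly: when $\pE[y_T^2]=0$ it sets $z_T = y_T$ unnormalized, adds $T$ to a set $\PvZ$, and then \emph{separately proves} (Theorem~\ref{thm:main-technical}, Case~2) that $T\in\PvZ$ forces $\pE[y_{S,\prev(T)}\cdot z_T]=0$ --- using that $\pE[z_T^2]=0$ means $z_T$ is identically~$0$ under the planted distribution on a small witness subgraph. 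Without this step the orthogonality underlying Fact~\ref{fact:gs-local} is not established, so your reduction to conditions (i) and (ii) does not actually go through.

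\textbf{Second, the route you sketch for the ``correlation-witness lemma'' is materially different from---and looks substantially harder than---what the paper does, and you yourself flag it as the main obstacle.} You propose to (a)~assert the existence of a witness subgraph that is $|S|$-connected between $S$ and $T$, and (b)~prove it by unwinding~\eqref{eqn:planted-formula} and inducting through the coefficient recurrence $\pE[x^S y_T]=\pE[x^Sx^T]-\sum_{U\prec T}c_{T,U}\pE[x^S y_U]$, ``tracking how cuts in the witness graphs combine.'' The paper's Lemma~\ref{lem:KEY} does not build connectivity inductively in this way. Instead it takes the canonical subgraph $B=\cl(\vbls(H)\cup T)$, defines the boundary edges of $B$ relative to the given witness $H$, and shows \emph{directly} that if the set $V$ of cut variables had size $<|T|$ then conditioning on the cut gives a conditional-independence factorization $\E_B[\by\,\ol{\bx}^T\mid \bx_V] = \E_B[\by\mid \bx_V]\cdot\E_B[\ol{\bx}^T\mid\bx_V]$, and the first factor is~$0$ because $y$ was orthogonalized against everything of degree $<|T|$. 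That yields the revenue bound $R(\Hnew)\le R(H)$ via Lemma~\ref{lem:rev-loss} in one step; no inner induction on $|S|$, no tracking how cuts ``combine,'' and no need to separately formalize $|S|$-connectivity as a property of the witness --- it falls out of the contradiction. Your proposed route of pushing the sum-over-$U$ recurrence through vertex cuts is not something the paper needs, and it is not clear it would close: the cross-terms in that sum can individually reflect low-connectivity witnesses even when they cancel in aggregate, so ``tracking cuts'' termwise is exactly the kind of combinatorial bookkeeping the paper's conditional-independence trick avoids.

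Two smaller points on the accounting: the initial witness should be a \subfactorplus consisting of the variables of $S$ as isolated vertices (Remark~\ref{rem:OK-start}), whose revenue is $2|S|$ (isolated vertices earn two credits each), and the invariant maintained is $R\le 2D$, not $R\le|S|\le D$; and the conclusion is reached constructively (a \smallish witness with $R\le 2D$ exists, so $\cl(\vblsp(y_S))$ is \smallish by Lemma~\ref{lem:union-revenue0}, giving $\pE[y_S^2]\ge 0$ via Theorem~\ref{thm:consistent-distribution}), rather than by the contradiction-style argument you outline. These are fixable, but the pseudovariance-zero gap and the witness-lemma strategy need rethinking before the plan is actually a proof.
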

In light of Corollary~\ref{cor:pE-sats-stuff}, we may assume that $p(x)$ is ``multilinear'' (i.e., does not contain~$\indet{c}{i}^k$ for any $k > 1$).  Another way to state this assumption is $p(x) \in \spn(x^S : S \in \calM^{\leq D})$, where we introduce the following notation:
\begin{definition}
    A \emph{monomial index} will be a set~$S$ of pairs $(i,c) \in [n] \times \Omega$, with no variable $i \in [n]$ occurring more than once.  We write $x^S$ for the monomial $\prod_{(i,c) \in S} \indet{c}{i}$, with the usual convention that $x^\emptyset = 1$.   Finally, we write $\calM^{\leq D}$ for the collection of monomial indices~$S$ with $|S| \leq D$.
\end{definition}
\begin{notation}    \label{not:abuse}
    We abuse notation as follows: If a monomial index $S$ occurs in a place where a subset of variables is expected, we intend the subset of variables $\{i : (i,c) \in S \text{ for some $c$}\}$.
\end{notation}
\begin{remark}
    All of the ideas in our proof of Theorem~\ref{thm:main} are present in the $q = 2$ case; only notational complexities arise for $q > 2$.  Thus the reader is encouraged to keep the Boolean case $\Omega = \{\text{false}, \text{true}\}$ in mind.  In this case, since $\pE[\cdot]$ satisfies the identity $\indet{\text{false}}{i} = 1 - \indet{\text{true}}{i}$, one can also ignore the indeterminate $\indet{\text{false}}{i}$ (since $1 \in \spn(\calM^{\leq 0})$ already).  Then one can more naturally write the indeterminate $\indet{\text{true}}{i}$ as $x_i$ and the monomial $x^S$ becomes $\prod_{i \in S} x_i$.
\end{remark}

\subsection{Gram--Schmidt overview}
\begin{notation}
    Let $\preceq$ denote any total ordering on $\calM^{\leq D}$ that respects cardinality, so that if $T$ and $S$ are monomial indices with $|T| < |S|$, then $T \prec S$.  For $S \neq \emptyset$, let $\prev(S)$ denote the immediate predecessor of~$S$ under~$\preceq$.
\end{notation}

Our goal in this section is to show that the \emph{modified Gram--Schmidt process} from linear algebra can be successfully applied to the monomials $({x}^S : S \in \calM^{\leq D})$, in the ordering~$\preceq$, using $\pE[\cdot]$ as the ``inner product'': $\la p(x), q(x) \ra \coloneqq \pE[p(x)\cdot q(x)]$.  
Of course, we don't know that this is a genuine inner product (indeed, that's essentially what we're trying to prove).  We will discuss this issue shortly, but we first remind the reader that the modified Gram--Schmidt process would typically produce a collection of polynomials $y_S = y_S(x)$, for $S \in \calM^{\leq D}$, that are \emph{orthogonal} under~$\pE[\cdot]$ (meaning $\pE[y_{S} \cdot y_{S'}] = 0$ if $S \neq S'$) and that have the same span as $({x}^S : S \in \calM^{\leq D})$.  As well, it would produce ``normalized'' versions of these polynomials~$z_S = y_{S}/\sqrt{\pE[y_S^2]}$, satisfying $\pE[z_S^2] = 1$.

We now address the obviously difficulty that $\pE[\cdot]$ is not (known to be) an inner product, because we don't know it's positive definite on the monomials of~$\calM^{\leq D}$. Our goal will be to show that as we follow the Gram--Schmidt process, it never encounters any ``positive definiteness problems'', and therefore ``succeeds''.  The main ``positive definiteness problem'' Gram--Schmidt might encounter would be if it creates a polynomial~$y_S$ with $\pE[y_S^2] < 0$. In this case, when it tries to produce the normalized polynomial~$z_S$, it would certainly fail.

There is one additional potential problem, occurring if Gram--Schmidt produces a~$y_S$ with $\pE[y_S^2] = 0$.  In the usual process from linear algebra this may indeed occur, and the Gram--Schmidt algorithm copes by treating $z_S$ as~$0$ (effectively, throwing it out of the span).  This is a valid strategy because genuine inner products are strictly positive definite.  However we only expect our ``inner product'' $\pE[\cdot]$ to be positive \emph{semi}definite.  We therefore need a different coping mechanism.  For us, when $\pE[y_S^2] = 0$ occurs, we will simply define its ``normalized'' version~$z_S$ to be~$y_S$. The challenge of this is that Gram--Schmidt's guarantee of producing an orthogonal collection $(y_S : S \in \calM^{\leq D})$ relies syntactically on all the~$z_S$ polynomials satisfying $\pE[z_S^2] = 1$.  Thus we will have an additional burden: we will have to ``manually'' show that $\pE[y_S^2] = 0$ implies that $y_S$ is orthogonal under $\pE[\cdot]$ to all other polynomials. It will count as a ``positive definiteness problem'' if we are unable to show this; we will call this the ``pseudovariance zero problem''.  We remark that the main positive definiteness problem is  fundamentally more important than this ``pseudovariance zero problem'', and the reader may wish to ignore the pseudovariance zero issue on first reading.

We now describe the modified Gram--Schmidt process in detail.  The process works in \emph{stages}, named after the elements of~$\calM^{\leq D}$ and in order of~$\preceq$.  At the end of stage~$S$ it creates a certain polynomial~$z_S$.  Stage~$\emptyset$ always ``succeeds'' and simply consists of defining $z_\emptyset = 1$.  In some cases it may happen that $\pE[z_S^2] = 0$.  In this case we say that $z_S$ has \emph{pseudovariance zero}, and the Gram--Schmidt algorithm will add~$S$ to a growing collection called~$\PvZ$.

Each stage~$S$ is further divided into \emph{substages}, associated to monomial indices $T \prec S$ in order of~$\preceq$.  Let us introduce some notation:
\begin{notation}
    Let $\calM^{\leq D}_2$ denote the collection of all pairs $(S,T) \in \calM^{\leq D} \times \calM^{\leq D}$ with $T \prec S$.  We define a total ordering $\preceq_2$ on~$\calM^{\leq D}_2$ via
    \[
        (S',T') \preceq_2 (S,T) \iff S' \prec S, \text{ or } S' = S \text{ and } T' \preceq T.
    \]
\end{notation}
Thus the overall progression of substages in Gram--Schmidt is through the elements of~$\calM^{\leq D}_2$ in order of~$\preceq_2$.  Substage~$(S,T)$ creates a polynomial~$y_{S,T}$ as follows:
\[
    y_{S,T} = \begin{cases}
                        {x}^S - \pE[{x}^S] & \text{if $T = \emptyset$;} \\
                        y_{S, \prev(T)} - \pE[y_{S,\prev(T)} \cdot z_{T}] z_{T} & \text{else.}
                    \end{cases}
\]
Stage $S$ ends just after substage $(S,\prev(S))$.  At this point, the Gram--Schmidt process defines
\[
    y_S = y_{S, \prev(S)}, \qquad
    z_{S} = \begin{cases}
                        y_{S}\bigm/\sqrt{\pE[y_{S}^2]} & \text{if $\pE[y_{S}^2] > 0$;} \\
                        y_{S} & \text{if $\pE[y_{S}^2] = 0$, in which case $S$ is placed into $\PvZ$.}
                  \end{cases}
\]
Of course, if $\pE[y_S^2] < 0$ then we have encountered a positive definiteness problem.  Indeed, to be conservative we will treat it as a problem if $\pE[y_{S,T}^2] < 0$ for \emph{any} $(S,T) \in \calM^{\leq D}_2$.

It is a syntactic property of the usual modified Gram--Schmidt process that when~$y_{S,T}$ is produced, it is orthogonal to~$z_T$ under $\pE[\cdot]$.  However this relies on $\pE[z_T^2] = 1$, which fails for us if $T \in \PvZ$.  Thus we will need to explicitly prove that $T \in \PvZ$ implies $\pE[y_{S,\prev(T)} \cdot z_T] = 0$.  If this doesn't hold, we've encountered the pseudovariance zero problem.  But assuming it does hold, $y_{S,T}$ will simply become $y_{S,\prev(T)}$ and we will have the desired orthogonality of $y_{S,T}$ and~$z_T$.  We remark that the usual Gram--Schmidt property of $y_{S,T}$ being orthogonal to \emph{all} $z_{T'}$ with $T' \preceq T$ follows by induction in the usual way; this only needs the inductive property that the $z_{T}$'s are orthogonal (not that they're orthonormal).

We may now summarize the discussion so far:
\begin{definition}
    A \emph{positive definiteness problem} occurs at substage $(S,T)$ of modified Gram--Schmidt if either $\pE[y_{S,T}^2] < 0$, or if $T \in \PvZ$ but $\pE[y_{S,\prev(T)} \cdot z_{T}] \neq 0$. (The latter is called a \emph{pseudovariance zero} problem.)
    We say that the modified Gram--Schmidt process \emph{succeeds through substage~$(S,T)$} if it encounters no positive definiteness problem at any substage $(S',T') \preceq_2 (S,T)$.
\end{definition}
\begin{proposition}                                     \label{prop:GS-deal}
    Suppose the modified Gram--Schmidt process succeeds through substage $(S,T)$.  Then we have:
    \begin{itemize}
        \item  $y_{S,T} = {x}^{S} - p(x)$ for some polynomial $p(x)$ supported on monomials ${x}^{T'}$ with $T' \preceq T$;
        \item $\pE[y_{S,T} \cdot z_{T'}] = 0$ for all $T' \preceq T$, and hence $\pE[y_{S,T} \cdot q(x)] = 0$ for all polynomials $q(x)$ supported on monomials $x^{T'}$ with $T' \preceq T$;
        \item $\pE[y_{S,T}^2] \geq 0$.
    \end{itemize}
    In particular, if the process succeeds through stage~$S$, we have:
    \begin{itemize}
        \item $z_S = c \cdot {x}^{S} - p(x)$ for some positive constant $c > 0$ and some polynomial $p(x)$ supported on monomials ${x}^{T}$ with $T \prec S$;
        \item $\spn({x}^{S'} : S' \preceq S) = \spn(z_{S'} : S' \preceq S)$;
        \item $\pE[z_S \cdot z_{T}] = 0$ for all $T \prec S$, and hence $\pE[z_S \cdot q(x)] = 0$ for all polynomials $q(x)$ supported on monomials ${x}^{T'}$ with $T' \prec S$;
        \item $\pE[z_S^2] = 0$ if $S$ is put in $\PvZ$, else $\pE[z_S^2] = 1$.
    \end{itemize}
\end{proposition}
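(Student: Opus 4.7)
The plan is to prove the proposition by induction on $(S,T) \in \calM^{\leq D}_2$ with respect to the ordering $\preceq_2$, then deduce the four consequences for $z_S$ by specializing the substage assertions to the final substage $(S, \prev(S))$ of each stage. The base case is the very first substage $(S_1, \emptyset)$, where $S_1$ is the immediate successor of $\emptyset$; there $y_{S_1, \emptyset} = x^{S_1} - \pE[x^{S_1}] = x^{S_1} - \pE[x^{S_1}]\,z_\emptyset$, so it has the claimed form, is orthogonal to $z_\emptyset = 1$ by linearity of $\pE[\cdot]$, and is nonnegative in square by the no-positive-definiteness-problem hypothesis.

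For the inductive step at substage $(S,T)$, I would split according to whether $T \in \PvZ$. If $T \notin \PvZ$, then by induction $z_T$ is a normalized element with $\pE[z_T^2]=1$, and the standard one-step modified Gram--Schmidt computation yields $\pE[y_{S,T} \cdot z_T] = \pE[y_{S,\prev(T)} \cdot z_T] - \pE[y_{S,\prev(T)} \cdot z_T]\,\pE[z_T^2] = 0$; orthogonality to $z_{T'}$ for $T' \prec T$ then follows from the inductive hypothesis applied to $y_{S, \prev(T)}$ together with the orthogonality of $z_T$ against all earlier $z_{T'}$ (which is itself part of the induction from the stage-$T$ consequences). If $T \in \PvZ$, the no-pseudovariance-zero-problem hypothesis forces $\pE[y_{S, \prev(T)} \cdot z_T] = 0$, hence $y_{S,T} = y_{S, \prev(T)}$, and all three properties transfer verbatim from $(S, \prev(T))$ to $(S,T)$. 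In either case, the monomial-support shape $y_{S,T} = x^S - p(x)$ with $p \in \spn(x^{T'} : T' \preceq T)$ follows because by induction $z_T$ itself lies in $\spn(x^{T'} : T' \preceq T)$ (the inductive stage-$T$ consequences), and we are subtracting a scalar multiple of $z_T$ from $y_{S, \prev(T)}$. Nonnegativity $\pE[y_{S,T}^2] \ge 0$ is direct from the success hypothesis at substage $(S,T)$.

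For the stage-$S$ consequences, I specialize to $T = \prev(S)$. The representation $z_S = c\cdot x^S - p(x)$ with $c > 0$ comes from dividing the representation of $y_S$ either by $\sqrt{\pE[y_S^2]} > 0$ (when $S \notin \PvZ$) or by $1$ (when $S \in \PvZ$); the fact that $p(x)$ remains supported on $x^T$, $T \prec S$, is immediate. The span identity $\spn(x^{S'} : S' \preceq S) = \spn(z_{S'} : S' \preceq S)$ follows by a second, outer induction on $S$: the induction hypothesis gives the equality for $S' \prec S$, and the representation of $z_S$ as $c \cdot x^S$ plus previous monomials (equivalently, plus previous $z_{T}$'s) lets us pass the equality up to $S$. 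Orthogonality of $z_S$ against all $z_T$, $T \prec S$, is just a restatement of the middle bullet at substage $(S, \prev(S))$; the final bullet on $\pE[z_S^2]$ is by the definition of $z_S$.

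The only delicate point is the $\PvZ$ case: one might worry that we need $z_T$ to be truly orthogonal to the rest of the system, not merely to satisfy $\pE[y_{S, \prev(T)} \cdot z_T] = 0$ for all subsequent $S$. However, this is precisely what the \emph{pseudovariance zero problem} clause of the success hypothesis asserts at each relevant substage, so the induction goes through as written. The actual mathematical work of verifying \emph{when} this hypothesis holds is the subject of the subsequent sections; here, the proposition is a purely formal consequence of the Gram--Schmidt recursion under the assumption that the process has not yet encountered a positive definiteness problem.
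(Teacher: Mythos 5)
Your proof is correct, and it takes essentially the same approach the paper has in mind: the paper does not give a formal proof of this proposition, instead describing it in prose just above the statement as a routine consequence of the modified Gram--Schmidt recursion (with the one non-syntactic point being that the ``pseudovariance zero problem'' clause of the success hypothesis supplies $\pE[y_{S,\prev(T)}\cdot z_T]=0$ whenever $T\in\PvZ$). Your induction along $\preceq_2$ with the two-case split on $T\in\PvZ$ formalizes exactly this argument.
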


Our main Theorem~\ref{thm:main} follows provided the modified Gram--Schmidt process succeeds through all substages in~$\calM^{\leq D}_2$.  The reason is that then any multilinear  $p(x)$ of degree at most~$D$ can be expressed as $p(x) = \sum_{|T| \leq D} c_T z_T$.  This implies
    \[
        \pE[p(x)^2] = \sum_{|T|, |T'|\leq D} c_T c_{T'} \pE[z_T \cdot z_{T'}] = \sum_{\substack{|T| \leq D \\ T \not \in \PvZsubscript}} c_T^2 \geq 0,
    \]
using Proposition~\ref{prop:GS-deal}.

\subsection{Advanced accounting}

\begin{definition}
    A \emph{\subfactorplus} is defined to be a \subfactor, together with zero or more isolated variable-vertices.
\end{definition}
We still have that the union of \subfactorpluses is a \subfactorplus.  We extend the $\cons(H)$ and $\vbls(H)$ notation to \subfactorpluses, and also the planted distribution notation~$\eta_H$ (being the same as $\eta_{H'}$ where $H'$ is formed from~$H$ by deleting its isolated vertices).
\begin{definition}
    For a \subfactorplus~$H$, we extend the definition of revenue  by assigning \emph{two} credits for all isolated variable-vertices in~$H$.
\end{definition}
\begin{remark}  \label{rem:subplus}
    If $H$ is a \subfactorplus and $H'$ is the \subfactor formed by deleting isolated vertices, then $\cons(H') = \cons(H)$, $C(H') = C(H)$, and $R(H') \leq R(H)$.  Thus the Plausibility Assumption immediately implies that all \subfactorpluses with at most $2 \cdot \CONSMALL$ constraints are also plausible.
\end{remark}

\begin{lemma}       \label{lem:union-revenue0}
   Let $H$ be a \smallish \subfactorplus with $R(H) \leq r$.  Let $H'$ be a \smallish \subfactor with at most~$s$ leaf variables that are not in~$H$.  Assume $r+s \leq \price \cdot \CONSMALL$.  Then $H \cup H'$ is \smallish and satisfies $R(H \cup H') \leq r+s$.
\end{lemma}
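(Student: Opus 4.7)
The approach is to directly compare $R(H \cup H')$ with $R(H)$ by a credit/debit accounting, tracking how each vertex's status (isolated / leaf / interior) can change when the edges of $H'$ are added to $H$, then use the Plausibility Assumption to convert the revenue bound into a size bound. The intuition is that adding edges can only \emph{decrease} credits or \emph{increase} debits at vertices already in $H$; the only source of extra credits is new leaf variables of $H'$ that were not in $H$, and by hypothesis there are at most $s$ of these.

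More precisely, partition $\vbls(H)$ according to its type in $H$ vs.\ in $H \cup H'$. Since edges can only be added, an isolated vertex of $H$ either stays isolated, becomes a leaf (credit change $2 \to 1$), or becomes interior (credit change $2 \to 0$, plus possible debit); a leaf of $H$ either stays a leaf or becomes interior (credit change $1 \to 0$, plus possible debit); interior vertices of $H$ stay interior and only accumulate (weakly) more excess-edge debits. Any constraint-vertex in $\cons(H)$ has degree in $H \cup H'$ at least its degree in $H$, so its debit only weakly increases. Vertices in $\vbls(H') \setminus \vbls(H)$ have the same degree in $H \cup H'$ as in $H'$; those that are leaves of $H'$ contribute $1$ credit each (and there are at most $s$ of them), while interior ones contribute $0$ credits and some nonnegative debits. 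All the new contributions from $\cons(H') \setminus \cons(H)$ to constraint-debits are also nonnegative. Summing, $R(H\cup H') - R(H)$ equals a sum of non-positive terms plus $|L_{H'}^{\text{new}}| \leq s$, so $R(H \cup H') \leq R(H) + s \leq r + s$.

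For smallness, note that $H$ and $H'$ are each \smallish, so $|\cons(H \cup H')| \leq 2 \cdot \CONSMALL$. By the restated Plausibility Assumption together with Remark~\ref{rem:subplus} (which extends plausibility to \subfactorpluses of size at most $2\cdot \CONSMALL$), $H \cup H'$ is plausible. Hence
\[
\price \cdot |\cons(H \cup H')| = C(H \cup H') \leq R(H \cup H') \leq r + s \leq \price \cdot \CONSMALL,
\]
so $|\cons(H \cup H')| \leq \CONSMALL$, i.e., $H \cup H'$ is \smallish.

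There is no real obstacle here beyond bookkeeping; the only place one has to be slightly careful is in handling the \subfactorplus structure of~$H$ (so that isolated vertices of~$H$ contributing $2$ credits are correctly ``spent down'' when they acquire edges from $H'$), and in verifying that a constraint $f \in \cons(H) \cap \cons(H')$ with edges overlapping between $H$ and $H'$ still has $d_{H \cup H'}(f) \geq d_H(f)$, so its debit change is nonnegative. Both follow directly from the fact that $H \cup H'$ is defined by edge union.
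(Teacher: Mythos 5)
Your proposal is correct and takes essentially the same route as the paper's: the same credit/debit bookkeeping to show $R(H\cup H') \leq R(H) + s$, and the same ``union has at most $2\cdot\CONSMALL$ constraints, hence is plausible, hence its size is bounded by its revenue'' argument for smallness. The paper simply packages the latter step as a citation to Lemma~\ref{lem:smallness-trick}, whose $t=2$ case is exactly the chain of inequalities you wrote out explicitly.
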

\begin{proof}
    Adding $H'$ into $H$ cannot remove any of the debits of~$H$, and the only additional credits that can be created come from the~$s$ leaf variables in $H'$ that are not in~$H$.  (Since $H'$ is only a \subfactor it has no isolated variables.)  This establishes $R(H \cup H') \leq r+s$.  The \smallness conclusion follows immediately from Lemma~\ref{lem:smallness-trick} (here it does not matter that $H$ is a \subfactorplus).
\end{proof}
A key aspect to our main theorem will be that in some cases this revenue bound can be improved:
\begin{lemma}       \label{lem:rev-loss}
   In the setup of Lemma~\ref{lem:union-revenue0}, suppose also that $H'$ has $b$ edges that are ``boundary'' for~$H$, in the sense that each has exactly one endpoint in~$H$.  Then in fact $R(H \cup H') \leq r + s - b$.
\end{lemma}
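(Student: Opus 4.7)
The plan is to give a local, vertex-by-vertex accounting of revenue, sharpening the loose step in Lemma~\ref{lem:union-revenue0}. First, one observes that for any \subfactorplus $K$,
\[
R(K) \;=\; \sum_{v \in \vbls(K)} (2 - d_K(v)) \;+\; \sum_{f \in \cons(K)} (\Tcxty - d_K(f)),
\]
where $d_K(u)$ is the degree of $u$ in $K$, with isolated variables counted as degree~$0$ (contributing $+2$ each, matching their two credits). This is just a reorganization of Lemma~\ref{lem:alternate-accounting} and makes the accounting fully local.

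Writing $U_H = \vbls(H) \cup \cons(H)$, the central calculation is
\[
R(H \cup H') - R(H) \;=\; -\sum_{u \in U_H} \Delta d(u) \;+\; \sum_{u \in U_{H'} \setminus U_H} c_{H'}(u),
\]
where $\Delta d(u) = d_{H \cup H'}(u) - d_H(u) \geq 0$ and $c_{H'}(u)$ equals $2 - d_{H'}(u)$ or $\Tcxty - d_{H'}(u)$ depending on whether $u$ is a variable or constraint vertex. (For $u \notin U_H$ the degree in $H \cup H'$ equals the degree in $H'$, since no edge of $H$ can be incident to a vertex outside $U_H$.)

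For the $c_{H'}$ sum: because $H'$ is a \subfactor, every variable vertex of $H'$ has degree $\geq 1$, contributing at most $1$, with equality iff $u$ is a leaf of $H'$; and every constraint vertex has degree $\geq \Tcxty$, contributing $\leq 0$. By hypothesis at most $s$ leaves of $H'$ lie outside $\vbls(H)$, so this sum is at most $s$. For the $\Delta d$ sum: each new edge $e' \in \edges(H') \setminus \edges(H)$ contributes exactly the number of its endpoints in $U_H$, and each of the $b$ boundary edges contributes at least~$1$, so $\sum \Delta d(u) \geq b$. Combining, $R(H \cup H') \leq R(H) + s - b \leq r + s - b$, and \smallness is inherited verbatim from Lemma~\ref{lem:union-revenue0}. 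There is no substantive obstacle; the only mild bookkeeping point is that isolated variables of the \subfactorplus $H$ may acquire boundary edges and lose their extra credit, but this is handled uniformly by the local formula.
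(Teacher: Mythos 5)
Your proof is correct and rests on the same core observation as the paper's: each boundary edge causes a drop of exactly one in revenue at its endpoint in $H$, whether that endpoint is a constraint-vertex (new excess debit), a degree-$\geq 2$ variable (new excess debit), a leaf variable (credit lost), or an isolated variable of the \subfactorplus (credit drops from $2$ to $1$). The paper proves this by adding boundary edges one at a time with an explicit four-way case analysis on $w$; you instead first record the local identity $R(K) = \sum_{v}(2 - d_K(v)) + \sum_{f}(\Tcxty - d_K(f))$ (a correct repackaging of Lemma~\ref{lem:alternate-accounting}, since a variable of degree $d$ contributes $\mathbf{1}[d=1] + 2\cdot\mathbf{1}[d=0] - \max(0,d-2) = 2-d$) and then compute $R(H\cup H') - R(H)$ in a single stroke, so that the case analysis becomes implicit in the linearity of $2-d$ and $\Tcxty - d$ in $d$. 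This is a cleaner and slightly more robust presentation of the same accounting; it also makes Lemma~\ref{lem:union-revenue0} a special case ($b=0$) rather than a prerequisite, at the small cost of needing the local identity up front.
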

\begin{proof}
    Let $a$ be an edge in~$H'$ with exactly one endpoint, call it $w$, in~$H$.  We show that the addition of this edge to~$H$ causes a drop of~$1$ in revenue.  If $w$ is a constraint-vertex, then this follows because $w$ already had degree at least $\Tcxty$ in~$H$, so $a$ becomes a new excess edge in~$H$, creating a new debit.  So suppose $w$ is a variable-vertex.  If $w$ had degree at least~$2$ in $H$ then~$a$ is again excess and creates a new debit.  If $w$ had degree~$1$ in~$H$ then the addition of~$a$ changes $w$ from a leaf variable to an interior variable, removing~$1$ credit from~$H$.  Finally, if $w$ was isolated in~$H$ then the addition of~$a$ turns it into a leaf variable, again removing~$1$ credit from~$H$.  Repeating this argument for all $b$ boundary edges completes the proof.
\end{proof}

\ignore{
The following lemma shows that taking the closure of a set of variables~$S$, then adding the resulting variables $\vbls(\cl(S))$ and re-closing, does not increase the closure.
\begin{lemma} \label{lem:iterated-closure-is-closure}
Let $|S| \leq \price \cdot \CONSMALL$.  Then $\cl(S \cup \vbls(\cl(S))) = \cl(S)$.
\end{lemma}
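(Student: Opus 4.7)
The plan is to establish the two inclusions separately. Write $V = \vbls(\cl(S))$ and $S^{\ast} = S \cup V$ for brevity.

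The inclusion $\cl(S) \subseteq \cl(S^{\ast})$ is immediate from Remark~\ref{rem:ST-closure}, since $S \subseteq S^{\ast}$. The real work lies in the reverse inclusion, which I will obtain by showing that every \smallish $S^{\ast}$-closed \subfactor $H'$ is already contained in $\cl(S)$; then since $\cl(S^{\ast})$ is by definition the union of all such $H'$, the inclusion follows. My plan for containment is to verify that $\cl(S) \cup H'$ is itself a \smallish $S$-closed \subfactor, because then the definition of $\cl(S)$ as the union of all such objects immediately gives $\cl(S) \cup H' \subseteq \cl(S)$, hence $H' \subseteq \cl(S)$.

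That $\cl(S) \cup H'$ is a \subfactor is automatic, since the union of two \subfactors is a \subfactor. The crucial step is the leaf analysis, which I will carry out case-wise. For any variable-vertex $v$ in the union with $v \notin S$: if $v \in V$, then $\cl(S)$ being $S$-closed together with $v \notin S$ already forces $v$ to have degree at least $2$ inside $\cl(S)$, hence in the union; if $v \notin V$, then $v \notin S^{\ast}$, so $H'$ being $S^{\ast}$-closed forces $v$ to have degree at least $2$ inside $H'$, hence in the union. Either way, $v$ fails to be a leaf of $\cl(S) \cup H'$, so all leaves lie in $S$.

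The remaining requirement, smallness, is the step where the Plausibility Assumption enters, and is therefore the main obstacle in a careful write-up. I will apply Lemma~\ref{lem:smallness-trick} to the two-term sequence $\cl(S), H'$. Both terms are \smallish: the first by Theorem~\ref{thm:closures-are-small} applied to $S$, the second by hypothesis. The hypotheses of Lemma~\ref{lem:smallness-trick} require $R(\cl(S)) \leq \price \cdot \CONSMALL$ and $R(\cl(S) \cup H') \leq \price \cdot \CONSMALL$. The first is handed to us by Theorem~\ref{thm:closures-are-small}, which gives $R(\cl(S)) \leq |S| \leq \price \cdot \CONSMALL$. The second drops out of the leaf analysis above: since every leaf variable of $\cl(S) \cup H'$ lies in $S$, the total credit is at most $|S|$, and since debits are nonnegative, $R(\cl(S) \cup H') \leq |S| \leq \price \cdot \CONSMALL$. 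Lemma~\ref{lem:smallness-trick} then delivers smallness of $\cl(S) \cup H'$, completing the reduction and hence the proof.
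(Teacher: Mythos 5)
Your proof is correct and follows essentially the same route as the paper: both rely on the observation that adding an $(S\cup\vbls(\cl(S)))$-closed \subfactor to $\cl(S)$ can only produce leaves in $S$, which gives the revenue bound $R(\cdot)\leq|S|$ needed to invoke Lemma~\ref{lem:smallness-trick}. The paper phrases this as continuing the iterative union process used to build $\cl(S)$, whereas you handle one arbitrary small $S^\ast$-closed $H'$ at a time and show $\cl(S)\cup H'$ is already a \smallish $S$-closed \subfactor; this is a purely organizational difference.
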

\begin{proof}
By Remark~\ref{rem:ST-closure} it suffices to prove $\cl(S \cup \vbls(\cl(S))) \subseteq \cl(S)$.  By definition of $\cl(S)$, it suffices to prove that $\cl(S \cup \vbls(\cl(S)))$ is $S$-closed and that it is \smallish.  Every leaf vertex of $\cl(S \cup \vbls(\cl(S)))$ must also be a leaf vertex of $\cl(S)$, since $\cl(S) \subseteq \cl(S \cup \vbls(\cl(S)))$ (Remark~\ref{rem:ST-closure}); but every leaf vertex of $\cl(S)$ is in~$S$, and hence $\cl(S \cup \vbls(\cl(S)))$ is indeed $S$-closed.

either in $S$ or in $\vbls(\cl(S))$; but every variable in $\cl(S)$

To see (1), first note that $\cl(S) \subseteq \cl(S \cup \vbls(\cl(S)))$ by Remark~\ref{rem:ST-closure}.  This implies that any leaf vertex of $\cl(S \cup \vbls(\cl(S)))$ must be a leaf vertex of $\cl(S)$. Since $\cl(S)$ is $S$-closed, $\cl(S \cup \vbls(\cl(S)))$ must also be $S$-closed.

For (2), observe that $\cl(S \cup \vbls(\cl(S))$ has at most $|S|$ leaf variables because it is $S$-closed.  We then apply Lemma~\ref{lem:union-revenue0} with $H = \emptyset$ and $H' = \cl(S \cup \vbls(\cl(S)))$.
\end{proof}
}

\subsection{The key lemma}
\newcommand{\Hnew}{H_{\mathrm{new}}}
\newcommand{\Tnew}{T_{\mathrm{new}}}
\newcommand{\Told}{T_{\mathrm{old}}}
\begin{lemma}                                       \label{lem:KEY}
    Let $y = y(x)$ be a polynomial expression of degree~$d$.  Assume $2d \leq \price \cdot \CONSMALL$ and that $\pE[y \cdot p(x)] = 0$ for all polynomials $p$ of degree strictly less than~$d$.  Let $H$ be a \smallish \subfactorplus with $\vbls(H) \supseteq \vblsp(y)$ and $R(H) \leq r$, where we assume $r + d \leq \price \cdot \CONSMALL$. \todo{Probably could relax this r+d to r, which would make the final theorem statement a teeny bit more elegant, but who cares.}Finally, suppose $T$ is a monomial index with $|T| = d$ such that
    \[
        \pE[y \cdot x^T] \neq 0.
    \]
    Then there exists a \smallish \subfactorplus $\Hnew \supseteq H$ with $\vbls(\Hnew) \supseteq \vblsp(y) \cup T$ and $R(\Hnew) \leq r$. (In writing $\vblsp(y) \cup T$, we are using the abuse described in Notation~\ref{not:abuse}.)
\end{lemma}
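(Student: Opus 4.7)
The plan is to expand $\pE[y \cdot x^T]$ using the Fourier-style formula at the heart of the proof of Theorem~\ref{thm:closure}, to use the orthogonality of $y$ to force the surviving contribution to come from a ``richly connected'' witness subgraph $H^*$, and then to absorb $H^*$ into $H$ via Lemma~\ref{lem:rev-loss} without increasing revenue.

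First I would set $V := \vblsp(y)$ (so $|V| \leq d$) and observe $|V \cup T| \leq 2d \leq \price \cdot \CONSMALL$.  Theorem~\ref{thm:closures-are-small} then gives that $C := \cl(V \cup T)$ is \smallish, and Theorem~\ref{thm:consistent-distribution} yields $\pE[y \cdot x^T] = \E_{\eta_C}[y \cdot x^T]$.  Unpacking this expectation via equations~\eqref{eqn:use-me}--\eqref{eqn:planted-formula} and expanding $y$ in the monomial basis expresses $\pE[y \cdot x^T]$ as a sum, over \subfactors $H' \subseteq C$, of $\Phi(H') \cdot A(H')$; here $\Phi(H') = \prod_{f \in \cons(H')} \widehat{\mu_f}(N_{H'}(f))$ captures the ``$(\Tcxty-1)$-wise Fourier'' contribution of the constraints, while $A(H')$ is a specific linear functional of the coefficients of $y$ dictated by how the leaves of $H'$ intersect $V$ and $T$.

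The crux is then to apply the orthogonality hypothesis to zero out all ``deficient'' terms.  Concretely, if some variable $i \in T$ sits as a leaf of $H'$ that can be removed while keeping $H'$ a $V \cup (T \setminus \{i\})$-closed \subfactor, then $A(H')$ rearranges into an expression of the form $\pE[y \cdot p(x)]$ with $\deg(p) < d$, and hence vanishes by hypothesis.  Since $\pE[y \cdot x^T] \neq 0$, at least one surviving $H^*$ must ``$d$-cover'' $T$: for every $i \in T \setminus \vbls(H)$, either $i$ sits as an interior vertex of $H^*$, or $i$ is a leaf of $H^*$ whose covering constraint has another edge reaching into $V \subseteq \vbls(H)$.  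This is precisely the ``$d$-connectivity'' between $V$ and $T$ alluded to in the introductory sketch, and it guarantees at least $s := |T \setminus \vbls(H)|$ boundary edges between $H^*$ and $H$.

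Finally, take $\Hnew := H \cup H^*$, augmenting with isolated vertices only if some element of $T$ happens not to lie in $\vbls(H^*) \cup \vbls(H)$ (the covering property rules this out).  By the boundary-edge count, Lemma~\ref{lem:rev-loss} gives $R(\Hnew) \leq R(H) + s - b \leq r$, and Lemma~\ref{lem:union-revenue0} (combined with $r + d \leq \price \cdot \CONSMALL$) yields that $\Hnew$ is \smallish.  The main obstacle is the third paragraph: rigorously identifying which deficient $H'$ force $A(H') = 0$ and producing the accompanying degree-$<d$ test polynomial $p$.  This is the combinatorial heart of the argument, formalizing the ``$d$-connectivity'' intuition from the star-instance example in the techniques sketch.
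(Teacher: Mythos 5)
Your proposal diverges from the paper's proof in a way that leaves a real gap at the central step, and some of the supporting choices are slightly off as well.

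The paper does \emph{not} isolate a single witness subfactor $H^*$ inside the closure via a Fourier expansion. Instead it takes $\Hnew = H \cup B$ where $B = \cl(\vbls(H) \cup T)$ --- note: the closure of $\vbls(H) \cup T$, not of $\vblsp(y) \cup T$ as you propose --- and proves the revenue bound $R(\Hnew) \leq r$ directly. It shows that $B$ has at least $|T \setminus \vbls(H)|$ boundary edges into $H$ by \emph{contradiction}: if not, one gets a small set $V'$ of variable-vertices, of size $< d$, whose deletion disconnects $T$ from $\vbls(H)$ within $B$ (Claim~\ref{claim:disconnect}). Conditioning on $\bx_{V'}$ and using conditional independence of $\by$ and $\ol{\bx}^T$ given $\bx_{V'}$ under the planted distribution $\eta_B$, together with the orthogonality hypothesis applied to $\prod_{i \in V'} \indet{c_i}{i}$ (a polynomial of degree $< d$), forces $\E_B[\by \cdot \ol{\bx}^T] = 0$, contradicting $\pE[y \cdot x^T] \neq 0$. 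No per-subfactor analysis of which terms ``survive'' is needed.

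The gap in your approach is exactly where you flag it: the claim that deficient $H'$ can be killed by rewriting $A(H')$ as $\pE[y \cdot p(x)]$ with $\deg p < d$ is not established, and it is not clear that the surviving terms can be arranged to give a \emph{single} $H^*$ with the covering/boundary-edge property. Moreover, even granting the covering property, your boundary-edge count is not rigorous: if $i \in T \setminus \vbls(H)$ is a leaf of $H^*$ whose covering constraint $f$ has an edge into $\vbls(H)$, that edge is only a boundary edge when $f \notin \cons(H)$; when $f$ already lies in $H$ (which can happen, since $H$ may contain constraints) both endpoints can lie in $H$ and you get nothing. Using the full $\cl(\vbls(H) \cup T)$ and the cut argument as in the paper sidesteps both difficulties at once: you never need to classify individual subfactors, and the contradiction argument counts boundary edges indirectly via the size of a vertex separator rather than by hunting for them constraint-by-constraint.
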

\begin{proof}
    Let us define
    \[
        \Tnew = T \setminus \vbls(H), \qquad \Told = T \cap \vbls(H), \qquad B = \cl(\vbls(H) \cup T), \qquad \Hnew = H \cup B.
    \]
    First, we show that the \subfactorplus $\Hnew$ is \smallish; \todo{here's where we used the bound on r+d; note we eventually prove R(H') at most r}it follows that the \subfactor $B$ is also \smallish.
    \begin{claim} \label{claim:hnew-small}
    $\Hnew$ is \smallish.
    \end{claim}
    \begin{proof}
    Write $\cl(\vbls(H) \cup T) = H'_1 \cup \cdots \cup H'_{t}$ for \smallish $(\vbls(H) \cup T)$-closed \subfactors $H'_i$.  Let $H'_{< j} \coloneqq H'_1 \cup \cdots \cup H'_{j-1}$, and let $s_j$ denote the number of leaves of $H_j$ that are not in~$H \cup H'_{< j}$.  Then it is easy to see that $\sum_{j = 1}^ t s_j \leq d$.  Now, iteratively apply Lemma~\ref{lem:union-revenue0} to $H \cup H'_1$, $(H \cup H'_1) \cup H'_2$, $((H \cup H'_1) \cup H'_2) \cup H'_3$, \dots to prove the claim.
    \end{proof}

    Next, observe that we have $\vbls(\Hnew) \supseteq \vbls(H) \supseteq \vblsp(y)$;   therefore to prove the lemma, it suffices to show that $\vbls(\Hnew) \supseteq \Tnew$ and that $R(\Hnew) \leq R(H)$.

    For the first of these, given an $(i,c) \in T$ we write $\ol{x}_i = \indet{c}{i} - \pE[\indet{c}{i}]$ and $\ol{x}^T = \prod_{i \in T} \ol{x}_i$.  Observe that $\ol{x}^T - x^T$ is a polynomial of degree strictly less than~$d$; thus $\pE[y \cdot (\ol{x}^T - x^T)] = 0$ and so $\pE[y \cdot \ol{x}^T] \neq 0$.  Now using Theorem~\ref{thm:consistent-distribution} and $B \supseteq \cl(\vbls(H) \cup T) \supseteq \cl(\vblsp(y \cdot \ol{x}^T))$, we conclude
    \begin{equation}            \label{eqn:pEzero}
        \E_{B}[\by \cdot \ol{\bx}^{T}] \neq 0.
    \end{equation}
    In light of this, we claim that every variable $j \in \Tnew$ must appear as a vertex in~$B$ (and hence in $\vbls(\Hnew)$, as needed).  For if $j \not \in \vbls(B)$, then $\ol{\bx}_j$ is independent of all other random variables $\bx_i$ under~$\eta_B$, and so
    \begin{align*}
        \E_{B}[\by \cdot \ol{\bx}^{T}] &= \E_B[\ol{\bx}_j] \cdot \E_{B}[\by \cdot \ol{\bx}^{\Told} \cdot \ol{\bx}^{\Tnew \setminus \{j\}}] \tag{using $j \notin \vbls(H) \supseteq \vblsp(y)$} \\
        &= \pE[\ol{\bx}_j] \cdot \E_{B}[\by \cdot \ol{\bx}^{\Told} \cdot \ol{\bx}^{\Tnew \setminus \{j\}}] = 0\tag{using $B \supseteq \cl(T) \supseteq \cl(\{j\})$ and $\pE[\ol{\bx}_j] = 0$}
    \end{align*}
    in contradiction to~\eqref{eqn:pEzero}.

    It remains to show that $R(\Hnew) \leq R(H)$, which we will do using Lemma~\ref{lem:rev-loss} (with $H' = B$, and $s = |\Tnew|$, recalling that all of $B$'s leaves are in $\vbls(H) \cup T$).  We must show that the number of ``boundary edges'' --- i.e., edges in~$B$ that have exactly one endpoint in~$H$ --- is at least $|\Tnew|$.  Supposing otherwise, the set
    \[
        V = \{\text{variable-vertices~$v \in B$} : \text{$v$ is incident on a boundary edge}\} \cup \Told
    \]
    would satisfy $|V| < |\Tnew| + |\Told| = |T| \leq d$.  We will show that this contradicts~\eqref{eqn:pEzero}.

     \begin{claim}  \label{claim:disconnect}
        The deletion of variable-vertices~$V$ from~$B$ disconnects all variables in~$T$ from all variables in~$\vbls(H)$ \emph{within~$B$}.  (Note that when a variable does not even appear in a subgraph, it is trivially disconnected from all other variables.)
     \end{claim}
     \begin{proof}
        It suffices to show that deleting~$V$ disconnects $\Tnew$ from $\vbls(H)$ within~$B$, as the vertices of~$\Told$ are already in~$V$.   Suppose $j \in \Tnew$ is connected to some variable~$i \in \vbls(H)$ by a path within~$B$.  Since $j \notin \vbls(H)$, there must be some edge in this path that has exactly one endpoint in~$H$. This edge is a boundary edge, and hence the variable-vertex incident on it is in~$V$.  Thus we have indeed established that \emph{every} path within~$B$ from a variable in~$\Tnew$ to a variable in~$\vbls(H)$ must pass through a variable in~$V$.
     \end{proof}

    Recall that the proof is complete once we show that  $|V| < d$ contradicts~\eqref{eqn:pEzero}.  Now
    \begin{align}
        \E_{B}[\by \cdot \ol{\bx}^{T}]
        &= \E_{B}\Bigl[\by \cdot \ol{\bx}^{T} \cdot \sum_{\vec{c} \in \Omega^{V}} \bone[\bx_i = {c}_i\ \forall i \in V]\Bigr]\nonumber\\
        &= \sum_{\vec{c} \in \Omega^{V}} \E_{B}\bigl[\by \cdot \ol{\bx}^{T} \cdot \bone[\bx_i = {c}_i\ \forall i \in V]\bigr]. \label{eqn:every-0}
    \end{align}
    We claim that every summand above equals~$0$.  The reason is that for each summand~$\vec{c}$, either $\bone[\bx_i = {c}_i\ \forall i \in V]$ is always~$0$ under~$\eta_B$ (establishing the claim), or else we may condition on the event, yielding
    \[
        \E_{B}\bigl[\by \cdot \ol{\bx}^{T} \cdot \bone[\bx_i = {c}_i\ \forall i \in V]\bigr] = \Pr_B[\bx_i = {c}_i\ \forall i \in V] \cdot \E_{B}[\by \cdot \ol{\bx}^{T} \mid \bx_i = {c}_i\ \forall i \in V].
    \]
    By Claim~\ref{claim:disconnect} and the definition of the planted distribution~$\eta_B$ (and $\vblsp(y) \subseteq \vbls(H)$), we have that $\by$ and $\ol{\bx}^T$ are conditionally independent under~$\eta_B$, conditioned on all $(\bx_i : i \in V)$.  Therefore
    \[
        \E_{B}[\by \cdot \ol{\bx}^{T} \mid \bx_i = c_i\ \forall i \in V] = \E_{B}[\by\mid \bx_i = c_i\ \forall i \in V] \cdot \E_{B}[\ol{\bx}^{T}\mid \bx_i = c_i\ \forall i \in V].
    \]
    Combining the previous two equations yields
    \[
        \E_{B}\bigl[\by \cdot \ol{\bx}^T \cdot \bone[\bx_i = c_i\ \forall i \in V]\bigr] =
        \E_{B}\bigl[\by \cdot \bone[\bx_i = c_i\ \forall i \in V]\bigr] \cdot  \E_{B}[\ol{\bx}^T \mid \bx_i = c_i\ \forall i \in V].
    \]
    Finally, using $|V| < d$ we will show that the first factor above is~$0$ (thereby establishing the claim that every term in~\eqref{eqn:every-0} is~$0$, in contradiction to~\eqref{eqn:pEzero}).  To see this, we have
    \[
        \E_{B}\bigl[\by \cdot \bone[\bx_i = c_i\ \forall i \in V]\bigr] = \pE\bigl[y \cdot \prod_{i \in V} \indet{c_i}{i}\bigr]
    \]
    because $\cl(\vblsp(y) \cup V) \subseteq \cl(\vbls(H) \cup \vbls(B)) \subseteq B$, where we used Theorem~\ref{thm:iterated-closure-is-closure}.  
    \todo{remark that here's another place we used 2d at most blah-blah?}But this pseudoexpectation is indeed~$0$ by the lemma's assumption, because $\prod_{i \in V} \indet{c_i}{i}$ is a polynomial expression of degree at most~$|V| < d$.
\end{proof}

\subsection{Gram--Schmidt details}

We wish to show that Gram--Schmidt succeeds through substage $(S,T)$ for all $(S,T) \in \calM^{\leq D}_2$.  We will do this by induction along the order~$\preceq_2$.  The key to showing that no positive definiteness problem is encountered at stage~$(S,T)$ will be the existence of a \emph{witness}:
\begin{definition}
    A \emph{witness} for substage $(S,T) \in \calM^{\leq D}_2$ is defined to be a \smallish \subfactorplus $H_{S,T}$ with $\vbls(H_{S,T}) \supseteq \vblsp(y_{S,T})$ and $R(H_{S,T}) \leq 2D$.
\end{definition}
\begin{remark}      \label{rem:OK-start}
    For any substage of the form $(S,\emptyset)$, we may always take as a witness the \subfactorplus consisting of all variables in~$S$ as isolated vertices.
\end{remark}
As the below proposition shows, witnesses are useful for showing that one kind of positive definiteness problem does not occur.  (They will also assist in showing the other kind does not occur.)
\begin{proposition}                                     \label{prop:witness-okay1}
    The existence of a witness $H_{S,T}$ for substage $(S,T)$ implies $\pE[y_{S,T}^2] \geq 0$.
\end{proposition}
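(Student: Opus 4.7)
The plan is to exhibit a \smallish \subfactorplus containing $\vblsp(y_{S,T}^2)$ together with the closures of all monomials of $y_{S,T}^2$, and then invoke Theorem~\ref{thm:consistent-distribution} to rewrite $\pE[y_{S,T}^2]$ as a genuine expectation, which is automatically nonnegative since $y_{S,T}(\bx)^2 \geq 0$ pointwise.

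Concretely, I would take $H \coloneqq H_{S,T} \cup \cl(\vblsp(y_{S,T}))$. The first step is to show that $H$ is \smallish. The witness $H_{S,T}$ is \smallish with $R(H_{S,T}) \leq 2D$ by definition. Since $y_{S,T}$ has degree at most~$D$, one has $|\vblsp(y_{S,T})| \leq \degmlin(y_{S,T}^2) \leq 2D \leq \price \cdot \CONSMALL$ (using $D \leq \tfrac{1}{3}\price \cdot \CONSMALL$), so Theorem~\ref{thm:closures-are-small} yields that $\cl(\vblsp(y_{S,T}))$ is \smallish. Moreover, every leaf of $\cl(\vblsp(y_{S,T}))$ lies in $\vblsp(y_{S,T}) \subseteq \vbls(H_{S,T})$, i.e.\ the number of leaves of $\cl(\vblsp(y_{S,T}))$ not already in $H_{S,T}$ is $s = 0$. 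Applying Lemma~\ref{lem:union-revenue0} with $r = 2D$ and $s = 0$ (and noting $r + s = 2D \leq \price \cdot \CONSMALL$) shows $H$ is \smallish.

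Next, I would verify the hypothesis of Theorem~\ref{thm:consistent-distribution} with the polynomial $p(x) = y_{S,T}(x)^2$, whose multilinear-degree is at most $2D \leq \price \cdot \CONSMALL$. For every monomial $M$ of $y_{S,T}^2$, one has $\vblsp(M) \subseteq \vblsp(y_{S,T})$, so by Remark~\ref{rem:ST-closure} we get
\[
\cl(\vblsp(M)) \subseteq \cl(\vblsp(y_{S,T})) \subseteq H.
\]
Therefore Theorem~\ref{thm:consistent-distribution} applies and gives
\[
\pE[y_{S,T}^2] = \E_{H}[y_{S,T}(\bx)^2].
\]
Since $y_{S,T}(\bx)$ is a real-valued random variable under $\eta_H$, its square is nonnegative pointwise, and hence the right-hand side is $\geq 0$, as required.

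The only thing that could conceivably go wrong is the smallness of $H$, but the accounting lemmas of the previous subsection were designed precisely to handle unions of this form, and the bound $R(H_{S,T}) \leq 2D$ built into the definition of a witness leaves enough slack (given $D \leq \tfrac{1}{3}\price \cdot \CONSMALL$) to absorb $\cl(\vblsp(y_{S,T}))$ without creating any new leaves, which is why no new revenue is needed.
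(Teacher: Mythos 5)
Your overall strategy — form $H = H_{S,T} \cup \cl(\vblsp(y_{S,T}))$, show it is \smallish, and apply Theorem~\ref{thm:consistent-distribution} to convert $\pE[y_{S,T}^2]$ into the pointwise-nonnegative $\E_H[y_{S,T}(\bx)^2]$ — is exactly the paper's route. But there is a genuine gap in your smallness argument. You claim $|\vblsp(y_{S,T})| \leq \degmlin(y_{S,T}^2) \leq 2D$ and then invoke Theorem~\ref{thm:closures-are-small}. That first inequality is false in general: $\degmlin$ bounds the number of variables in any \emph{single} monomial, whereas $\vblsp$ is the union of variable-supports over \emph{all} monomials (consider $p = \sum_i x_{2i-1}x_{2i}$, which has $\degmlin(p)=2$ but $|\vblsp(p)|$ arbitrarily large). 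Indeed, the whole reason the proof needs the witness $H_{S,T}$ is precisely that $y_{S,T}$ may be supported on far more than $2D$ variables — only $|\vbls(H_{S,T})|$, which can be as large as $\maxarity \cdot \CONSMALL$ plus the isolated vertices. So Theorem~\ref{thm:closures-are-small} cannot be applied to $\vblsp(y_{S,T})$, and your invocation of Lemma~\ref{lem:union-revenue0} is left without a verification that $H' = \cl(\vblsp(y_{S,T}))$ is \smallish.

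The correct way to establish smallness of $H$ is the iterative argument the paper uses elsewhere (cf.\ the proof of Claim~\ref{claim:hnew-small}): write $\cl(\vblsp(y_{S,T})) = H'_1 \cup \cdots \cup H'_t$, a union of \smallish $\vblsp(y_{S,T})$-closed \subfactors (this is what the definition of closure gives you), and apply Lemma~\ref{lem:union-revenue0} one $H'_j$ at a time. At each step the number of ``new'' leaf variables is zero — because every leaf of every $H'_j$ lies in $\vblsp(y_{S,T}) \subseteq \vbls(H_{S,T})$, which is already part of the running union — so the revenue stays at most $r = 2D \leq \price\cdot\CONSMALL$ throughout, and each partial union remains \smallish. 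The observation you made (that $s = 0$) is the right one; you just need to run it iteratively across a decomposition of the closure rather than asserting the closure is \smallish outright. The remainder of your argument (the monomial-by-monomial containment of closures in $H$ via Remark~\ref{rem:ST-closure} and the application of Theorem~\ref{thm:consistent-distribution}) is fine.
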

\begin{proof}
    By Lemma~\ref{lem:union-revenue0}, we have that $\ol{H} \coloneqq H_{S,T} \cup \cl(\vblsp(y_{S,T}))$ is \smallish. Thus ${\pE[y_{S,T}^2] = \E_{\ol{H}}[\by_{S,T}^2] \geq 0}$, using Theorem~\ref{thm:consistent-distribution}.
\end{proof}

We now come to our main technical theorem:
\begin{theorem}                                     \label{thm:main-technical}
    Let $(S,T) \in \calM^{\leq D}_2$.  Then:
     \begin{enumerate}[label=(\roman*)]
        \item \label{item:witness} Given any witness $H_{S,\emptyset}$ for substage $(S,\emptyset)$, there is a witness $H_{S,T}$ for substage $(S,T)$ satisfying $H_{S,T} \supseteq H_{S,\emptyset}$.
        \item \label{item:success} The Gram--Schmidt process succeeds through substage~$(S,T)$.
     \end{enumerate}
\end{theorem}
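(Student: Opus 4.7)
The plan is to establish (i) and (ii) simultaneously by strong induction on $(S, T) \in \calM^{\leq D}_2$ along $\preceq_2$. The base case is $(S_0, \emptyset)$ for the smallest nonempty monomial index $S_0$: part~(i) is immediate with the given witness $H_{S_0, \emptyset}$, and part~(ii) follows from Proposition~\ref{prop:witness-okay1} since there is no pseudovariance-zero condition to check when $T = \emptyset$.

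For the inductive step at $(S, T)$ with $T \neq \emptyset$, the key consequence of the inductive hypothesis is that Gram--Schmidt has succeeded through $(S, \prev(T))$, and Proposition~\ref{prop:GS-deal} then ensures $y_{S, \prev(T)}$ is orthogonal under $\pE[\cdot]$ to $\spn(x^{T'} : T' \prec T)$, hence to every polynomial of degree strictly less than $|T|$. If $\pE[y_{S, \prev(T)} \cdot z_T] = 0$, then $y_{S,T} = y_{S, \prev(T)}$ and I would simply take $H_{S,T} := H_{S, \prev(T)}$ from inductive~(i); this settles~(i), Proposition~\ref{prop:witness-okay1} gives $\pE[y_{S,T}^2] \geq 0$, and the pseudovariance-zero requirement holds vacuously.

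The substantive case is $c := \pE[y_{S, \prev(T)} \cdot z_T] \neq 0$. Writing $z_T = c_T x^T - q(x)$ with $c_T > 0$ and $q$ supported on $x^{T'}$ for $T' \prec T$ (Proposition~\ref{prop:GS-deal}), the orthogonality above forces $\pE[y_{S, \prev(T)} \cdot x^T] = c/c_T \neq 0$. I would first rule out the pseudovariance-zero problem: if $T \in \PvZ$ then $\pE[z_T^2] = 0$, and taking a \smallish common subgraph $\ol{H}$ containing $H_{S, \prev(T)}$, the inductively guaranteed witness $H_T := H_{T, \prev(T)}$, and the relevant closures, Theorem~\ref{thm:consistent-distribution} yields $\pE[y_{S, \prev(T)} \cdot z_T] = \E_{\ol{H}}[y_{S, \prev(T)} \cdot z_T]$ and $\pE[z_T^2] = \E_{\ol{H}}[z_T^2] = 0$; honest Cauchy--Schwarz in $\eta_{\ol{H}}$ then forces the former to vanish, contradicting $c \neq 0$, so $T \notin \PvZ$. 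For the witness construction I would invoke Lemma~\ref{lem:KEY} with $y = y_{S, \prev(T)}$, $H = H_{S, \prev(T)}$, $r = 2D$, and $d = |T|$: the hypothesis $r + d \leq 3D \leq \price \cdot \CONSMALL$ holds by the standing bound on $D$, orthogonality was just verified, and $\pE[y \cdot x^T] \neq 0$ as shown. This produces $H' \supseteq H_{S, \prev(T)}$ covering $\vblsp(y_{S, \prev(T)}) \cup T$ with revenue $\leq 2D$; any residual variables in $\vblsp(z_T) \setminus \vbls(H')$ would be absorbed by iterating Lemma~\ref{lem:KEY} on $y_{S, T}$ (which, since $T \notin \PvZ$, is now orthogonal to all of $\spn(x^{T'} : T' \preceq T)$), each iteration preserving the revenue bound. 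Proposition~\ref{prop:witness-okay1} applied to the resulting $H_{S,T}$ then closes out~(ii).

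The main obstacle I anticipate is the revenue bookkeeping. Naively unioning $H_{S, \prev(T)}$ and $H_T$ (for the Cauchy--Schwarz step) yields revenue up to $4D$, which the standing bound $D \leq \frac{1}{3} \price \cdot \CONSMALL$ does not accommodate; constructing $\ol{H}$ must therefore proceed by revenue-preserving closure extensions in the spirit of Lemma~\ref{lem:KEY}, exploiting the overlap between the two witnesses induced by their shared dependence on the factor graph around $T$. The analogous concern arises for the iterated extensions in the witness-construction step. Getting both to fit within the $\price \cdot \CONSMALL$ \smallness budget --- ideally via a single careful analysis that reuses the boundary-edge/revenue-loss argument of Lemma~\ref{lem:rev-loss} --- is where I expect most of the work to lie.
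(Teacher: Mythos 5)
Your overall induction structure is right, and you've correctly identified the two hard points (PvZ elimination and absorbing $\vblsp(z_T)$). But you flag both as "obstacles I anticipate" without resolving them, and the paper's solutions are genuinely different from the vague remedies you sketch.

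For the PvZ step, you want a single \smallish subgraph $\ol{H}$ containing $H_{S,\prev(T)}$, $H_{T,\prev(T)}$, and the relevant closures, and you correctly note that unioning those two witnesses could push revenue toward $4D$. The paper sidesteps this entirely: since $y_{S,\prev(T)} = x^S - q(x)$ with $q$ supported on $x^{T'}$ for $T' \prec T$, and $z_T$ is (inductively) orthogonal to all such $q$, one has $\pE[y_{S,\prev(T)} \cdot z_T] = \pE[x^S \cdot z_T] = \pE[x^S \cdot y_{T,\prev(T)}]$. Thus the only subgraph needed is $\ol{H} = H_{T,\prev(T)} \cup \cl(S)$, and $\cl(S)$ has at most $|S| \le D$ leaves, so Lemma~\ref{lem:union-revenue0} gives $R(\ol{H}) \leq 2D + D = 3D$ within budget. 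You never need $H_{S,\prev(T)}$ in this step at all --- replacing $y_{S,\prev(T)}$ by $x^S$ is the key reduction, not a "revenue-preserving closure extension."

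For absorbing $\vblsp(z_T)$ into the witness, your proposal to "iterate Lemma~\ref{lem:KEY} on $y_{S,T}$" does not go through: Lemma~\ref{lem:KEY} only adds variables when some $\pE[y \cdot x^{T''}] \neq 0$, and after orthogonalizing against $z_T$ there is no reason such a nonvanishing correlation exists that would force in precisely the missing variables of $z_T$. The paper instead observes that the $\Hnew$ produced by Lemma~\ref{lem:KEY} contains $T$ and has $R(\Hnew) \leq 2D$, hence qualifies as a witness for substage $(T,\emptyset)$. Applying the inductive hypothesis (part~(i)) to substage $(T,\prev(T))$ with \emph{this} $\Hnew$ playing the role of "given witness $H_{T,\emptyset}$" yields a witness $\Hnew' \supseteq \Hnew$ for $(T,\prev(T))$, which by definition covers $\vblsp(y_{T,\prev(T)}) = \vblsp(z_T)$ while keeping $R(\Hnew') \le 2D$. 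This re-entrant use of the induction hypothesis on a \emph{different} substage (legal since $(T,\prev(T)) \prec_2 (S,T)$) --- and the fact that part~(i) was deliberately stated for an arbitrary given $H_{S,\emptyset}$ to enable exactly this call --- is the second key idea your proposal is missing.
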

\begin{proof}
    The proof will be by (strong) induction on~$(S,T)$ along~$\preceq_2$.  Observe that in proving part~\ref{item:success} of the theorem, by induction we only need to show that no positive definiteness problem occurs at substage~$(S,T)$.  Further, if we can inductively establish part~\ref{item:witness} of the theorem, then Remark~\ref{rem:OK-start} and Proposition~\ref{prop:witness-okay1} imply that $\pE[y_{S,T}^2] \geq 0$. Thus to also establish part~\ref{item:success}, it would only remain to prove that no ``pseudovariance zero problem'' problem occurs.  Also, observe that the pseudovariance zero problem can never occur when $T = \emptyset$. Thus for substages $(S,\emptyset)$, we only need to establish part~\ref{item:witness} of the theorem statement.  But part~\ref{item:witness} is trivial for $(S,\emptyset)$ substages.  Thus all substages of the form $(S,\emptyset)$ are taken care of, including the base case of the induction (namely substage $(\{(i_0,c_0)\}, \emptyset)$, where~$\{(i_0,c_0)\}$ is the first singleton in the order~$\preceq$).

    Thus it remains to establish, for a particular substage $(S,T)$ with $T \neq \emptyset$, that part~\ref{item:witness} of the theorem statement holds, and also that no pseudovariance zero problem occurs.  Given any witness $H_{S,\emptyset}$ for substage $(S,\emptyset)$, by induction we may obtain a witness $H_{S,\prev(T)} \supseteq H_{S,\emptyset}$ for substage $(S,\prev(T))$.  We now distinguish two cases.
    \paragraph{Case 1:} $\pE[y_{S,\prev(T)} \cdot z_T] = 0$.  In this case, $y_{S,T} = y_{S,\prev(T)}$ and therefore $\pE[y_{S,T} \cdot z_T] = 0$.  Thus certainly no pseudovariance zero problem occurs, and also we can establish part~\ref{item:witness} of the theorem statement simply by taking $H_{S,T} = H_{S,\prev(T)}$.  Thus the inductive step is completed in this case.
    \paragraph{Case 2:} $\pE[y_{S,\prev(T)} \cdot z_T] \neq 0$.  This is where the main work in the proof occurs.  First, we will show in this case that $T \in \PvZ$ is impossible, and hence the pseudovariance zero problem cannot have occurred.  We can then complete the induction by finding a witness $H_{S,T} \supseteq H_{S,\prev(T)}$ for substage~$(S,T)$.\\

    First, suppose for contradiction that $T \in \PvZ$.  We have that $y_{S,\prev(T)} = {x}^S - q(x)$ for some $q(x)$ supported on monomials ${x}^{T'}$ with $T' \preceq \prev(T) \prec T$.  By Proposition~\ref{prop:GS-deal} and induction, $z_T$ is orthogonal to all such polynomials.  Thus we deduce
    \begin{equation}    \label{eqn:contra1}
        0 \neq \pE[y_{S,\prev(T)} \cdot z_T]  = \pE[{x}^S \cdot z_T] = \pE[{x}^S \cdot y_{T,\prev(T)}],
    \end{equation}
    the last equality because $T \in \PvZ$ and hence $z_T = y_T = y_{T,\prev(T)}$. By induction (and using Remark~\ref{rem:OK-start}), we have a witness $H_{T,\prev(T)}$ for $y_{T,\prev(T)}$. By Lemma~\ref{lem:union-revenue0} (using $2D + |S| \leq 3D \leq \price \cdot \CONSMALL$) we have that $\overline{H} \coloneqq H_{T,\prev(T)} \cup \cl(S)$ is \smallish.  (In writing $\cl(S)$ we used the abuse from Notation~\ref{not:abuse}.).  Now $\vbls(\overline{H}) \supseteq \vbls(H_{T,\prev(T)}) \cup S \supseteq \vblsp({x}^S \cdot y_{T,\prev(T)})$, so by Theorem~\ref{thm:consistent-distribution} we have
    \[
        \pE[{x}^S \cdot y_{T,\prev(T)}] = \E_{\overline{H}}[{\bx}^S \cdot \by_{T,\prev(T)}], \quad \text{and also} \quad  \E_{\overline{H}}[\by_{T,\prev(T)}^2] = \pE[y_{T,\prev(T)}^2]  = \pE[z_T^2] = 0,
    \]
    the last equality because we're assuming $T \in \PvZ$.  But the second identity above shows that $\by_{T,\prev(T)}^2$ is identically~$0$ under $\eta_{\overline{H}}$, meaning the first expression above must be~$0$.  This contradicts~\eqref{eqn:contra1}.

    Having ruled out the pseudovariance zero problem, we can complete the induction by finding a witness $H_{S,T} \supseteq H_{S,\prev(T)}$ for substage~$(S,T)$.  By Proposition~\ref{prop:GS-deal} we have that $z_T = c \cdot x^{T} - p(x)$ for some constant $c > 0$ and some polynomial~$p(x)$ supported on monomials $x^{T'}$ with $T' \preceq \prev(T)$.  Furthermore, $y_{S,\prev(T)}$ is orthogonal to~$p(x)$ under~$\pE[\cdot]$.  Thus, since we are in Case~2, we may deduce that
    \begin{equation}    \label{eqn:contra2}
        \pE[y_{S,\prev(T)} \cdot x^T] \neq 0.
    \end{equation}
    We may now apply Lemma~\ref{lem:KEY} (with $y = y_{S,\prev(T)}$, $H = H_{S,\prev(T)}$, and $r = 2D$) to obtain a \smallish \subfactorplus $\Hnew \supseteq H_{S,\prev(T)}$ with $\vbls(\Hnew) \supseteq \vblsp(y_{S,\prev(T)}) \cup T$ and $R(\Hnew) \leq 2D$.  This $\Hnew$ is \emph{almost} able to serve as the witness for substage $(S,T)$. The only deficiency is that, although it contains all the variables in $y_{S,\prev(T)}$ and~$x^T$, it doesn't necessarily contain all the variables appearing in~$z_T$ --- as it would need to in order to contain all variables in the new $y_{S,T} = y_{S,\prev(T)} - \pE[y_{S,\prev(T)} \cdot z_T]z_T$.  However, we can fix this by induction; we apply the induction hypothesis to substage~$(T,\prev(T))$, \emph{taking $\Hnew$ as the ``given witness $H_{T,\emptyset}$''}.  This produces a witness --- call it $\Hnew'$ --- for substage $(T,\prev(T))$ that satisfies $\Hnew' \supseteq \Hnew$.  This witness $\Hnew'$ now additionally contains all variables in $y_{T,\prev(T)} = z_T$, and therefore it can now serve as the needed witness for substage~$(S,T)$.
\end{proof}

\section{Wrapping things up by setting parameters} \label{sec:parameters}
To prove our main result on weak refutation, Theorem~\ref{thm:main2}, we simply need to combine Theorems~\ref{thm:random-graph} and Theorem~\ref{thm:main}.  Together these give us a pseudoexpectation defined up to degree
\[
    D = \Omega(\gamma) \cdot \price \cdot \frac{n}{\density^{2/(\Tone-\price)}}, \qquad \text{where } \gamma = \tfrac{\beta^{O(1/\Tone)}}{\maxarity \cdot 2^{O(\maxarity/\Tone)}}.
\]
We need to decide how to best set parameters, which we do under the assumption that $\density \geq 10$.

We start with the special but interesting case when $\Tone$ is thought of very large; specifically, $\Tone \geq \Omega(\log \density)$.  This case arises, e.g., for high-arity $\maxarity$-SAT (where $\Tone = \maxarity-2$) with clause density~$2^{\Theta(\maxarity)}$.  In this case, by choosing
 $\price = \frac12 \Tone$ and  $\beta = e^{-O(\maxarity)}$ for our probability bound, we get $D = n/2^{O(\maxarity/\Tone)}$.  Note that if $\Tone = \Theta(\maxarity)$, as it is in the case of $\maxarity$-SAT, then our SOS degree lower bound is linear in~$n$ with absolutely \emph{no} dependence on $\maxarity = \maxarity(n)$ (all the way up to $\maxarity = \Omega(n)$)!

In the more general regime (e.g., when one thinks of $\maxarity$ as ``constant'' and $\density$ as asymptotically large), a good choice for $\price$ is $\frac{1}{\log \density}$,
 which entails
\[
    D = \Omega(\gamma) \cdot \frac{n}{\density^{2/\Tone} \log \density}.
\]
With this setting, Theorem~\ref{thm:random-graph} tells us that with high probability we get a pseudoexpectation satisfying Corollaries~\ref{cor:pE-sats-stuff},~\ref{cor:pE-sats-constraints}.  Thus we have established the following more precise version of Theorem~\ref{thm:main2}:
\begin{theorem}                                     \label{thm:main2a}
Let $P$ be a $k$-ary Boolean predicate and let $\cmplx(P)$ be the minimum integer $3 \leq \tau \leq k$ for which $P$ fails to support a $\tau$-wise uniform distribution.  Then if $\calI$ is a random instance of $\CSP(P^{\pm})$ with $m = \Delta n$ constraints ($\Delta \geq 10$), then except with probability at most~$\beta$, degree-$D$ SOS fails to (weakly) refute~$\calI$, where
\[
    D = \tfrac{\beta^{O(1/\cmplx(P))}}{k \cdot 2^{O(k/\cmplx(P))}} \cdot \frac{n}{\density^{2/(\cmplx(P) - 2)} \log \density}.
\]
The result also holds if $P$ is a predicate over an alphabet of size $q > 2$ (with an appropriate notion of ``literals''), with no change in parameters.
\end{theorem}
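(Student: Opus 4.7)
The plan is to combine the two main technical ingredients already established: Theorem~\ref{thm:random-graph}, which asserts that random factor graphs satisfy the Plausibility Assumption with high probability, and Theorem~\ref{thm:main} together with Corollary~\ref{cor:pE-sats-constraints}, which (under the Plausibility Assumption) supplies a positive semidefinite pseudoexpectation of degree $D = \tfrac{1}{3}\price\cdot\CONSMALL$ that satisfies every constraint identity $s_f = 1$. The first step is to cast a random $\CSP(P^\pm)$ instance into the abstract framework of Section~\ref{sec:constraint-satisfaction}. Since $\cmplx(P) = \tau$, there is a $(\tau-1)$-wise uniform distribution $\mu$ on $\{\pm 1\}^k$ supported on satisfying assignments of $P$. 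For each literal pattern $\ell \in \{\pm\}^k$ appearing at a constraint-vertex $f$, we set $\mu_f$ to be $\mu$ precomposed with the sign-flips prescribed by $\ell$; this is still $(\tau-1)$-wise uniform and is supported on satisfying assignments of the constraint at $f$, so the abstract hypotheses apply.

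The second step is to invoke Theorem~\ref{thm:random-graph} with $\Tcxty = \tau$, $\Tone = \tau - 2$, failure probability $\beta$, and cost $\price = 1/\log\density$. This setting is legal because $\density \geq 10$ guarantees $\price < 1$ and hence $\price \leq 0.99\,\Tone$ (using $\Tone \geq 1$), and the technical condition $\maxarity \leq \price\cdot\CONSMALL$ holds for all $n$ sufficiently large relative to $\maxarity$, $\density$, and $\beta$. Except on the failure event of probability at most $\beta$, we then have that $G$ satisfies the Plausibility Assumption for
\[
    \CONSMALL = \gamma \cdot \frac{n}{\density^{2/(\Tone - \price)}}, \qquad \gamma = \tfrac{1}{\maxarity}\bigl(\beta^{1/\Tone}/2^{\maxarity/\Tone}\bigr)^{O(1)}.
\]
Feeding this into Theorem~\ref{thm:main} and Corollary~\ref{cor:pE-sats-constraints} yields a degree-$D$ pseudoexpectation $\pE$ that is a bona fide witness to the failure of degree-$D$ SOS to refute $\calI$.

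The final step is the parameter bookkeeping. The only subtlety is that the exponent in $\CONSMALL$ involves $\Tone - \price$ rather than $\Tone$; we absorb the difference by writing
\[
    \density^{2/(\Tone-\price)} = \density^{2/\Tone}\cdot\density^{2\price/(\Tone(\Tone-\price))},
\]
and noting that the choice $\price = 1/\log\density$ makes $\density^{\price}$ a universal constant, so the second factor is at most $O(1)^{2/(\Tone(\Tone-\price))} = O(1)$ for all $\Tone \geq 1$. Hence
\[
    D = \tfrac13\,\price\cdot\CONSMALL = \Omega\bigl(\gamma/\log\density\bigr)\cdot\frac{n}{\density^{2/\Tone}}.
\]
Substituting $\Tone = \cmplx(P) - 2$ and using $\cmplx(P)/(\cmplx(P)-2) \leq 3$ (valid since $\cmplx(P) \geq 3$) to convert the $O(1/\Tone)$ and $O(\maxarity/\Tone)$ hidden inside $\gamma$ into $O(1/\cmplx(P))$ and $O(\maxarity/\cmplx(P))$ gives exactly the stated bound. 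For alphabets of size $q > 2$, nothing in Sections~\ref{sec:plausible}--\ref{sec:psd-pf} depends on $q$, so the same construction produces the same pseudoexpectation with the same degree bound; one only needs a reasonable notion of ``literal'' ensuring each $\mu_f$ remains $(\tau-1)$-wise uniform. The main obstacle is thus already discharged in the earlier sections; what remains is the routine, if slightly fiddly, choice of $\price$ and the conversion between $\Tone$ and $\cmplx(P)$ inside the hidden constants.
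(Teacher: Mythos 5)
Your proposal is correct and follows essentially the same route as the paper: combine Theorem~\ref{thm:random-graph} and Theorem~\ref{thm:main} via Corollary~\ref{cor:pE-sats-constraints}, set $\price = 1/\log\density$, and absorb the $\Tone-\price$ vs.~$\Tone$ discrepancy in the exponent and the $\Tone$ vs.~$\cmplx(P)$ discrepancy in the hidden constants; your algebra spelling out why $\density^{2\price/(\Tone(\Tone-\price))} = O(1)$ is a nice detail the paper glosses over. One small slip: you write that ``$\price < 1$ and hence $\price \leq 0.99\,\Tone$ (using $\Tone \geq 1$)'' --- this implication is false when $\Tone = 1$ and $\price \in (0.99, 1)$; the correct justification is simply that $\density \geq 10$ gives $\price = 1/\log\density \leq 1/\log 10$, which is already well below $0.99$.
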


Proving our main result on $\delta$-refutation, Theorem~\ref{thm:main-intro}, requires just a little work.
We now imagine that our instance comes from a random $\CSP(P^{\pm})$  as in Theorem~\ref{thm:main-intro}.  As discussed at the end of Section~\ref{sec:constraint-satisfaction}, given $t$ and taking $\tau = t+1$, we have some $t$-wise uniform distribution $\mu$ on $\{\pm 1\}^k$ which is $\delta$-close to being supported on~$P$, where $\delta = \delta_P(t)$.  We assume that all of the constraint distributions $\mu_f$ are now simply equal to~$\mu$, up to the appropriate negation pattern.  Thus a draw from $\mu_f$ satisfies the constraint at~$f$ except with probability at most~$\delta$.

With the parameter settings chosen earlier,  Theorem~\ref{thm:random-graph} tells us moreover that
\begin{equation} \label{eqn:sublin}
    \#\{ \text{nonempty \subfactors~$H$ with $|\cons(H)| \leq 2 \cdot \CONSMALL$} : I(H) \leq \Tcxty - 1\} \leq \density n^{\frac{1+1/\log \density}{2}} = 2^{\frac{\log n}{2 \log \density}} \cdot \frac{m}{\sqrt{n}}.
\end{equation}
Observe that this bound is \emph{always} $o(m)$, and in the very typical case that $\density \geq n^{\Omega(1)}$, the bound is~$O(\frac{m}{\sqrt{n}})$.  Let us see what this bound means for the pseudodistribution.

Supposing~\eqref{eqn:sublin} holds, let $f$ be any constraint-vertex in~$G$, let $S = N(f)$, and let $H_f$ be the (\smallish) \subfactor induced by the edges between~$f$ and~$S$.  Certainly $\cl(S) \supseteq H_f$, but we may ask whether $\cl(S)$ is strictly bigger than~$H_f$.  Suppose this is the case; i.e., there is some \smallish $S$-closed $H \not \subseteq H_f$.  Then $H' = H_f \cup H$ is a \subfactor satisfying $|\cons(H')| \leq 2 \cdot \CONSMALL$.  Furthermore, the number of leaf variables in~$H'$ must be at least~$1$ (else $H'$ is $\emptyset$-closed and hence empty by Fact~\ref{fact:cl-emptyset}) and strictly less than~$\maxarity$ (else $H \setminus H_f$ will be $\emptyset$-closed and hence empty).  Finally, we claim $R(H') \leq \Tcxty-1$. This is because $R(H_f) = \Tcxty$, the addition of~$H$ cannot add any new credits (since all its leaf variables are already in~$H_f$), and in fact the addition of $H$ must cause a drop of at least one in revenue since~$H$ must have at least one edge not in~$H_f$.  (This argument is similar to Lemma~\ref{lem:rev-loss}.)  We conclude that whenever $\cl(N(f)) \neq H_f$, there must exist a nonempty \subfactor~$H'$ with the following properties: (i)~$|\cons(H')| \leq 2 \cdot \CONSMALL$; (ii)~$I(H') \leq R(H') \leq \Tcxty-1$; (iii)~$H'$ has at least one leaf variable; (iv)~all leaves of $H'$ are adjacent to~$f$.

But~\eqref{eqn:sublin} bounds the number of \subfactors with the first two properties above, and every \subfactor with the latter two properties uniquely determines~$f$.  Thus we conclude:
\[
    \#\{\text{constraints~$f$} : \cl(N(f)) \neq H_f\} \leq 2^{\frac{\log n}{2 \log \density}} \cdot \frac{m}{\sqrt{n}}.
\]
Finally, when $\cl(N(f)) = H_f$, observe that the planted distribution $\eta_{\cl(N(f))}$ is just~$\mu_f$, and hence
\[
    \pE\Bigl[1[\text{$\x$ satisfies~$f$}]\Bigr] = \Pr_{\bx \sim \mu_f}\Bigl[1[\text{$\bx$ satisfies $f$}]\Bigr] \geq 1-\delta.
\]
Combining the last two deductions yields
\[
\pE\Bigl[\text{fraction of constraints satisfied}\Bigr] \geq 1 - \delta - 2^{\frac{\log n}{2 \log \density}} \cdot \frac{1}{\sqrt{n}}.
\]
In summary, we have proven the following more precise version of Theorem~\ref{thm:main-intro}:

\begin{theorem} \label{thm:main-intro2}
Let $P$ be a $k$-ary Boolean predicate and let $1 < t \leq k$.  Let $\calI$ be a random instance of $\CSP(P^{\pm})$ with $m = \Delta n$ constraints.  Then except with probability at most~$\beta$, degree-$D$ SOS fails to
$(\delta_P(t) + \epsilon)$-refute~$\calI$, where
\[
	\epsilon = 2^{\frac{\log n}{2 \log \density}} \cdot \frac{1}{\sqrt{n}}, \qquad
    D = \tfrac{\beta^{O(1/t)}}{k \cdot 2^{O(k/t)}} \cdot \frac{n}{\density^{2/(t-1)} \log \density}.
\]
We remark that $\eps = o(1)$ always, and $\eps = O(\frac{1}{\sqrt{n}})$ whenever $\Delta = n^{\Omega(1)}$.  Finally, the result also holds if $P$ is a predicate over an alphabet of size $q > 2$ (with an appropriate notion of ``literals''), with no change in parameters.
\end{theorem}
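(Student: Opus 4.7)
The plan is to assemble the theorem from the machinery already developed: Theorem~\ref{thm:random-graph} (which supplies the Plausibility Assumption and a sublinear bound on low-income subfactors with high probability), Theorem~\ref{thm:main} (PSD-ness of the pseudoexpectation under the Plausibility Assumption), and Corollary~\ref{cor:pE-sats-stuff} (which encodes that $\pE$ behaves as a valid expectation over Boolean/$\Omega$-valued variables). The role of the new argument is just to translate ``$t$-wise uniform closeness to $\supp(P)$'' into a lower bound on the pseudoexpected fraction of satisfied constraints, using the structural features of $\cl(\cdot)$ established in Section~\ref{sec:PE}.

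First I would make the parameter choices. Set $\tau = t+1$ and let $\mu$ be a $t$-wise uniform distribution on $\{\pm 1\}^k$ achieving $\dtv{\mu}{\sigma} = \delta_P(t)$ for some $\sigma$ supported on $P^{-1}(1)$. For each constraint-vertex $f$ with literal pattern $\ell$, set $\mu_f$ to be $\mu$ negated according to $\ell$; this is $(\tau-1)$-wise uniform and is $\delta_P(t)$-close to a distribution on satisfying assignments of the constraint at $f$. Then choose $\price = 1/\log \Delta$ and take $\CONSMALL$ as the largest integer allowed by Theorem~\ref{thm:random-graph} for failure probability $\beta$, which yields $\CONSMALL = \Omega(\gamma) \cdot n/(\Delta^{2/(\tau-2-\price)})$ with $\gamma = \beta^{O(1/\tau)}/(k \cdot 2^{O(k/\tau)})$. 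Set $D = \frac{1}{3}\price \cdot \CONSMALL$ and check that this matches the claimed expression for $D$ after absorbing constants.

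With these choices, invoke Theorem~\ref{thm:random-graph} to conclude that, except with probability $\beta$, both the Plausibility Assumption holds and the sublinear count~\eqref{eqn:sublin} holds. Theorem~\ref{thm:main} then gives a degree-$D$ pseudoexpectation satisfying the PSD condition, and Corollary~\ref{cor:pE-sats-stuff} certifies $\pE[\sum_c \indet{c}{i}-1]=0$ and $\pE[\indet{c}{i}^2]=\pE[\indet{c}{i}]$ so that $\pE$ is a bona fide SOS feasible point over $\{\pm 1\}^n$. The nontrivial step is to estimate $\pE[\text{fraction of satisfied constraints}]$ from below.

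The key structural observation I would make precise is that for any constraint-vertex $f$, if $\cl(N(f))$ strictly contains the star $H_f$ induced by $f$, then there is a nonempty \subfactor $H'$ (namely $H_f \cup H$ for some $H \not\subseteq H_f$ witnessing the strict containment) with $|\cons(H')| \leq 2\CONSMALL$, at least one but fewer than $K$ leaves, all leaves adjacent to $f$, and revenue at most $\tau-1$ (because $H_f$ alone has revenue $\tau$ and the union forces at least one revenue-decreasing edge in the sense of Lemma~\ref{lem:rev-loss}). Such an $H'$ determines $f$ uniquely from its leaves, so the count~\eqref{eqn:sublin} upper-bounds the number of ``bad'' $f$'s by $2^{\log n/(2\log\Delta)} \cdot m/\sqrt{n}$. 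For the remaining ``good'' constraints $\cl(N(f)) = H_f$ and $\eta_{H_f}$ restricted to $N(f)$ is exactly $\mu_f$, so $\pE[\mathbf{1}[f \text{ satisfied}]] = \Pr_{\bx \sim \mu_f}[\bx \in \supp(\text{constraint at }f)] \geq 1-\delta_P(t)$. Averaging over $m$ constraints yields $\pE[\text{fraction satisfied}] \geq 1 - \delta_P(t) - \epsilon$ with the claimed $\epsilon$, which is the desired failure of $(\delta_P(t)+\epsilon)$-refutation. The potentially obstacle-like step is the revenue accounting for $H'$, but this is a straightforward adaptation of Lemma~\ref{lem:rev-loss}, so no genuinely new difficulty arises; everything else is bookkeeping with constants from Theorem~\ref{thm:random-graph}. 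The extension to $q > 2$ needs no change because all previous sections were stated over a general alphabet $\Omega$.
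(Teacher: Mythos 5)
Your proposal is correct and follows essentially the same route as the paper's own argument: choose $\tau=t+1$ and $\price=1/\log\Delta$, invoke Theorem~\ref{thm:random-graph} to get both the Plausibility Assumption and the bound \eqref{eqn:sublin}, apply Theorem~\ref{thm:main} for PSD-ness, and then bound the number of ``bad'' constraints $f$ with $\cl(N(f))\neq H_f$ via the revenue-drop argument (in the spirit of Lemma~\ref{lem:rev-loss}) that produces a low-income \subfactor whose leaves pin down~$f$ uniquely. The paper's write-up only spells out a bit more explicitly the injectivity of the map $f\mapsto H'$ and the number-of-leaves bookkeeping; otherwise the two proofs are the same.
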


\begin{remark}
We should mention that in our $\delta$-refutation result Theorem~\ref{thm:main-intro2}, our pseudoexpectation does \emph{not} satisfy ``solution {value $= 1-\delta_0$}'' as a constraint for any $\delta_0 \leq \delta$; it merely has $\pE[\text{solution value}] \geq 1-\delta$.  Achieving the (stronger) former condition is a direction for future work.  By contrast, for our weak refutation result Theorem~\ref{thm:main2}, the pseudoexpectation \emph{does} satisfy all the constraints and hence also satisfies $\pE[\text{solution value}] = 1$ as a constraint.
\end{remark}

\subsection*{Acknowledgment}
 We would like to thank the Institute for Mathematical Sciences, National University of Singapore in 2016; a visit there was where some of the initial research for this work began.

\bibliographystyle{alpha}
\bibliography{bib/ads,bib/custom,bib/dblp,bib/diss,bib/mr,bib/scholar,bib/witmer}

\appendix

\section{Proof that random  graphs satisfy the Plausibility Assumption} \label{app:expansion}

Here we prove Theorem~\ref{thm:random-graph}, which we restate for convenience:
\paragraph{Theorem~\ref{thm:random-graph} restated.}\emph{Let $\Tone = \Tcxty - 2 \geq 1$.   Fix $0 < \price \leq .99 \Tone$,  $0 < \beta < \frac12$. Then except with probability at most $\beta$, when $\bG$~is a random instance with $m = \density n$ constraints, the Plausibility Assumption holds provided
    \begin{equation}    \label{eqn:small-value}
        \CONSMALL \leq \gamma \cdot \frac{n}{\density^{2/(\Tone-\price)}},
    \end{equation}
    where $\gamma = \frac{1}{\maxarity} \left(\frac{\beta^{1/\Tone}}{2^{\maxarity/\Tone}}\right)^{O(1)}$.   Moreover, assuming $\price < 1$, except with probability at most~$\beta$ we have
    \begin{equation}    \label{eqn:dont-fail-me}
        \#\{\textnormal{nonempty \subfactors~$H$ with $\cons(H) \leq 2 \cdot \CONSMALL$} : I(H) \leq \Tcxty - 1\} \leq \density n^{\frac{1+\price}{2}}.
    \end{equation}
}

\begin{proof}
    A remark before we begin: the expression in~\eqref{eqn:small-value} was chosen precisely so that
    \begin{equation} \label{eqn:weirdness}
        c \leq 2 \cdot \CONSMALL \implies 20^\maxarity \cdot \density \cdot \left(\tfrac{\maxarity c}{n}\right)^{\frac{\Tone-\price}{2}} \leq \beta/50^\maxarity,
    \end{equation}
    provided the $O(1)$ in the definition of~$\gamma$ is a sufficiently large universal constant.

    The proof is a standard argument of the kind used to show that a random bipartite graph has good expansion.   Fixing $I_0 \in \{0, \Tcxty - 1\}$, $1 \leq c \leq 2 \cdot \CONSMALL$,  and~$1 \leq v \leq \maxarity c$, let us upper-bound
    \begin{equation}    \label{eqn:cv-bound1}
        \E[\#\{ \text{\subfactors with $c$ constraints, $v$ vertices, and income at most~$I_0$}\}].
    \end{equation}
    There are $\binom{m}{c}$ choices for the constraints and $\binom{n}{v}$ choices for the variables.  Then by using Lemma~\ref{lem:alternate-accounting},
    \begin{equation}    \label{eqn:cv-bound2}
        \eqref{eqn:cv-bound1} \leq \binom{m}{c}\binom{n}{v} \Pr[\text{fixed set of~$c$ constraints and~$v$ variables gets at least~$A$ edges}],
    \end{equation}
    where
    $
        A \coloneqq \frac{\Tcxty - \price}{2} \cdot c + v - \frac{I_0}{2} 
    $.
    In~\eqref{eqn:cv-bound2}, we may imagine that a constraint's variables are chosen uniformly and independently (i.e., \emph{without} conditioning on them being distinct), as this only increases the probability in question.  Now any fixed set of~$c$ constraints has at most~$\maxarity c$ edges coming out it, so the probability that some integer $a > A$ of them will go into a fixed set of~$v$ variables is at most
    \[
        \binom{\maxarity c}{a} \cdot \left(\frac{v}{n}\right)^a \leq 2^{\maxarity c} \cdot \left(\frac{v}{n}\right)^a  \leq 2^{\maxarity c}  \cdot \left(\frac{v}{n}\right)^A.
    \]
    Thus
    \begin{align}
        \eqref{eqn:cv-bound2} \leq 2^{ \maxarity c} \binom{m}{c}\binom{n}{v}\left(\frac{v}{n}\right)^A
        \leq 2^{\maxarity c} \left(\frac{em}{c}\right)^c\left(\frac{en}{v}\right)^v\left(\frac{v}{n}\right)^A
        &=  \left(e2^\maxarity e^{v/c} (v/c)\right)^c \cdot \density^c \cdot \left(\frac{v}{n}\right)^{\frac{\Tone-\price}{2}\cdot c - I_0/2} \nonumber\\
        &\leq \left(20^\maxarity\right)^c \cdot \density^c \cdot  \left(\tfrac{\maxarity c}{n}\right)^{\frac{\Tone-\price}{2}\cdot c - I_0/2}, \label{eqn:the-bound}
    \end{align}
    where the equality used the definition of~$A$ and the subsequent inequality used $v \leq \maxarity c$.

    We now split into two cases, depending on whether $I_0$ is~$0$ or $\Tcxty-1$.  When $I_0 = 0$ we use
    \[
        \eqref{eqn:cv-bound1} \leq \eqref{eqn:the-bound} = \Bigl(20^\maxarity \cdot \density \cdot \left(\tfrac{\maxarity c}{n}\right)^{\frac{\Tone - \price}{2}}\Bigr)^c \leq \left(\tfrac{\beta}{50^\maxarity}\right)^c,
    \]
    using~\eqref{eqn:weirdness}. Summing over the at most $\maxarity c$ possibilities for~$v$ gives
    \[
        \E[\#\{ \text{\subfactors with $c$ constraints and income at most~$0$}\}] \leq \maxarity c \left(\tfrac{\beta}{50^\maxarity}\right)^c.
    \]
    Now summing this expression over all $1 \leq c \leq 2 \cdot \CONSMALL$ we get
    \[
         \E[\#\{\textnormal{implausible \subfactors~$H$} : |\cons(H)| \leq 2 \cdot \CONSMALL\}]
         \leq \sum_{c=1}^\infty \maxarity c \left(\tfrac{\beta}{50^\maxarity}\right)^c \leq \beta.
    \]
    Thus Markov's inequality  implies that the Plausibility Assumption holds except with probability at most~$\beta$.

    The analysis for $I_0 = \Tcxty-1$ is similar.  In this case, we use
    \[
        \eqref{eqn:cv-bound1} \leq \eqref{eqn:the-bound} = \Bigl(20^\maxarity \cdot \density \cdot \left(\tfrac{\maxarity c}{n}\right)^{\frac{\Tone - \price}{2}}\Bigr)^{c-1} \cdot 20^\maxarity \cdot \density \cdot \left(\tfrac{n}{\maxarity c}\right)^{\frac{1+\price}{2}}  \leq \left(\tfrac{\beta}{50^\maxarity}\right)^{c-1} \cdot 20^\maxarity \cdot \density n^{\frac{1+\price}{2}},
    \]
    again using~\eqref{eqn:weirdness}.  We again sum this over the at most $\maxarity c$ possibilities for~$v$.  We also only need to sum this over all $c \geq 2$, since if $\cons(H) = 1$ then $I(H) = \Tcxty - \price > \Tcxty - 1$.  We then obtain
    \begin{multline*}
        \E[\#\{\textnormal{nonempty \smallish \subfactors~$H$ with $|\cons(H)| \leq 2 \cdot \CONSMALL$} : I(H) \leq \Tcxty - 1\}]  \\ \leq \sum_{c=2}^\infty \maxarity c \left(\tfrac{\beta}{50^\maxarity}\right)^{c-1} 20^\maxarity \cdot \density n^{\frac{1+\price}{2}} \leq \beta \cdot n^{\frac{1+\price}{2}},
    \end{multline*}
    and again Markov's inequality establishes that~\eqref{eqn:dont-fail-me} holds except with probability at most~$\beta$.
\end{proof}

\end{document}